\newtheorem{theorem}{Theorem}
\newtheorem{definition}{Definition}
\newtheorem{lemma}{Lemma}
\newtheorem{proposition}{Proposition}
\newtheorem{remark}{Remark}
\newtheorem{example}{Example}
\newtheorem{corollary}{Corollary}
\newenvironment{proof}{\begin{IEEEproof}}{\end{IEEEproof}}
\newcommand{\real}{{\mbox{\bf R}}}
\newcommand{\preal}{{\mbox{\bf R}}_{\ge 0}}    
\newcommand{\spreal}{{\mbox{\bf R}}_{>0}}    
\newcommand{\unit}{{\mbox{\bf e}}}    
\newcommand{\zerobf}{\mathbf{0}}
\newcommand{\onebf}{\mathbf{1}}
\newcommand{\until}[1]{[#1]}
\newcommand{\trans}[1]{#1^{\mathrm{T}}}
\newcommand{\mc}{\mathcal}
\newcommand{\Drelone}{\tilde{\mc D}}
\newcommand{\Dreltwo}{\hat{\mc D}}
\newcommand{\starobj}{g} 
\newcommand{\starobjin}{g^{\mathrm{in}}}
\newcommand{\starobjout}{g^{\mathrm{out}}}
\newcommand{\roofun}{\chi}
\newcommand{\comset}{\Theta}
\newcommand{\sdsign}{s}
\newcommand{\pnetval}{R}  
\newcommand{\clink}{o} 
\newcommand{\cmax}{\mathrm{end}} 
\newcommand{\range}{{\mathcal R}}
\DeclareMathOperator{\diag}{diag}
\DeclareMathOperator{\conv}{conv} 
\DeclareMathOperator{\aff}{aff}
\DeclareMathOperator{\cube}{cube}
\DeclareMathOperator{\proj}{proj}    		
\DeclareMathOperator{\swp}{sweep}    	
\newcommand{\argmax}{\mathop{\rm argmax}}
\newcommand{\sign}{\mathop{\bf sign}}
\newcommand{\cl}[1]{\mathop{\bf cl}(#1)}
\DeclareMathOperator{\starop}{\Omega}
\newcommand{\setdef}[2]{\left\{#1 \; | \; #2\right\}}
\newcommand{\map}[3]{#1: #2 \rightarrow #3}
\newcommand{\longversion}[1]{}
\tikzset{ dot/.style ={circle, draw, inner sep=2pt},  
	     supply dot/.style ={circle, draw, inner sep=2pt, fill = blue!40}, 
	     demand dot/.style={circle, draw, inner sep=2pt, fill = green!60},
              main node/.style={circle,draw,font=\sffamily\bfseries\scriptsize, thick, minimum width=16pt}, 
              supply node/.style={circle,draw,font=\sffamily\bfseries\scriptsize, fill=blue!60, thick}, 
              demand node/.style={circle,draw,font=\sffamily\bfseries\scriptsize, fill=green!60, thick},
              edge label/.style={font=\sffamily\small},
              main edge/.style={-Stealth, thick, auto}, 
              infeasible node/.style={circle,minimum size=0.6cm, inner sep=0pt, draw, fill= red!50}, 
              feasible node/.style={circle,minimum size=0.6cm, inner sep=0pt, draw, fill= blue!50},
              face/.style={circle,draw,inner sep=1pt, fill=black}}
\title{Computing Optimal Control of Cascading Failure in DC Networks}
\author{Qin Ba\thanks{The authors are with the Sonny Astani Department of Civil and Environmental Engineering at the University of Southern California, Los Angeles, CA. \texttt{\{qba,ksavla\}@usc.edu}. They were supported in part by NSF CAREER ECCS Project No. 1454729.} \quad Ketan Savla}
\date{March 20, 2018}							
\begin{document}
\maketitle

\begin{abstract}
We consider discrete-time dynamics, for cascading failure in DC networks, whose map is composition of failure rule with control actions. Supply-demand at the nodes is monotonically non-increasing under admissible control. Under the failure rule, a link is removed permanently if its flow exceeds capacity constraints. We consider finite horizon optimal control to steer the network from an arbitrary initial state, defined in terms of active link set and supply-demand at the nodes, to a feasible state, i.e., a state which is invariant under the failure rule. 
There is no running cost and the reward associated with a feasible terminal state is the associated cumulative supply-demand. We propose two approaches for computing optimal control\longversion{, and provide time complexity analysis for these approaches}. The first approach, geared towards tree reducible networks, decomposes the global problem into a system of coupled local problems, which can be solved to optimality in two iterations. 
\longversion{In the first iteration, optimal solutions to the local problems are computed, from leaf nodes to the root node, in terms of the coupling variables. In the second iteration, in the reverse order, the local optimal solutions are instantiated with specific values of the coupling variables.} 
When restricted to the class of one-shot control actions, the optimal solutions to the local problems possess a piecewise affine property, which facilitates analytical solution. 
The second approach computes optimal control by searching over the reachable set, which is shown to admit an equivalent finite representation by aggregation of control actions leading to the same reachable active link set. An algorithmic procedure to construct this representation is provided by leveraging and extending tools for arrangement of hyperplanes and polytopes. Illustrative simulations, including showing the effectiveness of a projection-based approximation algorithm, are also presented.     


%
\end{abstract}

\section{Introduction}
\label{sec:introduction}

Cascading failure in physical networks can be modeled via discrete-time dynamics, where the time epochs correspond to component failures. The map of the dynamical system is described in terms of composition of a failure rule with a control policy. A common failure rule is permanent removal of a link from the network if its physical flow exceeds capacity. Analysis of such dynamics under a given control policy has attracted considerable attention, primarily through simulations, e.g., see \cite{Cohen.Erez.ea:00,Watts:01,Motter:2002fk,Crucitti.Latora.ea:04,Barrat.Barthelemy.ea:08}. However, control \emph{design} is relatively less well understood, e.g., see \cite{Bienstock:11} and our previous work in \cite{Savla.Como.ea.TNSse14} for few such examples. In this paper, we consider such an optimal control problem for power networks. 

The network state is described in terms of active links, i.e., links which have not been removed so far, and the external power injection/withdrawal, also referred to as \emph{supply-demand}, at the nodes. Under the failure rule, at a given network state, links are permanently removed if their power flow exceeds thermal capacity constraint. The control actions correspond to changing supply-demand at the nodes. A network state is called \emph{feasible} if it is invariant under the failure rule, and is called \emph{infeasible} otherwise. We are interested in designing control actions to steer the network from an arbitrary initial state to a terminal feasible state within a given finite time horizon. In this paper, admissible control actions are those under which the magnitude of supply-demand at the nodes is non-increasing, and we consider the setting in which there is no running cost and the cost associated with a terminal feasible state is equal to the negative of cumulative supply-demand associated with that state. We use DC approximation for power flow for tractability, in line with standard practice when multiple power flow computations are involved, e.g., see \cite{Bienstock:11, bienstock2010nk, bernstein2014power}. 


The optimal control problem studied in this paper was formulated in \cite{Bienstock:11,Bienstock:16}, where the focus was on low-complexity control policies. To the best of our knowledge, a formal framework for computing optimal control beyond these low-complexity policies is lacking in the literature. The objective of this paper is to develop rigorous approaches to address this shortcoming. This is however a challenging task. The hybrid state space prevents straightforward application of standard optimal control and dynamic programming tools. Furthermore, it is not possible to find a natural ordering in the state space due to non-monotonicity of power flow. Non-monotonicity here refers to counterintuitive behaviors, reminiscent of Braess's paradox \cite{braess1968paradoxon}, under which removal of links can make an infeasible power network feasible and arbitrary load shedding can make a feasible network infeasible, e.g., see \cite{Ba.Savla.CDC16, bienstock2010nk, Lai.Low:Allerton13, guo2017monotonicity}.

We distinguish our work with network interdiction problems, e.g., see \cite{salmeron2004analysis, pinar2010optimization, bienstock2010nk}. The latter is a static problem to find the smallest set of links whose removal causes a severe blackout. The solutions are based on well-known mixed-integer programming techniques, such as Benders' decomposition and bilevel programming. On the other hand, we do not allow control of links, and consider a multistage framework induced by the cascading dynamics. As already mentioned, standard control synthesis methods do not apply straightforwardly to our setting.

We provide two approaches for computing optimal control.
The first approach is geared towards tree reducible networks, i.e., networks which can be reduced to a tree by recursively replacing subnetworks between supply-demand nodes with links. 
For such networks, we decompose the (global) optimal control problem into coupled local problems associated with nodes in the tree corresponding to the reduced network, which can be solved to optimality in two iterations. In the first iteration, from leaves to the root, every node solves the local problem as a function of the local coupling variable, which corresponds to outflow from that node. In the second iteration, in the reverse order from the root to the leaves, the local optimal solutions are instantiated with specific values of the coupling variables. When restricted to control actions which shed load only at $t=0$, the local problems, in spite of non-convexity, possess a piecewise affine property, which facilitates analytical solution. 

The second approach computes optimal control by \emph{searching} for an optimal feasible terminal state among the states reachable from the initial condition. This search is made possible by an \emph{equivalent} partition of the one-step reachable set from a network state into a \emph{finite} number of aggregated states, with each corresponding to the same reachable active link set. 
These partitions are determined by admissibility constraints for control actions (to maintain monotonicity of supply-demand at the nodes), and the link failure rules. Linearity of these constraints allows us to leverage and extend tools from the domain of \emph{arrangement of hyperplanes} e.g., see \cite{edelsbrunner1987algorithms} \cite[Chapter 24]{toth2004handbook}, and \emph{polytopes}, e.g., see \cite{ziegler2012lectures} \cite{grunbaum1967convex}, to construct these partitions.

In summary, the paper makes several contributions towards computing optimal control of cascading failure in power networks. First, we cast the problem as multistage optimization involving continuous and discrete variables. While one can use sampling approaches for sub-optimal solution, we provide an exact finite representation through an equivalent finite partition of the one-time reachable set. 
Second, we provide an algorithmic procedure to construct these partitions by making connections to the problem of arrangement of hyperplanes.  
This well-studied problem in computational geometry is finding increasing application in engineering domains such as robotics~\cite{halperin1995arrangements}, fiber-optic networks~\cite{agarwal2013resilience} and even power networks~\cite{bernstein2014power}. Constructing partitions in our case requires a \emph{sweep} operation on polytopes in arbitrary dimensions. A formal approach for this operation, as we provide, is not present in the literature to the best of our knowledge. 
Third, we provide a decomposition approach to compute optimal control for tree reducible networks in two iterations. The analytical solution when the control actions are restricted to shedding load only at $t=0$ relies on establishing invariance of piece-wise linear property of local optimization problems (when viewed as an operator), which could be of independent interest. In our simulation studies, we also consider optimal solution in certain subspaces, a special case of which is the the scaling-based, or proportional, control in \cite[Section 6.1.1]{Bienstock:16}.


We conclude this section by defining a few notations. $\real$, $\preal$ and $\spreal$ respectively denote the set of real, non-negative real, and positive real numbers. $\zerobf$ and $\onebf$ denote vectors of all zeros and ones  of proper sizes, respectively. For an integer \( n \), \( [n]:= \{1, 2, \ldots, n\} \). \( |  S | \) denotes the cardinality of \( S \). For a vector \( x \in \real^{d}\), \( \diag{(x)} \in \real^{d\times d} \) denotes the diagonal matrix whose (diagonal) entries are those of $x$. For two vectors $ x $ and $ y $ with the same size, $ x \le y $ means $ x_{i} \le y_{i} $ for all \( i \). The same convention is adopted for $ \ge $, $ < $ and $ > $. Given sets $S_1 \subset \real^n$ and $S_2 \subset \real^n$, $S_1+S_2$ denotes the Minkowski sum of $S_1$ and $S_2$. Several technical proofs are postponed until the Appendix. 

\section{Problem Setup}
\label{sec:setup}
We start by recalling the DC power flow approximation.
\subsection{DC Power Flow Approximation}
In this model, it is assumed that the transmission lines are lossless and the voltage magnitudes are constant at 1.0 unit. The graph topology of the power network is described by an \emph{undirected multigraph} $ \mathcal{G} = (\mathcal{V}, \mathcal{E})$, that is, multiple parallel links can connect the same two nodes in \( \mathcal{G} \). For convenience, every link in $\mathcal{E}$ is arbitrarily assigned a direction -- the results in the paper do not depend on the direction convention. Let $ \mathcal{V}_{+} \subset \mathcal{V}$ and $ \mathcal{V}_{-} \subset \mathcal{V}$ be the set of supply and demand nodes respectively\footnote{In all the figures, except Figure~\ref{fig:def-tree-reducible-net}, we color supply nodes with blue, and demand nodes with green. For example, see Figure~\ref{fig:triangular-net}.}.  A node is called a \emph{transmission node} if it is neither a supply nor a demand node. Let \( \mathcal{V}_{l} := \mathcal{V}_{+} \cup \mathcal{V}_{-} \) denote the set of non-transmission nodes. Since the network can loose connectivity under cascading dynamics, we let \( (\mathcal{V}, \mathcal{E}) = (\mathcal{V}^{(1)}, \mathcal{E}^{(1)} ) \cup \ldots \cup  (\mathcal{V}^{(r)}, \mathcal{E}^{(r)} ) \) denote the partition of the original graph (i.e., the graph at $t=0$) into its \( r \) connected components. The partition will evolve with the dynamics. 

The graph \( \mathcal{G} \) is associated with a node-link incidence matrix \( A \in \real^{ \mathcal{V}\times \mathcal{E}} \), where the \( i\)th column \( A_{i} \in \real^{ \mathcal{V}} \) corresponds to link $ i\in \mathcal{E} $ and has $ +1 $ and $ -1 $ respectively on the tail and head node of link $ i $, and $0$ on other nodes. 
The links are associated with a flow vector $f \in \real^{\mathcal{E}}$; The signs of elements of $f$ are to be interpreted as being consistent with the directional convention chosen for links in $\mathcal{E}$. We also associate $\mc G$ with a diagonal matrix $W \in \real^{\mathcal{E} \times \mathcal{E}}$ whose diagonal elements give the negative of susceptances, or weights, of the corresponding links. For brevity, $w_i$ shall denote the $i$-th diagonal element of $W$. The nodes are associated with phase angles $\phi \in \real^{\mc V}$, and the supply and demand nodes are associated with a supply-demand vector $p \in \real^{ \mathcal{V}}$; \( p_{i}>0 \) for \( i\in \mathcal{V}_{+} \) and \( p_{i} < 0 \) for \( i\in \mathcal{V}_{-} \). 

The quantities defined above are related by Kirchhoff's law and Ohm's law in DC approximation as follows:
\begin{equation}
 \label{eq:power-network-laws}
  A f =  p \quad f = W \trans{A} \phi
\end{equation}

In order for \eqref{eq:power-network-laws} to be feasible, the supply-demand vector $ p $ needs to be balanced over \( \mathcal{G}^{(i)} \) for all \( i\in [r] \), that is, 
\begin{equation}
\label{eq:load-balance}
p\in \mathcal{B}_{ \mathcal{E}} : = \setdef{u\in \real^{ \mathcal{V}}}{ \sum_{v\in \mathcal{V}^{(i)}} u_{v} = 0, \, i\in [r] }
\end{equation}

For a given network \( \mathcal{G} = ( \mathcal{V}, \mathcal{E}) \) with balanced supply and demand \( p \), there exists a unique flow \( f \) satisfying \eqref{eq:power-network-laws}, and it is given by \cite{Ba.Savla:TCNS16}: 
\begin{equation}
\label{eq:flow-susceptance-relationship}
f = W \trans{A} L^\dagger ( \mathcal{E})  p =:  f( \mathcal{E}, p)
\end{equation} 
where $ L( \mathcal{E}) :=A W \trans{A} \in \real^{ \mathcal{V} \times \mathcal{V}}$ is the weighted Laplacian matrix of \(  \mathcal{G} \) and $ L^\dagger( \mathcal{E}) $ is its pseudo-inverse. 
\eqref{eq:flow-susceptance-relationship} implies that, for a given \( \mathcal{E} \), \( f( \mathcal{E}, p) \) is linear in \( p \). 

\subsection{Cascading Failure Dynamics}
\label{sec:cascade-dym}
Let \( \mathcal{E}^{0} \) be the initial link set and let \( p^{0} \) be the initial supply-demand vector satisfying the balance condition in \eqref{eq:load-balance}. The corresponding link flow \( f \) is uniquely determined by \eqref{eq:power-network-laws} or \eqref{eq:flow-susceptance-relationship}. We associate with each link \( i\in \mathcal{E}^{0} \) a thermal capacity \( c_{i} > 0 \). If the magnitude of flow on a link \( i \in \mathcal{E}^{0}\) exceeds its thermal capacity, i.e., \( |f_{i}| >c_{i} \), then link \( i \) fails and is removed from the network irreversibly. This changes the topology of the network, causing flow redistribution, which might lead to more link failures, and so on. Such continuing link failures constitute the \emph{uncontrolled} cascading failure dynamics. Note that we consider a link failure rule which is deterministic and which depends solely on the instantaneous flow. This is to be contrasted with other deterministic outage rules based on moving average of successive flows, or stochastic line outage rules, e.g., see \cite{Bienstock:11, bernstein2014power}. 

Our objective in this paper is to stop cascading failure through appropriate control actions. While shedding all load at $t=0$ achieves this objective trivially, we desire to take control actions that are optimal in a certain sense. Consider the following description of controlled cascading failure dynamics in discrete-time. Each time epoch corresponds to failure of some links (see Remark~\ref{rem:time-epochs}). The node set remains the same. Let $ ( \mathcal{E}^{t}, p^{t}) $ be the state of the network at time $ t $, with $ \mathcal{E}^{t} \subset 2^{ \mathcal{E}^{0}} $ and $ p^{t} \in \real^{ \mathcal{V}} $ denoting the \emph{active link set} and supply-demand vector at time $t$, respectively. We consider load shedding as the control and, for convenience, employ control variable \( u\in \real^{ \mathcal{V}} \) to be supply-demand vector after load shedding. The controlled cascading failure dynamics, for $t=0, 1, \ldots$, and starting from the initial state \( (\mathcal{E}^{0}, p^{0}) \), is given by:
\begin{equation}
\label{eq:cascade-dynamics}
\left(\mathcal{E}^{t+1}, p^{t+1}\right) = \mathcal{F} \left(\mathcal{E}^{t}, p^{t}, u^{t} \right), \qquad u^{t} \in U(\mathcal{E}^{t},p^{t})
\end{equation}
where the component functions are \( \mathcal{F}_{\mathcal{E}}( \mathcal{E}, p, u) \equiv \mathcal{F}_{\mathcal{E}}(\mathcal{E},u) := \setdef{i\in \mathcal{E}}{ |f_{i}( \mathcal{E}, u)|  \le c_{i}}  \) and \( \mathcal{F}_p(\mathcal{E}, p, u) \equiv \mathcal{F}_p(u) := u \).
\( \mathcal{F}_{ \mathcal{E}} \) is the set of feasible links in \( \mathcal{E} \) under supply-demand vector \( u \); and the control input \( u^{t} \) at time \( t \) becomes the next state supply-demand vector \( p^{t+1} \). In order for \( \mathcal{F}_{ \mathcal{E}}( \mathcal{E}, u) \) to be well-defined, \( u \) must be balanced with respect to the active link set \( \mathcal{E} \). This is ensured by the following definition of state-dependent control space  \( U( \mathcal{E}, p) \):
 \begin{equation}
\label{def:control-space}
\begin{aligned}
U( \mathcal{E}, p) = \cube(p) \cap \mathcal{B}_{ \mathcal{E}}
\end{aligned}
\end{equation}
where \( \cube{p} := \setdef{u \in \real^{ \mathcal{V}}}{ 0\le \sign(p_{v}) u \le |p_{v}|\, \forall v\in \mathcal{V}}\), with \( \sign(x) \) being 1 for \( x\ge 0 \) and -1 for \( x<0 \), characterizes the load shedding requirement. 
\( U( \mathcal{E}, p) \) includes all \emph{admissible} load shedding controls at state \( (\mathcal{E}, p) \). 
In particular, if all the supply and demand nodes are disconnected from each other at state \( ( \mathcal{E}, p) \), then $\mc B_{\mc E}=\{\zerobf\}$, and in this case \( U( \mathcal{E}, p) = \{\zerobf\} \). 

\begin{remark}
\label{rem:time-epochs}
\leavevmode
\begin{enumerate}
\item \eqref{eq:cascade-dynamics} is of interest only until the time epoch when the link failures stop, e.g., when the network state becomes \emph{feasible} (defined in Section~\ref{sec:problem-formulation}). For the sake of completeness, one can define subsequent time epochs arbitrarily, e.g., at fixed intervals.
\item The number of connected components may increase under \eqref{eq:cascade-dynamics}. When this happens at state \( ( \mathcal{E}, p) \), it is possible that \( p\not \in \mathcal{B}_{ \mathcal{E}} \). However, \eqref{def:control-space} ensures that the control action $u \in U(\mc E,P)$, which is the controlled value of $p$, is balanced with respect to $\mc E$.  Therefore, the balance condition in \eqref{eq:load-balance} is operationally satisfied. 
\item By adopting the steady state DC model in \eqref{eq:cascade-dynamics}, we implicitly neglect the transient power flow dynamics. This is justified by the fact that the transient dynamics evolve at a considerably faster time scale in comparison to the initial slow-evolving cascade process observed in practice \cite{Bienstock:16}. Along the same reasoning, the cascading dynamics in \eqref{eq:cascade-dynamics} can be generalized to other systems whose dynamics evolve fast enough so that assuming the system reaching equilibrium between failure epochs is a reasonable approximation. The control action $u$ is to be then interpreted as adjusting system parameters to choose equilibrium, e.g., (DC) power flow in our setting. 
\end{enumerate}
\end{remark}

\subsection{Problem Formulation}
\label{sec:problem-formulation}
Let 
\begin{equation}
\label{eq:S-def}
  \mathcal{S} := \setdef{( \mathcal{E}, p)}{ p \in \mathcal{B}_{\mathcal{E}}, |f_{i}( \mathcal{E}, p)| \le c_{i},\forall\, i\in \mathcal{E}} 
 \end{equation}
  denote the set of \emph{feasible states}. Set \(\mathcal{S}\) is invariant under the uncontrolled cascading dynamics. Note that \( ( \mathcal{E}, \zerobf) \in \mathcal{S} \) for every \( \mathcal{E} \in 2^{ \mathcal{E}^{0}} \). Since \( \mathcal{E}^{t}, t \geq 0 \), is non-increasing sequence, and \( \mathcal{E}^{0} \) is finite, the dynamics converges to a feasible state within $2^{\mc E^0}$ time epochs. 

Our objective is to choose control actions to steer the network from an arbitrary given initial state $(\mc E^0,p^0)$ to a feasible state $(\mc E^N, p^N) \in \mc S$ within a given finite horizon $N$, while optimizing a certain performance criterion. The control horizon $N$ is typically much smaller than $2^{\mc E^0}$. 
Let a generic sequence of control actions over the control horizon be denoted by \( u := (u^{0}, \ldots, u^{N-1}) \). 
%
In this paper, we wish to solve the following optimal control problem:
\begin{equation}
\label{eq:control-formulation-new}
{\sup_{u \in \mc D(\mc E^0,p^0, N)}}  \trans{\sdsign} p^{N}
\end{equation}
where \( \sdsign \in \{1, 0, -1\}^{ \mathcal{V}} \) is a constant defined as: \( \sdsign_{v} := 1 \) for \( v \in \mathcal{V}_{+} \), \( \sdsign_{v} := -1 \) for \( v\in \mathcal{V}_{-} \), and \( \sdsign_{v} := 0 \) otherwise, and the set of feasible control actions is defined as:
\begin{equation}
\label{eq:feasible-set-def}
\begin{aligned}
\mc D(\mc E^0,p^0, N):= \{(u^{0}, \ldots, u^{N-1}) \, & |\, u^t \in U(\mc E^t, p^t) \text{ for } t  = 0, \ldots, N-1; \\ 
&(\mc E^{N-1}, u^{N-1}) \in \mc S; (\mc E^t, p^t)_{t \in [N]} \text{ satisfies } \eqref{eq:cascade-dynamics} \}
\end{aligned}
\end{equation}

For brevity, we shall not show the dependence of $\mc D$ on $\mc E^0$, $p^0$ and $N$ when clear from the context.
\begin{remark}
\label{rem:additional-check}
\begin{enumerate}
\item An arbitrary sequence of \emph{admissible} control actions (cf. \eqref{def:control-space}) is not necessarily \emph{feasible}. \eqref{eq:feasible-set-def} implies that, when checking feasibility of a given $u$, in addition to checking admissibility, i.e., $u^{t} \in U(\mc E^{t},p^{t})$ for $t =0, \ldots, N-1$, one also has to check that the failure stops at \( t= N-1 \). 
\item In \eqref{eq:control-formulation-new}, we use supremum rather than maximum because $\mc D$ is not closed in general, as illustrated in Example \ref{eg:policy-not-closed} below. This matter is addressed in Section~\ref{sec:retrieve-optimal-control} and it could be ignored before that. The computational complexity of characterizing \( \mathcal{D} \), and hence of solving \eqref{eq:control-formulation-new} is attributed to the cascading dynamics in \eqref{eq:cascade-dynamics}.
\end{enumerate}
\end{remark}

\begin{example}
\label{eg:policy-not-closed}
Consider the optimal control problem in \eqref{eq:control-formulation-new} for the network shown in Fig. \ref{fig:triangular-net-0}, for $N=3$. Node 1 is the supply node and nodes 2 and 3 are the demand nodes. The initial supply-demand vector is \( p^{0} =\trans{[30, -10, -20]} \). The link weights are \( w = \trans{[2, 1, 1, 1]} \) and the link capacities are \( c = \trans{[6, 7, 14, 5]} \). Consider \( u^{0} = \alpha_{k} = \trans{[21+2/k, -7-1/k, -14-1/k]} \) for some \( k\ge 1 \). The resulting flow is \(f(\mc E^0, u^{0}) = \trans{[8+6/(7k),  4+3/(7k), 9+5/(7k), 5+ 2/(7k)]} \). Consequently, links \( e_{1} \) and \( e_{4} \) fail due to flow exceeding capacity, and the resulting $\mc E^1$ is shown in Fig. \ref{fig:triangular-net-a}. For \( u^{1} = \alpha_{\infty} = \trans{[21, -7, -14]} \),  \( f_{2}( \mathcal{E}^{1}, u^{1}) = 7 \le c_{2} \), \( f_{3}( \mathcal{E}^{1}, u') = 14 \le c_{3} \), and therefore there are no more link failures. This implies that \( u(k) := (\alpha_{k}, \alpha_{\infty}, \alpha_{\infty}) \in \mathcal{D} \) for every \( k \ge 1 \). However, \( \hat{u}= \lim_{k\to \infty} u(k) =  (\alpha_{\infty}, \alpha_{\infty}, \alpha_{\infty}) \not \in \mathcal{D} \). This is because $f(\mc E^0, \alpha_{\infty})$ is such that only link $e_1$ fails. The resulting $\hat{\mc E}^1$ is shown in Fig. \ref{fig:triangular-net-b} and $f(\hat{\mc E}^1, \alpha_{\infty})= \trans{[\text{null}, 28/3, 35/3, 7/3]}$, which implies that \( e_{2} \) fails. Thereafter, $\hat{\mc E}^2=\{e_3,e_4\}$, \( f_{3}(\hat{\mathcal{E}}^{2}, \alpha_{\infty}) = 21 > c_{3} \) and  \( f_{4}(\hat{\mathcal{E}}^{2}, \alpha_{\infty}) = 7 > c_{4} \). All links fail under \( \hat{u} \) and thus \( \hat{u} \not\in \mathcal{D} \).   
This demonstrates that \( \mathcal{D} \) is not closed for the given choice of network parameters.
\begin{figure}[htbp]
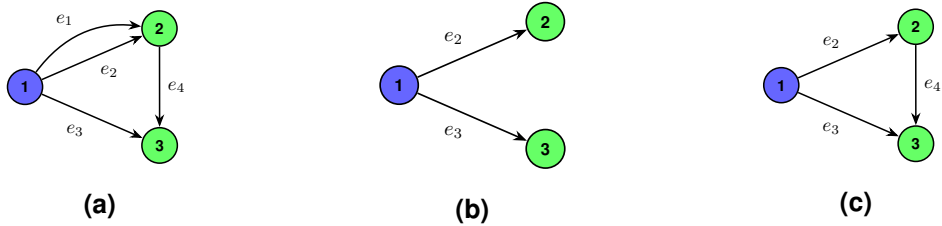

   \centering
   \begin{subfigure}{.3\linewidth}
  \centering
  \includestandalone[width=.5\linewidth]{triangular-net}
  \caption{}
  \label{fig:triangular-net-0}
\end{subfigure}%
\begin{subfigure}{.3\linewidth}
  \centering
  \includestandalone[width=.5\linewidth]{triangular-net-a}
  \caption{}
  \label{fig:triangular-net-a}
\end{subfigure}
\begin{subfigure}{.3\linewidth}
  \centering
  \includestandalone[width=.5\linewidth]{triangular-net-b}
  \caption{}
  \label{fig:triangular-net-b}
\end{subfigure}
\caption{\sf The graph topology for the network used in Example~\ref{eg:policy-not-closed} to illustrate that the feasible control action set $\mc D$ is not necessarily closed.}
\label{fig:triangular-net}
\end{figure}
\end{example}

\begin{remark}
In writing the optimal control problem in \eqref{eq:control-formulation-new}, we only consider the terminal cost \( \trans{\sdsign} p^{N} \), in addition to imposing feasibility condition on the terminal state. \( \trans{\sdsign} p^{N} \) is the remaining cumulative supply and demand once the cascading failure stops, and hence is a natural choice for the objective function in \eqref{eq:control-formulation-new}. Extension to including running cost is discussed in Remark~\ref{rem:in-orthant}.
\end{remark}

\subsection{Solution via Search}
\label{sec:problem-statement} 
A generic approach to solving \eqref{eq:control-formulation-new} is by performing a \emph{search}, e.g., see \cite[Chap 3]{russell2009artificial}, on a directed tree composed of the states reachable from the initial state $(\mc E^0,p^0)$ in most $N$ time steps. In other words, the tree is rooted at $(\mc E^0,p^0)$, and has depth $N$. Each node of the tree corresponds to a state $(\mc E,p)$ which is reachable in one time step from its parent node under a control action which is associated with the incoming arc to that node. 
When considering one time step reachable set from a given node $(\mc E,p)$, one only considers control actions belonging to $U(\mc E,p)$. The set of goal states for the search is \( \mathcal{S} \) and we associate every feasible state \( ( \mathcal{E}, p)\in \mathcal{S} \) with a reward \( \trans{\sdsign}p \) and every infeasible state with a reward \( -\infty \). The objective is to search for a state in \( \mathcal{S} \) with maximal reward.

Let \( J_{t}( \mathcal{E}, p) \) be the maximum among utilities of all the states that can be reached in at most $t$ time steps starting from $(\mc E, p)$. Solving \eqref{eq:control-formulation-new} is equivalent to computing \( J_{N}( \mathcal{E}^{0}, p^{0}) \). This computation can be done as follows:
\begin{subequations}
\label{eq:value-iteration}
\begin{align}
\label{opt:lp-redispatch} 
J_{1}( \mathcal{E}, p ) & =  \max_{u \in U(\mc E,p)} \trans{\sdsign} u \quad \text{ s.t.} \quad |f( \mathcal{E}, u)|  \leq c_{ \mathcal{E}} \\
 \label{eq:value-iteration-b} 
J_{t}(\mathcal{E}, p) & = \sup_{u \in U(\mathcal{E}, p)} J_{t-1} \left(\mathcal{F}_{ \mathcal{E}}(\mathcal{E}, u), u \right), \,\,
 t = 2, \ldots, N
 \end{align}
\end{subequations} 
where \eqref{opt:lp-redispatch} uses the flow capacity constraint to account for the additional constraint to be satisfied by $u^{N-1}$, as commented on in Remark~\ref{rem:additional-check}. \eqref{opt:lp-redispatch} is a linear program with nonempty feasible set (recall \( ( \mathcal{E}, \zerobf) \in \mathcal{S} \) for all \( \mathcal{E} \)) and commonly referred to as \emph{LP power redispatch}, e.g., see \cite{chen2005cascading}.  \eqref{eq:value-iteration-b} uses supremum because \( \mathcal{F}_{ \mathcal{E}}(\mathcal{E}, u) \), and hence \( J_{t-1} \left(\mathcal{F}_{ \mathcal{E}}(\mathcal{E}, u), u \right) \), is not continuous w.r.t. \( u \). 
It is straightforward to see that \( J_{1}( \mathcal{E}, p) \le J_{t}(\mathcal{E}, p) \le \trans{\sdsign} p \) for all \( (\mathcal{E},p) \) and \( t\in [N] \). 

Executing tree search, or equivalently implementing \eqref{eq:value-iteration}, is not directly amenable to a computational procedure, since the number of one-step reachable states from $(\mc E,p)$, or equivalently the set of admissible control actions $U(\mc E,p)$, is a continuum in general. A natural strategy is to discretize $U(\mc E,p)$, at the expense of getting less scalable algorithms and approximate solutions.
In this paper, we propose the following two approaches for better computational efficiency: 
%
\begin{enumerate}[(I)]
\item (semi-)analytic solution for a certain class of networks, or for optimal solution within a certain class of control policies (Section~\ref{sec:tree-dp}); and
%
\item an algorithmic procedure to construct an \emph{equivalent} finite abstraction of the set of admissible control actions, such that computing optimal solution over this finite abstraction gives solution to \eqref{eq:control-formulation-new} (Section~\ref{sec:state-aggregation-main}).
%
\end{enumerate}

\section{Analytical Solution}
\label{sec:tree-dp}

In this section, we present (semi-)analytical solution to \eqref{eq:control-formulation-new} in some special cases.
 

\subsection{Parallel Networks}
\label{sec:para-net}

\begin{figure}[htbp]
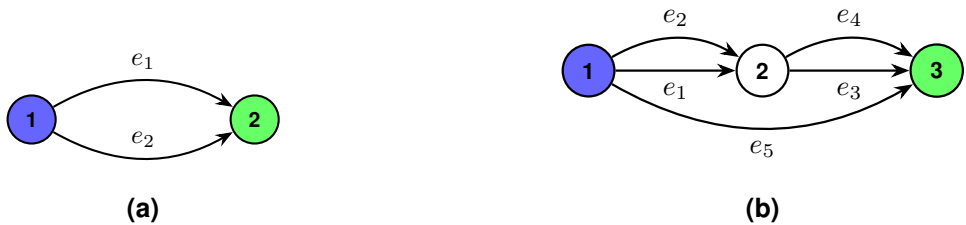

\centering
 \begin{subfigure}[t]{.44\linewidth}
  \centering
   \includestandalone[width=0.5\linewidth]{parallel-net}
  \caption{}
  \label{fig:2node-example}
\end{subfigure}
 \begin{subfigure}[t]{.54\linewidth}
  \centering
  \includestandalone[width=0.6\linewidth]{single-sd-example}
  \caption{}
\label{fig:diamond-net}
\end{subfigure}
\caption{(a) a parallel network with two links. (b) the network used in Example~\ref{eg:one-shot-control} to illustrate that the set of feasible one-shot control actions can be neither connected nor closed.}
\end{figure}

A parallel network \( ( \{1, 2\},  \mathcal{E}) \) consists of two nodes that are connected by multiple parallel links, e.g., see Figure~\ref{fig:2node-example}. We set the convention that the links are directed from node 1 (supply) to node 2 (demand). Since $p_2=-p_1$, following (\ref{eq:power-network-laws}), the link flows are given by \( f_{i} = p_{1} w_{i}/\sum_{j \in \mathcal{E}} w_{j} \) for all \( i\in \mathcal{E} \). 
The following monotonicity result is straightforward. 
\begin{lemma}
\label{lem:monotonicity}
Consider two arbitrary parallel networks \( ( \{1, 2\},  \mathcal{E} ) \) and \( ( \{1, 2\},  \mathcal{E}') \) such that \( \mathcal{E} \subseteq \mathcal{E}' \) and two arbitrary supply-demand vectors \( p = p_{1}\trans{[1\; -1]} \)  and \( p' = p'_{1}\trans{[1\; -1]} \) such that \( 0 < p_{1} \le p'_{1}  \). Then the following are true: (i) \(  \mathcal{F}_{ \mathcal{E}}( \mathcal{E}, p') \subseteq \mathcal{F}_{ \mathcal{E}}( \mathcal{E}, p) \); and (ii) \(   \mathcal{F}_{ \mathcal{E}}( \mathcal{E}, p) \subseteq \mathcal{F}_{ \mathcal{E}}( \mathcal{E}', p) \). 
\end{lemma}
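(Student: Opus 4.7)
The plan is to prove both statements by direct computation using the explicit formula for link flow in parallel networks, namely $f_i = p_1 w_i / \sum_{j \in \mathcal{E}} w_j$, combined with the definition $\mathcal{F}_{\mathcal{E}}(\mathcal{E}, u) = \{i \in \mathcal{E} : |f_i(\mathcal{E}, u)| \le c_i\}$. Since all link weights $w_j$ are positive and the flows point from node 1 to node 2 (no sign issues), the flow magnitudes are monotone in a very transparent way. No nontrivial tool is needed; this is essentially an unpacking of the definitions.

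For part (i), I would fix the link set $\mathcal{E}$ and compare flows under $p$ and $p'$. For every $i \in \mathcal{E}$, the formula gives $f_i(\mathcal{E}, p') / f_i(\mathcal{E}, p) = p_1'/p_1 \ge 1$, so $|f_i(\mathcal{E}, p)| \le |f_i(\mathcal{E}, p')|$. Hence if $i \in \mathcal{F}_{\mathcal{E}}(\mathcal{E}, p')$, meaning $|f_i(\mathcal{E}, p')| \le c_i$, then a fortiori $|f_i(\mathcal{E}, p)| \le c_i$, so $i \in \mathcal{F}_{\mathcal{E}}(\mathcal{E}, p)$.

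For part (ii), I would fix $p$ and compare the denominators $\sum_{j \in \mathcal{E}} w_j$ and $\sum_{j \in \mathcal{E}'} w_j$. Since $\mathcal{E} \subseteq \mathcal{E}'$ and all $w_j > 0$, the denominator only grows, so for every $i \in \mathcal{E}$ we have $|f_i(\mathcal{E}', p)| \le |f_i(\mathcal{E}, p)|$. Therefore every $i \in \mathcal{E} \cap \mathcal{F}_{\mathcal{E}}(\mathcal{E}, p)$ also satisfies $|f_i(\mathcal{E}', p)| \le c_i$ and lies in $\mathcal{F}_{\mathcal{E}}(\mathcal{E}', p)$. A small bookkeeping point to be careful about: $\mathcal{F}_{\mathcal{E}}(\mathcal{E}, p) \subseteq \mathcal{E} \subseteq \mathcal{E}'$, so the inclusion statement makes sense as a subset relation in $\mathcal{E}'$.

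I do not anticipate any real obstacle; the only thing that needs stating clearly is that balance of $p$ on both $\mathcal{E}$ and $\mathcal{E}'$ is automatic in the parallel setting (since $p_1 + p_2 = 0$ and node 1 remains connected to node 2 in both graphs under the hypothesis that $\mathcal{E}$ is nonempty, which is implicit because otherwise $p_1 > 0$ would force an infeasible state with no links). This lets me use the explicit flow formula on both sides without worrying about $L^{\dagger}$ or disconnected components.
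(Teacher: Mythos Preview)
Your proposal is correct and matches the paper's approach: the paper states the lemma immediately after giving the explicit flow formula \(f_i = p_1 w_i/\sum_{j\in\mathcal{E}} w_j\) and simply declares the result ``straightforward'' without writing out a proof. Your direct computation is exactly the intended argument.
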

\begin{remark}
\label{rem:monotonicity}
For a parallel network \( ( \{1, 2\},  \mathcal{E}) \) and a natural number \( N \ge 1 \), let \( (u^{0}, \ldots, u^{N-1}) \) and \( (\tilde{u}^{0}, \ldots, \tilde{u}^{N-1}) \) be two sequences of control actions and \( (\mathcal{E}^{1}, \ldots, \mathcal{E}^{N} ) \) and \( (\tilde{ \mathcal{E}}^{1}, \ldots, \tilde{ \mathcal{E}}^{N}) \) be the topology sequences, respectively, under two controls. Lemma~\ref{lem:monotonicity} implies that if \( u_{1}^{t} \ge \tilde{u}_{1}^{t} \) for all \(0\le  t \le N-1 \), then \( \mathcal{E}^{t} \subseteq \tilde{\mathcal{E}}^{t} \) for all \( 0\le t \le N-1 \). 
\end{remark}

For all $i \in \mc E$, ${f_i}/{c_i} = \left({p_1}/{\sum_{k\in \mathcal{E}} w_k} \right) \cdot \left({w_i}/{c_i}\right)$. Noting the common factor ${p_1}/{\sum_{k\in \mathcal{E}} w_k}$ among all links, we label links in the increasing order of $w_{i}/c_{i} $, i.e., ${w_{i}}/{c_{i}} \le {w_{j}}/{c_{j}}$ for \( i\le j \), \( i, j \in \mathcal{E} \). The chronological order of link failures according to \eqref{eq:cascade-dynamics} is expected to be aligned with the reverse labeling of the links, and is not affected by different load shedding actions, as implied by the following result. 
%

\begin{lemma}
\label{lemma:failure-order}
Consider the cascading dynamics \eqref{eq:cascade-dynamics} for a parallel network \( ( \{1, 2\},  \mathcal{E}) \). If $f_j(t) \leq c_j$ for some $j \in \mc E$ and $t \in [N]$, then $f_i(t) \leq c_i$ for all $ i \leq j $. 
\end{lemma}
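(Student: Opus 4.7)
My plan is to leverage the closed-form expression for flows in a parallel network, together with the fixed labeling $w_1/c_1 \leq w_2/c_2 \leq \cdots$, to reduce the lemma to a one-line comparison. The crucial observation is that $f_i(t)/c_i$ factorizes as a link-independent quantity times $w_i/c_i$, so the monotonicity carried by the labeling transfers verbatim to the normalized flow.

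Concretely, at any time $t$ the set $\mc E^t$ still forms a parallel network from node $1$ to node $2$ under the supply-demand vector $p^t = p_1^t \trans{[1\ -1]}$, so for every $i\in \mc E^t$ one has
\[
f_i(t) \;=\; \frac{p_1^t\, w_i}{W^t}, \qquad W^t := \sum_{k\in \mc E^t} w_k \;>\; 0.
\]
Dividing by $c_i$ isolates the link dependence:
\[
\frac{f_i(t)}{c_i} \;=\; \frac{p_1^t}{W^t}\cdot\frac{w_i}{c_i}.
\]
Note that the admissibility constraint \eqref{def:control-space} keeps $p_1^t\ge 0$, so the common factor $p_1^t/W^t$ is non-negative. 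Because the initial labeling by increasing $w_i/c_i$ is inherited by every subset $\mc E^t\subseteq \mc E^0$, any two active indices $i\le j$ in $\mc E^t$ satisfy $w_i/c_i\le w_j/c_j$. Multiplying this inequality by the common non-negative scalar $p_1^t/W^t$ yields $f_i(t)/c_i \le f_j(t)/c_j$, and hence the hypothesis $f_j(t)\le c_j$ propagates to $f_i(t)\le c_i$ for every active $i\le j$.

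There is no real obstacle here; the only matters requiring minor care are (i) confirming that $f_i(t)$ and $c_i$ are non-negative so that the inequality $f_j(t)\le c_j$ coincides with the magnitude condition $|f_j(t)|\le c_j$ used by the failure rule of Section~\ref{sec:cascade-dym}, and (ii) handling indices $i\le j$ that have already failed before time $t$, for which the claim is either vacuous or trivially true under the convention that $f_i = 0$ on inactive links. The whole argument collapses to the one-line factorization above once the parallel-network flow formula is invoked.
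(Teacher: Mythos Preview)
Your proof is correct and follows essentially the same route as the paper's: both invoke the parallel-network flow formula $f_i(t)=p_1^t w_i/\sum_k w_k$ and then use the labeling $w_i/c_i\le w_j/c_j$ to transfer the bound from link $j$ to link $i$. The paper writes it as $f_i(t)=w_i f_j(t)/w_j\le w_i c_j/w_j\le c_i$, while you factor $f_i(t)/c_i=(p_1^t/W^t)(w_i/c_i)$; these are the same one-line computation, with your version being slightly more explicit about the sign of $p_1^t$ and the status of inactive links.
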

\begin{proof}
Since $f_{j}(t) =  p_{1}(t) \frac{w_{j}}{ \sum_{k\in \mathcal{E}} w_{k} } \le c_{j}$, then, for 
$ i<j $, we have \(f_i(t)= p_{1}(t) \frac{w_{i}}{ \sum_{k\in \mathcal{E}} w_{k} }  = w_{i} {f_{j}(t)}/{w_{j}}  \le w_{i} {c_{j}}/{w_{j}} \le c_{i}\). 
The last inequality is due to ${w_{i}}/{c_{i}} \le {w_{j}}/{c_{j}} $ for all $ i \le j $. 
\end{proof}
\begin{remark}
\label{rem:parallel-network}
Lemma~\ref{lemma:failure-order} implies that for all \( t\in [N] \), there exists \( j\in \mathcal{E} \) such that \( \mathcal{E}^{t} = [j] \). Because \( \mathcal{E}^{t} \) is non-increasing, at most \( | \mathcal{E}|+1 \) number of distinct network topologies can occur in the cascading dynamics. 
\end{remark}

The monotonicity properties shown in Lemma~\ref{lem:monotonicity} and the tight characterization of the reachable set of graph topologies, as implied by Remark~\ref{rem:parallel-network}, allows optimal control synthesis relatively easily. Specifically, we show that a \emph{one-shot control} defined next is optimal within all control policies for parallel networks. 
\begin{definition}
\label{def:one-shot-and-constant-control}
For an initial supply-demand vector \( p^{0} \in \real^{ \mathcal{V}} \), a \( N \) stage control sequence \( (u^{0}, \ldots, u^{N-1}) \) is called \emph{one-shot control} if, for some \( 0\le t_{1} \le N-1 \), \( u^{t} = p^{0} \) for all \( t < t_{1} \),  \( u^{t_{1}} \in \cube{p^{0}} \), and \( u^{t} = u^{t_{1}} \) for all \( t \ge t_{1} \). Moreover, if \( t_{1} = 0 \), then it is also called \emph{constant control}. 
\end{definition}

In order to describe the analytical expression of an optimal one-shot control for a parallel network \( ( \{1, 2\},  \mathcal{E}) \), we first introduce a few notations. Let \( \pnetval_{i} :=  ({c_{i}}/{w_{i}}) \sum_{j=1}^{i} w_{j}\) for all \( i\in \mathcal{E} \). In general, \( \pnetval_{i} \) is neither decreasing nor increasing with respect to \( i \). The following remark is straightforward. 
\begin{remark}
\label{rem:value-V}
$ \pnetval_{i} $ is the maximum supply or demand the network can support when only the first $ i $ links are active:  for all \( i\in \mathcal{E} \), \( ( [i], p^{0}) \in \mathcal{S} \) if and only if \( p^{0}_{1} \le \pnetval_{i} \). 
\end{remark}

Let \( \clink_{1} := \max \argmax_{i\in \mathcal{E}} \pnetval_{i} \), and let \( \clink_{j+1} := \max \argmax_{i> \clink_{j} } \pnetval_{i} \) if \( \clink_{j} < |\mathcal{E}| \). Let \( \cmax \) be the maximum number such that \( \clink_{\cmax} \) is defined. It is straightforward to see that \( \clink_{1} < \clink_{2} < \ldots < \clink_{\cmax} = | \mathcal{E}| \) and \( \pnetval_{\clink_{1}} > \pnetval_{\clink_{2}} > \ldots > \pnetval_{\clink_{\cmax}} = \pnetval_{| \mathcal{E} |} \). For a given initial balanced supply-demand vector \( p^{0} \in \real^{2} \), an optimal control depends on the value of \( N \). A big \( N \) provides more flexibility for control design. A small \( N \) forces to shed big portion of loads at small time instants to ensure network feasibility. For example, for \( N=1 \), sufficiently large amount of load needs to be shed at \( t=0 \) to ensure that all links become feasible. Next we define a quantity \( N_{j}(p^{0}) \) for every balanced \( p^{0} \in \real^{2} \) and \( j\in [\cmax] \), which will be used in the specification of optimal control.
Let \( ( \mathcal{E}_{\mathrm{un}}^{0}, \ldots, \mathcal{E}_{\mathrm{un}}^{N}) \) be the non-increasing topology sequence of the uncontrolled cascading dynamics \eqref{eq:cascade-dynamics} (that is, \( u^{t} = p^{0} \) for all \( t \)). Let \( \pnetval_{\clink_{0}} := \infty \), \( \pnetval_{\clink_{\cmax+1}} = 0 \) and \( \mathcal{E}_{\mathrm{un}}^{-1} \supset \mathcal{E}_{\mathrm{un}}^{0} \) for convenience. Let \( j \in [\cmax]\cup \{0\}\) be such that \( \pnetval_{\clink_{j+1}} < p^{0}_{1} \le \pnetval_{\clink_{j}} \) and let \( N_{k}(p^{0}) := 1+ \min\setdef{t\in \{0, \ldots, N \}}{ ( \mathcal{E}_{\mathrm{un}}^{t}, p^{0}) \in \mathcal{S}} \) for \( 1\le k \le j \) and \( N_{k} \)  be such that \( \mathcal{E}_{\mathrm{un}}^{N_{k}(p^{0}) -1 } \subseteq [\clink_{k}] \subset \mathcal{E}_{\mathrm{un}}^{N_{k}(p^{0})-2} \) for all \( j+1 \le k \le \cmax \). The above definition implies that \( | \mathcal{E}| \ge N_{1}(p^{0}) \ge \ldots \ge N_{\cmax}(p^{0}) = 1 \) for all \( p^{0} \). Finally, let \( N_{0}(p^{0}) := \infty \) for all \( p^{0} \) for convenience. 

\begin{proposition}
\label{prop:para-net-opt-control}
Consider a parallel network \( ( \{1, 2\},  \mathcal{E}) \) with link weights $ w \in \spreal^{ \mathcal{E}} $, flow capacities $ c  \in \spreal^{ \mathcal{E}} $ and initial supply demand vector \( p^{0} \). 
If \( N_{j}(p^{0}) \le N < N_{j-1}(p^{0}) \), then an optimal control action is as follows: \( u^{t, *} = p^{0} \) for all \( 0 \le t < N_{j}(p^{0}) -2 \) and \( u^{t, *} =  \min\{\pnetval_{\clink_{j}}, p_{1}^{0}\} [ 1 \; - 1] \) for all \(  \max\{N_{j}(p^{0}) -2, 0\}  \le t \le N-1 \).
\end{proposition}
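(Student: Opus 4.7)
The proof proceeds in four conceptual steps.

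\emph{Step 1: Reduce to one-shot controls.} The terminal reward $\trans{\sdsign} p^{N} = 2 u_{1}^{N-1}$ depends only on the final supply level, so the objective is to maximize $u_{1}^{N-1}$ subject to reaching a feasible terminal state within the horizon. The monotonicity in Lemma~\ref{lem:monotonicity}, as amplified in Remark~\ref{rem:monotonicity}, suggests that delaying all load shedding until the last feasible moment and then shedding in a single burst to the eventual $u^{N-1}$ is a natural candidate: keeping $p^{0}$ for as long as possible only enlarges the link set at the moment of shedding, and Lemma~\ref{lem:monotonicity}(ii) implies the subsequent cascade starting from a larger active link set at the same supply is pointwise more feasible. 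This allows me to restrict attention to the one-shot ansatz of Definition~\ref{def:one-shot-and-constant-control}, parameterized by a shed time $t_{1}$ and shed value $\alpha \in [0, p_{1}^{0}]$.

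\emph{Step 2: Enumerate reachable topologies and feasibility bounds.} By Remark~\ref{rem:parallel-network}, every active link set reachable under \eqref{eq:cascade-dynamics} in a parallel network has the form $[i]$, and Remark~\ref{rem:value-V} then says that feasibility of the terminal state $([i], \alpha \trans{[1\;-1]})$ is equivalent to $\alpha \le \pnetval_{i}$. Since $\pnetval$ is non-monotonic in $i$, the relevant upper bounds are captured exactly by the strictly decreasing sequence $\pnetval_{\clink_{1}} > \pnetval_{\clink_{2}} > \cdots > \pnetval_{\clink_{\cmax}}$, so reaching terminal topology $[\clink_{k}]$ unlocks maximum reward $2\pnetval_{\clink_{k}}$.

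\emph{Step 3: Characterize the uncontrolled and shed cascades.} Using the explicit failure condition that link $i$ fails at state $(\mc E^{t}, p^{t})$ iff $w_{i}/c_{i} > (\sum_{k \in \mc E^{t}} w_{k})/|p_{1}^{t}|$, I would show that the uncontrolled cascade from $([|\mc E|], p^{0})$ with $\pnetval_{\clink_{\tilde{\jmath}+1}} < p_{1}^{0} \le \pnetval_{\clink_{\tilde{\jmath}}}$ visits a nested subsequence of the sets $[\clink_{k}]$ and terminates at $[\clink_{\tilde{\jmath}}]$ in exactly $N_{\tilde{\jmath}}-1$ steps, identifying $[\clink_{\tilde{\jmath}}]$ as the natural attractor of the regime. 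After one-shot shedding to value $\alpha$ at time $t_{1}$, Lemma~\ref{lem:monotonicity}(i) lets me compare the post-shed cascade to the uncontrolled one and show the settled active set contains $[\clink_{k}]$ for the appropriate $k$, provided $t_{1}$ does not exceed the regime-$k$ deadline implicit in the definition of $N_{k}$.

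\emph{Step 4: Optimize over $(t_{1}, \alpha)$.} The hypothesis $N_{j}(p^{0}) \le N < N_{j-1}(p^{0})$ says the horizon is long enough to reach the regime-$j$ attractor $[\clink_{j}]$ but too short to reach the strictly more rewarding $[\clink_{j-1}]$; combined with the bound from Step~2, this rules out any $\alpha > \min\{\pnetval_{\clink_{j}}, p_{1}^{0}\}$. This upper bound is attained by shedding at $t_{1} = \max\{N_{j}(p^{0})-2, 0\}$, giving precisely the one-shot control in the proposition; shedding any earlier produces the same reward since it lands at the same terminal topology at the same supply level. The main technical obstacle is making Step~1 rigorous: because $\pnetval$ is non-monotonic, one cannot naively replace a gradual shedding policy by its one-shot counterpart and expect feasibility --- a bigger initial link set combined with the same final supply could violate a capacity constraint that a gradual policy had engineered around. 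Overcoming this requires a careful step-by-step comparison of the two cascade trajectories, using both parts of Lemma~\ref{lem:monotonicity} together with Remark~\ref{rem:parallel-network} to ensure all compared topologies lie on the same chain $[i] \subseteq [j]$.
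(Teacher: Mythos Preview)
Your Steps 2--4 contain the right ingredients, but Step~1 is both unnecessary and, as stated, points in the wrong direction. You claim that ``keeping $p^{0}$ for as long as possible only enlarges the link set at the moment of shedding,'' but Remark~\ref{rem:monotonicity} says the opposite: the uncontrolled cascade (supply $p_{1}^{0}$ at every step) produces the \emph{smallest} active link set at each time, since any admissible control has $\tilde{u}_{1}^{t} \le p_{1}^{0}$ and therefore $\mathcal{E}_{\mathrm{un}}^{t} \subseteq \tilde{\mathcal{E}}^{t}$. Your own Step~4 tacitly relies on the correct direction, but routing the argument through a preliminary reduction to one-shot controls saddles you with a self-described ``main technical obstacle'' that the paper never encounters.

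The paper's proof is a direct two-part argument with no reduction to one-shot controls. First, it verifies that the proposed $u^{*}$ is feasible: it tracks the uncontrolled cascade up to time $N_{j}(p^{0})-2$, notes that $[\clink_{j}] \subset \mathcal{E}_{\mathrm{un}}^{N_{j}(p^{0})-2}$ by definition of $N_{j}$, and then checks via Remark~\ref{rem:value-V} and the definition of $\clink_{j}$ that shedding to $\pnetval_{\clink_{j}}$ stabilizes the topology at $[\clink_{j}]$. Second, for optimality, it argues by contradiction directly over \emph{all} feasible controls $\tilde{u}$: if $\tilde{u}_{1}^{N-1} > \pnetval_{\clink_{j}}$, then $\tilde{\mathcal{E}}^{N-1} \supseteq \mathcal{E}_{\mathrm{un}}^{N-1} \supset [\clink_{j-1}]$, the first inclusion by Remark~\ref{rem:monotonicity} and the second (strict) inclusion from $N < N_{j-1}(p^{0})$. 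Writing $\tilde{\mathcal{E}}^{N-1} = [m]$ with $m > \clink_{j-1}$ (Remark~\ref{rem:parallel-network}), the definition of $\clink_{j}$ gives $\pnetval_{m} \le \pnetval_{\clink_{j}}$, so feasibility of $(\tilde{\mathcal{E}}^{N-1}, \tilde{u}^{N-1})$ would force $\tilde{u}_{1}^{N-1} \le \pnetval_{m} \le \pnetval_{\clink_{j}}$, a contradiction. The upper bound thus holds for every admissible control without ever restricting to a subclass, which dissolves the obstacle you flag in Step~1.
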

\begin{remark}
While Proposition~\ref{prop:para-net-opt-control} gives the explicit expression for a one-shot control that is optimal for parallel networks, the optimality of one-shot control in more general settings is proven in \cite{Ba.Savla.CDC16}.
\end{remark}

\begin{corollary}
\label{coro:para-net-constant-control}
For a parallel network \( ( \{1, 2\},  \mathcal{E}) \) with link weights $ w \in \spreal^{ \mathcal{E}} $, flow capacities $ c  \in \spreal^{ \mathcal{E}} $ and initial supply demand vector \( p^{0} \), for \( N \ge | \mathcal{E}| - \clink_{1} \), the following constant control \( u^{*} \) is an optimal control: \( u^{t, *} = [ 1 \; - 1] \min \{ p^{0}_{1}, \pnetval_{ \clink_{1}}\} \) for all $0\le t\le N-1$. 
\end{corollary}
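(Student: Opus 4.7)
The plan is to sandwich $\trans{\sdsign} p^N$ between a universal upper bound (over all $u\in \mc D$) and the value $2\bar p_1$ attained by the proposed control, where $\bar p_1 := \min\{p_1^0, \pnetval_{\clink_1}\}$. For the upper bound, take any $u \in \mc D$. Admissibility $u^t \in \cube(p^t)$ combined with $p^{t+1}=u^t$ forces $|p_1^t|$ to be non-increasing, so $p_1^N \le p_1^0$. Since the network is parallel, Lemma~\ref{lemma:failure-order} applied at every time step forces $\mc E^t$ to be a prefix $[k_t]$ throughout; then terminal feasibility $(\mc E^N,p^N)\in \mc S$ and Remark~\ref{rem:value-V} give $p_1^N \le \pnetval_{k^*}$, where $\mc E^N=[k^*]$. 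Because $\pnetval_{\clink_1} = \max_i \pnetval_i$ by definition of $\clink_1$, we get $p_1^N\le \pnetval_{\clink_1}$; the degenerate case $\mc E^N=\emptyset$ forces $p_1^N=0\le \bar p_1$ through balance. Combining, $\trans{\sdsign}p^N = 2p_1^N \le 2\bar p_1$.

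For achievability it suffices to verify $u^* \in \mc D$, since then $p^N = \bar p_1[1\;-1]$ yields exactly $2\bar p_1$. Admissibility is immediate: $\bar p_1 \le p_1^0$ gives $u^{0,*} \in \cube(p^0)$, and $p^t = \bar p_1[1\;-1]$ for $t \ge 1$ makes $u^{t,*}\in \cube(p^t)$ trivial; balance of $u^{t,*}$ with respect to $\mc E^t$ reduces to keeping nodes $1$ and $2$ connected, which is a consequence of the invariant proved next. The heart of the proof is the inductive invariant $[\clink_1] \subseteq \mc E^t$ for every $0 \le t \le N$. Writing $W_S := \sum_{j\in S} w_j$ and using the identity $c_{\clink_1} = \pnetval_{\clink_1}\, w_{\clink_1}/W_{\clink_1}$, one computes
\[
\frac{|f_{\clink_1}(\mc E^t, u^{t,*})|}{c_{\clink_1}} \;=\; \frac{\bar p_1}{\pnetval_{\clink_1}} \cdot \frac{W_{\clink_1}}{W_{\mc E^t}} \;\le\; 1,
\]
the first factor being at most $1$ by definition of $\bar p_1$ and the second by the inductive hypothesis $[\clink_1]\subseteq \mc E^t$. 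Hence link $\clink_1$ does not fail, and Lemma~\ref{lemma:failure-order} extends survival to every $i\le \clink_1$, closing the induction. At any $t$ with $(\mc E^t,p^t)\notin \mc S$ at least one link fails, and by the invariant such a link lies in $\mc E^t\setminus [\clink_1]$. Since $|\mc E^0\setminus [\clink_1]| = |\mc E|-\clink_1 \le N$, the cascade terminates within $N$ steps with $(\mc E^N,p^N)\in \mc S$ and $p_1^N=\bar p_1$, matching the upper bound.

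The main obstacle is the invariant $[\clink_1]\subseteq \mc E^t$: this is precisely where the extremal choice $\pnetval_{\clink_1} = \max_i \pnetval_i$ (rather than any other level in $\pnetval_{\clink_1} > \pnetval_{\clink_2} > \ldots$) becomes essential, and the displayed identity is what isolates this extremality inside the cascade dynamics. Everything else is routine bookkeeping using admissibility, Lemma~\ref{lemma:failure-order}, and the definition of $\mc S$.
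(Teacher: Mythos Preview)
The paper does not prove the corollary explicitly; it is stated as a direct consequence of Proposition~\ref{prop:para-net-opt-control}. Your route is different and more self-contained: you bypass Proposition~\ref{prop:para-net-opt-control} entirely and argue via the sandwich $\trans{\sdsign}p^N \le 2\bar p_1$ (from monotonicity, Lemma~\ref{lemma:failure-order}, and Remark~\ref{rem:value-V}) together with feasibility of $u^*$. The upper bound is clean and correct, and the inductive invariant $[\clink_1]\subseteq \mc E^t$ is established properly through the displayed ratio.

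The termination step, however, has an off-by-one gap. You conclude that the cascade ``terminates within $N$ steps with $(\mc E^N,p^N)\in \mc S$'', but membership in $\mc D$ (see \eqref{eq:feasible-set-def}) requires $(\mc E^{N-1},u^{N-1,*})\in \mc S$, which is strictly stronger. Your link-counting yields only that the first feasible time $\tau$ satisfies $\tau \le |\mc E|-\clink_1 \le N$, whereas feasibility needs $\tau \le N-1$. The boundary case $\tau = |\mc E|-\clink_1$ does occur: with two parallel links, $w_1=w_2=1$, $c_1=2$, $c_2=1/2$, one has $\pnetval_1=2>\pnetval_2=1$, so $\clink_1=1$ and $|\mc E|-\clink_1=1$; for $p_1^0>2$ and $N=1$ the proposed control $u^{0,*}=2\,[1\;-1]$ gives $f_2(\mc E^0,u^{0,*})=1>c_2$, hence $(\mc E^0,u^{0,*})\notin \mc S$ and $u^*\notin \mc D$. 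Your argument goes through verbatim under the hypothesis $N \ge |\mc E|-\clink_1+1$; the stated bound $N \ge |\mc E|-\clink_1$ appears to be one step too permissive, both for your proof and for the corollary as written.
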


On one hand, Proposition~\ref{prop:para-net-opt-control} and Corollary~\ref{coro:para-net-constant-control} justify the study of optimal control within a special class of control policies. On the other hand, while a one-shot control action can be optimal for non-parallel networks, it is not true in general. Furthermore, the sets of feasible constant and one-shot control actions are not necessarily closed nor connected. These are illustrated in the following example. 

\begin{example}
\label{eg:one-shot-control}
Consider the network illustrated in Fig.~\ref{fig:diamond-net}, containing a single supply node \( 1 \) and a single demand node \( 3 \), having link weights \( w =\onebf \), and with initial supply-demand vector \( p^{0} = \trans{[3,\, 0,\, -3]} \). Consider two scenarios corresponding to link capacities \( c^{1} =  \trans{[0.8,\, 1.5,\, 0.6,\, 0.5,\, 0.25]}  \) and \( c^{2} = \trans{[0.8,\, 1.5,\, 0.7,\, 0.5,\, 0.25]} \), where note that the two scenarios differ only in the capacity of link \( e_{3} \). We consider the optimal control problem for \( N =2 \). Let  \( u^{t}=z_{t} \times \trans{[1, 0, -1]} \), \( t \in \{0, 1\} \), \( 0\le z_{1} \le z_{0} \) be the control actions. The flow under \( u^{t} \) for relevant network topologies are: \( f([5], u^{t}) = \frac{1}{4} z_{t} \times \trans{[1, 2, 1, 1, 1]} \), \( f([4], u^{t}) = \frac{1}{5} z_{t} \times \trans{[1, 3, 2, 1]} \), \( f([3], u^{t}) = \frac{1}{3} z_{t} \times \trans{[1, 2, 1]}\) and \( f(\{2\}, u^{t}) = z^{t} \). The maximal value of \( z_{t} \) can be supported by these networks are, respectively, \( 1, 1.5, 1.8, 1.5 \) in the first scenario and \( 1, 1.75, 2.1, 1.5\) in the second scenario. It is straightforward to see that the network would get disconnected in both scenarios if no load shedding is implemented. By considering all possible topology sequences that can occur under a control policy, we obtain the following: 
\begin{enumerate}[(i)]
\item The best one-shot control that sheds loads at \( t=1 \) is \( z_{0} = 3 \), \( z_{1} = 1.5 \) in both scenarios.
\item The best constant controls are: \( z_{0} = z_{1} = 1.5 \) in the first scenario and \( z_{0} = z_{1} = 2.1 \) in the second scenario.
\item An optimal control is \( z^{*}_{0} = 2.1 \) and \( z^{*}_{2} = 1.8 \) in the first scenario and is \( z_{0}^{*} = z^{*}_{1} = 2.1  \) in the second scenario. 
\end{enumerate}
We can see that in both cases, the best constant controls perform no worse than the best one-shot controls, and while the  best constant control is not optimal over all controls in the first scenario, it is optimal in the second scenario. Furthermore, in the second scenario, the set of feasible constant controls is \(\{(x, x) \,|\, x \in [0, 1]\cap (2, 2.1] \} \) and is neither connected nor closed.
\end{example}

\subsection{A Decomposition Approach for Tree Reducible Networks}
\label{sec:tree-reducible-net}
In this section, we develop a decomposition approach to compute optimal control for \emph{tree reducible networks}. 

\begin{definition}[Tree Reducible Network]
\label{def:tree-reducible-net}
A network $ \mathcal{G} = ( \mathcal{V}, \mathcal{E} )$ with supply-demand vector \( p \) is called tree reducible if it is a tree, or it can be reduced to a tree\footnote{A tree is an undirected graph in which any two nodes are connected by at most one path.} \( \mathcal{T} = ( \mathcal{V}_{T}, \mathcal{E}_{T}) \) by recursively replacing subnetworks, which is between two nodes and contains no supply or demand nodes in the interior, with single links. In this case, the subnetworks and \( \mathcal{T} \) are called, respectively, the reducible components and reduced tree of \( \mathcal{G} \). 
\end{definition}

\begin{figure}[htbp]
\centering
 \begin{subfigure}[t]{.5\linewidth}
  \centering
  \includegraphics[width=0.8\linewidth]{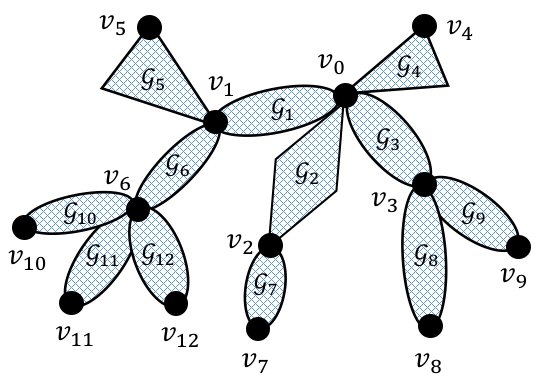}
  \caption{}
  \label{fig:tree-reducible-net}
\end{subfigure}
 \begin{subfigure}[t]{.4\linewidth}
  \centering
  \includestandalone[width=0.5\linewidth]{reduced-tree}
  \caption{}
  \label{fig:reduced-tree}
\end{subfigure}
\caption{(a) a tree reducible network \( \mathcal{G} \), which is the residual IEEE 39 bus network in Fig.~\ref{fig:ieee39-pic} resulting from removal of links \( e_{6} \), \( e_{16} \) and \( e_{40} \), e.g., caused by initial failure. It is assumed that all the nontransmission nodes in the IEEE 39 network are included in relabeled nodes \( v_{0}, \ldots, v_{12} \), e.g., \( v_{0} \) corresponds to node 16 in Fig.~\ref{fig:ieee39-pic}; (b) the reduced tree \( \mathcal{T} \) of \( \mathcal{G} \).}
\label{fig:def-tree-reducible-net}
\end{figure}

Fig.~\ref{fig:def-tree-reducible-net} provides an example of a tree reducible network, which is obtained from IEEE 39 bus network in Fig.~\ref{fig:ieee39-pic}. Each sub-network (denoted by $\mc G_{1}, \ldots, \mc G_{12}$) in Fig. \ref{fig:tree-reducible-net} represents a \emph{reducible component} and corresponds to a link in the \emph{reduced tree} \(  \mathcal{T} = (\mathcal{V}_{T}, \mathcal{E}_{T})\) shown in Fig.~\ref{fig:reduced-tree}. We assign directions for the links in $\mc E_T$ as follows\footnote{The results presented in the current Section~\ref{sec:tree-dp} do not depend on the particular choice of directions for links in $\mc E_T$, as selected here (see also Remark~\ref{rem:low-complexity}). 
}. Pick an arbitrary node in $\mc V_T$, and call it the root node. The directions for all the links incident to the root node are chosen to be incoming to the root node. The directions for the remaining links are similarly chosen to be directed towards the root node; see Figure~\ref{fig:reduced-tree} for an example. For the resulting directed tree, we fix a reverse topological ordering\footnote{That is, for every directed link $(i,j)$, we have $i>j$.} of the nodes $(0, 1, \ldots, |\mc V_T|-1)$, with $0$ being the root node. Figure~\ref{fig:reduced-tree} illustrates such an ordering. In order to minimize notations, we use the same label for a link and its tail node. For example, the link $(5,1)$ in  Figure~\ref{fig:reduced-tree} is labeled as link $5$. An in-neighbor (resp., out-neighbor) of a given node is called its child (resp., parent) node.
For node \( i\in \mathcal{V}_T \), let \( \mathcal{C}_{i} \) denote the set of its children nodes, and let \( \bar{\mathcal{C}}_{i} \) 
denote the set of nodes consisting of the descendants of $i$ and the node $i$ itself. For example, in Fig.~\ref{fig:reduced-tree}, \( \mathcal{C}_{1} = \{5, 6\} \) and \( \bar{\mathcal{C}}_{1} = \{1, 5, 6, 10, 11, 12\} \). With this definition, \( \mathcal{V}_{T} \equiv \bar{\mathcal{C}}_{0} \).  Node $i$ is called a \emph{leaf} if it has no child node, i.e., if \( \mathcal{C}_{i}= \emptyset \).  

Let us start with the simple case when the entire network consists of a single reducible component, say $\mc G_i$, so that the reduced tree is $\mc T = (\{v_{0}, v_{1}\},1)$, where $v_0$ and $v_1$ are the only supply-demand nodes in $\mc G_i$. Let \( a_{i} \in \{1, 0, -1\}^{ \mathcal{V}_{i}} \) be such that $a_{i,v}$ is equal to $1$ if $v=v_0$, is equal to $-1$ if $v=v_1$, and is equal to zero otherwise. We do not fix the individual identities of $v_0$ and $v_1$ as supply or demand nodes, and the choice of the signs of entries of $a_i$ is merely to set some convention. We let $z_i^t a_i$ with $z_i^t \in \real$ denote the supply-demand vector in $\mc G_i$ for $t \in [N]$, or equivalently, the control sequence for $t \in \{0, \ldots, N-1\}$. If $\tilde{p}_i a_i$, for $\tilde{p}_i \in \real$, is the initial supply demand vector, then the set of feasible control sequences as per \eqref{eq:feasible-set-def} is $\mc D(\mc E_i^{0}, \tilde{p}_ia_i, N)$, where $\mc E_i^0$ is the initial active link set in $\mc G_i$ at $t=0$. Recall that $\mc D(\mc E_i^{0}, \tilde{p}_i a_{i}, N)$ captures the constraint that the terminal state at $t=N$ is feasible, as well as the monotonicity constraint implied by \eqref{def:control-space}. We split these two constraints as $\mc D(\mc E_i^{0}, \tilde{p}_i a_i, N) = \Drelone_i(N) \cap \Dreltwo_{i}(N) $, where $\Drelone_i(N)$ captures feasibility of terminal state, while relaxing monotonicity, and $\Dreltwo_{i}(N)$ captures monotonicity while relaxing feasibility of the terminal state. These two sets are formally defined as:
\begin{align}
\Drelone_i(N):=\{(z^{0}_{i}, \ldots, z^{N-1}_{i}) \in \real^N\, & |\,  (\mc E_i^{N-1},z_i^{N-1} a_{i}) \in \mc S; \nonumber \\
 &\mc E_i^t = \mathcal{F}_{ \mathcal{E}} ( \mathcal{E}_{i}^{t-1}, z_{i}^{t-1} a_{i}), \forall t \in [N-1] \} \label{eq:Drel-def} \\
\Dreltwo_i(N)  := \{(z_i^{0}, \ldots, z_i^{N-1}) \in \real^{N} \, &|\, z_i^{0} \in \cube \tilde{p}_i, z_i^{t} \in \cube{z_i^{t-1}},\, \, \forall t \in [N-1] \}  \label{eq:Si-def}
\end{align}

When clear from the context, we shall not show the explicit dependence of $\Drelone_i$ and $\Dreltwo_i$ on $N$.

\begin{remark}
\label{rem:d-rel-set}
\leavevmode
\begin{enumerate}
\item Note that $\Drelone_i$ includes control actions that cause loss of connectivity in $\mc G_{i}$. In this case, since we have only one supply and demand, the constraint that the terminal state $(\mc E^{N-1}_i, z_i^{N-1} a_{i})$ is feasible, implies that $z_i^{N-1}=0$. In addition, since all links have symmetrical capacities, \( \Drelone_i = -\Drelone_i \). 
\item 
$\Dreltwo_i$ is a polytope. However, \( \Drelone_{i} \) is non-convex in general, as indicated by the disconnected feasible set of constant control actions in Example~\ref{eg:one-shot-control}. 
The explicit computation of $\Drelone_i$ follows from the discussion in Section~\ref{sec:tree-search} (cf. Remark~\ref{rem:arrangement-construction}).
\end{enumerate}
\end{remark}

The flexibility afforded by splitting the control constraints into \eqref{eq:Drel-def} and \eqref{eq:Si-def} for an isolated reducible component allows to translate capacity constraints from individual links in a general tree reducible network $\mc G$ into equivalent constraints for the equivalent links in $\mc E_T$ as follows. The control $u_i^t$ \footnote{With a slight abuse of notation, we use $u$ to denote control inputs for the original network as well as the reduced network.} at node $i \in \mc V_T$ at time $t \in \{0, \ldots, N-1\}$ is split as $u_i^t=z_i^t-\sum_{j \in \mc C_i}z_j^t$. For example, referring to Figure~\ref{fig:def-tree-reducible-net}, $u^t_{3}=z^t_{3}-z^t_{8}-z^t_{9}$. In this case, the flow on link $i$ in $\mc E_T$, $z_i^t$, is interpreted as $\mc G_i$'s \emph{share} of control input $u_i^t$. 
$u_i:=(u_i^0,\ldots,u_i^{N-1})$, $i \in \mc V_T$, is constrained to satisfy \eqref{eq:Si-def} for $\tilde{p}_i=p_i^0$, and $z_i:=(z_i^0,\ldots,z_i^{N-1})$, $i \in \mc E_T$, is constrained to satisfy \eqref{eq:Drel-def}. It is clear that, for root node, \( z_{0} = \zerobf \). Hence, let \( \Drelone_0= \{ \zerobf \} \) for convenience.

Consider the following optimization problem that will inform the decomposition approach. For $i \in \mc E_T$, given \( z_{i} \in \Drelone_i \), let: \( \starobj_{i}(z_{i}) :=  \)
\begin{equation}
\label{opt:tree-dp-problem}
\begin{array}{rcl} 
& \displaystyle{ \sup_{\substack{z_{k} \in \Drelone_k \, \forall k\in \bar{\mathcal{C}}_{i}\setminus i \\ u_{k} \in \Dreltwo_k \, \forall k\in \bar{\mathcal{C}}_{i} }}}  & \sum_{k\in \bar{\mathcal{C}}_{i}} \sdsign_{k} u^{N-1}_{k}  \\ 
& \mathrm{s.t.} & z_{k} = u_{k} + \sum_{j \in \mathcal{C}_{k}} z_{j} , \quad \forall k\in \bar{\mathcal{C}}_{i} \\ 
\end{array} 
\end{equation}
\eqref{opt:tree-dp-problem} can be interpreted as maximizing a certain \emph{utility function} over the subtree rooted at node \( i \in \mc V_T \), given that the outflow sequence from node \( i \) is \( z_{i}\in \real^{N} \). \eqref{opt:tree-dp-problem} is a generalization of \eqref{eq:control-formulation-new}, in the sense that \( \starobj_{0}(\zerobf) \) is equal to the optimal value of \eqref{eq:control-formulation-new}. Since the objective function of \eqref{opt:tree-dp-problem} is linear and separable and the decision variables are coupled with only equality constraints of a simple form, standard distributed algorithms, for example, ADMM \cite{Boyd.Parikh.ea:FTML11}, can be used to solve \eqref{opt:tree-dp-problem} if the sets \(\Drelone_k  \) are convex. However,  \(\Drelone_k  \) are non-convex in general (cf. Remark~\ref{rem:d-rel-set}). In order to reduce the complexity due to this non-convexity, we decompose \eqref{opt:tree-dp-problem} into the following \emph{nested form}:  
\begin{equation}
\label{opt:tree-dp-value-iteration}
\begin{array}{rcl} 
\starobj_{i}(z_{i}) = & \displaystyle{ \sup_{\substack{z_{j} \in  \mathcal{Z}_{j}  \, \forall j\in \mathcal{C}_{i} \\ u_{i}\in \Dreltwo_{i} }}}  & \sdsign_{i} u^{N-1}_{i} + \sum_{j\in \mathcal{C}_{i}} \starobj_{j}(z_{j})  \\ 
& \mathrm{s.t.} & z_{i} = u_{i} + \sum_{j \in \mathcal{C}_{i}} z_{j} \\ 
\end{array} 
\end{equation}
where \( \mathcal{Z}_{j}: = \Drelone_{j} \cap \left( \Dreltwo_{j} + \sum_{k\in \mathcal{C}_{j}} \mc Z_{k} \right) \) combines both the constraints \( z_{j} \in \Drelone_{j} \) and \( z_{j} = u_{j} + \sum_{k\in \mathcal{C}_{j}} z_{k} \), for all \( j\in \mathcal{E}_{T} \). The equivalence between \eqref{opt:tree-dp-problem} and \eqref{opt:tree-dp-value-iteration} can be seen via induction. In particular, if \( i\) is a leaf node, then \( \mathcal{C}_{i} = \emptyset \). Both \eqref{opt:tree-dp-problem} and \eqref{opt:tree-dp-value-iteration} reduce to \( \starobj_{i}(z_{i}) = \max_{z_{i} = u_{i} \in \Dreltwo_{i} } \sdsign_{i} u^{N-1}_{i} \). The optimal value function is \( \starobj_{i}(z_{i}) = \sdsign_{i} z_{i}^{N-1} \)  if \( z_{i} \in \Drelone_{i} \cap \Dreltwo_{i} = \mathcal{Z}_{i}  \), and is \( -\infty \) otherwise due to infeasibility. If $i$ is not a leaf node, then \eqref{opt:tree-dp-value-iteration} can be interpreted as being associated with a local \emph{star} subnetwork in $\mc T$. 

The solution to \eqref{opt:tree-dp-problem} is obtained by solving sub-problems in \eqref{opt:tree-dp-value-iteration} over two iterations:
\begin{enumerate}[(I)]
\item  Compute $\map{\starobj_i}{\mc Z_i}{\real}$ via \eqref{opt:tree-dp-value-iteration} for every $i \in \mc V_T$ in the reverse topological order;
%
\item Set $z_0^*=\zerobf$. Following the topological order, for every $i \in \mc V_T$, compute an optimal solution $(u_i^*,\{z_j^*, j \in \mc C_i\})$ to \eqref{opt:tree-dp-value-iteration} corresponding to $\starobj_i(z_i^*)$.
%
\end{enumerate}
It is easy to check that, for all \( i\in \mathcal{V}_{T} \), we have \( \zerobf \in \Dreltwo_{i} \) and \( \zerobf \in \Drelone_{i} \), and hence \( \zerobf \in \mc Z_{i} \). Therefore, iteration (II) is well-posed.


\begin{remark}
\label{rem:low-complexity}
The optimal solution to \eqref{eq:control-formulation-new} as computed by the decomposition approach is invariant with respect to the choice of root node, directions of the links in $\mc E_T$, and the topological ordering used for labeling nodes in $\mc V_T$.

\end{remark}

While the decomposition approach reduces complexity of solving \eqref{eq:control-formulation-new} for general tree reducible networks, the bottleneck is still non-convexity of the local problems in \eqref{opt:tree-dp-value-iteration}. In Section~\ref{sec:invariance}, we show that, for one-shot controls, the two iterations involving solutions to the local problems in \eqref{opt:tree-dp-value-iteration} admit closed-form expressions. 
\subsection{Optimal One-shot Control for Tree Reducible Networks}
\label{sec:invariance}
The next result follows straightforwardly from Definition~\ref{def:one-shot-and-constant-control}. For a general network,  it shows how to reduce the problem of finding an optimal one-shot control to that of finding an optimal constant control. \begin{lemma}
\label{lem:one-shot-constant-control}
Consider the optimal control problem in \eqref{eq:control-formulation-new}, restricted to one-shot control actions, over $N$ stages with initial state $(\mathcal{E}_{\mathrm{un}}^{0}, p^{0})$. 
Let \( ( \mathcal{E}_{\mathrm{un}}^{0}, \ldots, \mathcal{E}_{\mathrm{un}}^{n}) \), \( n\le N-1\), be the longest topology sequence of the uncontrolled cascading dynamics \eqref{eq:cascade-dynamics} such that \( p^{0} \in \mathcal{B}_{\mathcal{E}_{\mathrm{un}}^{t}} \) for \( 0\le t \le n \) \footnote{A sufficient, but not necessary, condition for \( p^{0} \notin \mathcal{B}_{\mathcal{E}_{\mathrm{un}}^{t}} \) at some $t$ is that some non-transmission node becomes isolated at $t$.}. Let  \( \bar{u}( \mathcal{E}_{\mathrm{un}}^{t}, N-t) \) be an optimal solution to \eqref{eq:control-formulation-new}, restricted to constant control actions, over \( N-t \) stages with initial state $(\mathcal{E}_{\mathrm{un}}^{t}, p^0)$. Then, the one-shot control which sheds load to \( \bar{u}^{0}( \mathcal{E}_{\mathrm{un}}^{t^{*}}, N-t^{*}) \) at stage \( t^{*} \in \argmax_{0\le t \le n} \trans{\sdsign} \bar{u}^{0}( \mathcal{E}_{\mathrm{un}}^{t}, N-t) \) is optimal.
\end{lemma}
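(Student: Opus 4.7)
The plan is to reparameterize any admissible one-shot control as a pair $(t_1,\alpha)$, where $t_1\in\{0,1,\ldots,N-1\}$ is the shedding time and $\alpha \in \mathrm{cube}(p^0)$ is the shed value held until the terminal stage, and then decouple the optimization into a family of constant-control subproblems, one for each admissible $t_1$.

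First, I would compute the closed-loop trajectory along the ``wait'' phase. Since $u^t=p^0$ for $t<t_1$ and $\mathcal{F}_p(u)=u$, it follows by induction that $p^t=p^0$ and $\mathcal{E}^t=\mathcal{F}_{\mathcal{E}}(\mathcal{E}^{t-1},p^0)=\mathcal{E}_{\mathrm{un}}^t$ for all $t\le t_1$. Admissibility of $u^t=p^0$ at state $(\mathcal{E}_{\mathrm{un}}^t,p^0)$ requires $p^0\in U(\mathcal{E}_{\mathrm{un}}^t,p^0)=\mathrm{cube}(p^0)\cap\mathcal{B}_{\mathcal{E}_{\mathrm{un}}^t}$ for $0\le t\le t_1-1$, which by the maximality in the definition of $n$ confines the shedding time to $t_1\in[0,n]$ (the case $t_1=n+1$ is either degenerate because $\mathcal{E}_{\mathrm{un}}^{n+1}=\mathcal{E}_{\mathrm{un}}^n$, in which case $t_1=n+1$ coincides with $t_1=n$, or else $p^0\notin\mathcal{B}_{\mathcal{E}_{\mathrm{un}}^{n+1}}$ and so any feasible $\alpha$ lies in $\mathcal{B}_{\mathcal{E}_{\mathrm{un}}^{n+1}}\subseteq\mathcal{B}_{\mathcal{E}_{\mathrm{un}}^n}$, so the same $\alpha$ is already available at $t_1=n$ with one additional step to absorb the cascade).

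Second, for each admissible $t_1\in[0,n]$ the remaining $N-t_1$ stages of the one-shot control form, by Definition~\ref{def:one-shot-and-constant-control}, precisely a constant control applied to the state $(\mathcal{E}_{\mathrm{un}}^{t_1},p^0)$. Consequently, the restriction of \eqref{eq:control-formulation-new} to one-shot controls with shedding time $t_1$ reduces to \eqref{eq:control-formulation-new} restricted to constant controls, with initial state $(\mathcal{E}_{\mathrm{un}}^{t_1},p^0)$ and horizon $N-t_1$. Because the shed value is held until time $N-1$, the terminal supply-demand equals the shed value, so the objective becomes $\trans{\sdsign}\alpha$; the optimum over $\alpha$ is thus $\trans{\sdsign}\bar{u}^0(\mathcal{E}_{\mathrm{un}}^{t_1},N-t_1)$ by definition of $\bar u$.

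Third, maximizing the per-$t_1$ optimum yields the best one-shot value $\max_{0\le t\le n}\trans{\sdsign}\bar{u}^0(\mathcal{E}_{\mathrm{un}}^t,N-t)$, attained by any $t^*$ in the stated $\argmax$, and achieved by the one-shot control that waits for $t^*$ stages and then sheds to $\bar{u}^0(\mathcal{E}_{\mathrm{un}}^{t^*},N-t^*)$, as claimed. The main subtlety is the boundary handling in Step~1, since $t_1=n+1$ is prima facie admissible; the argument above, leveraging $\mathcal{B}_{\mathcal{E}_{\mathrm{un}}^{n+1}}\subseteq\mathcal{B}_{\mathcal{E}_{\mathrm{un}}^n}$ together with the non-increasing nature of $\mathcal{E}^t$, shows that no improvement is lost by restricting to $t_1\le n$.
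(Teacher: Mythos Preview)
The paper does not give a proof of this lemma; it simply asserts that the result ``follows straightforwardly from Definition~\ref{def:one-shot-and-constant-control}''. Your argument is exactly the unpacking the paper has in mind: a one-shot control is, by definition, the uncontrolled dynamics for $t_1$ steps followed by a constant control on the residual horizon, so the optimization decouples into a choice of $t_1$ and, for each admissible $t_1$, a constant-control subproblem from $(\mathcal{E}_{\mathrm{un}}^{t_1},p^0)$ over $N-t_1$ stages.

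One remark on your boundary analysis: the claim that $t_1=n+1$ is dominated by $t_1=n$ because ``the same $\alpha$ is already available at $t_1=n$ with one additional step to absorb the cascade'' is not fully justified. Admissibility of $\alpha$ at $t_1=n$ is indeed fine, since $\mathcal{B}_{\mathcal{E}_{\mathrm{un}}^{n+1}}\subseteq\mathcal{B}_{\mathcal{E}_{\mathrm{un}}^{n}}$. But \emph{feasibility} of the resulting constant control over $N-n$ steps is a separate matter: the cascade from $(\mathcal{E}_{\mathrm{un}}^n,\alpha)$ evolves via $\mathcal{F}_{\mathcal{E}}(\mathcal{E}_{\mathrm{un}}^n,\alpha)$, which in general differs from $\mathcal{E}_{\mathrm{un}}^{n+1}=\mathcal{F}_{\mathcal{E}}(\mathcal{E}_{\mathrm{un}}^n,p^0)$, and the paper itself stresses that DC power flow is non-monotone, so an extra step does not automatically help. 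The paper does not address this corner case either, so this may be a minor imprecision in the lemma as stated rather than a defect specific to your proof; the main decomposition, which is the substance of the result, is correctly argued in your write-up.
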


Following Lemma~\ref{lem:one-shot-constant-control}, it is sufficient to focus on constant controls. In this case, the original problem \eqref{opt:tree-dp-problem} and the sub-problem \eqref{opt:tree-dp-value-iteration} are simplified to a great extent: for all \( i\in \mathcal{V}_{T} \), the decision variables \( z_{i} \) and \( u_{i} \) reduce to one dimensional real numbers; the set \( \Dreltwo_{i} \) reduces to \( \cube{p^{0}_{i}} \) and the set \( \Drelone_{i} \) reduces to, in general, the union of multiple disjoint intervals. While Remark~\ref{rem:arrangement-construction} explains how to compute \( \Drelone_{i} \) in this case, we make the following remarks. 
\begin{remark}
\label{rem:feasible-equiv-flow}
\leavevmode
\begin{enumerate}
\item For \( N=1 \), it is straightforward to see that \( \Drelone_{i} \) is always a single piece of closed interval. For \( 2 \le N \le 6\), we computed \( \Drelone_{i} \) for constant control corresponding to all the node pairs in the IEEE 39 bus network, and found that they were disconnected in less than 5\% of the cases. Based on this, we believe the same pattern to hold true also for larger \( N \) as well as other benchmark networks. In fact, we had to choose the parameters very carefully in Example~\ref{eg:one-shot-control} to show disconnectedness of  \( \Drelone_{i} \) under constant control.
\item Even in the case of constant controls, \( \Drelone_{i} \) can be half open (see Example~\ref{eg:one-shot-control}) and hence \eqref{opt:tree-dp-problem} may not admit a solution. Nevertheless, because the objective function in \eqref{opt:tree-dp-problem} is linear and hence continuous, we use the closure of \( \Drelone_{i} \) in \eqref{opt:tree-dp-problem} and hence closure of \( \mathcal{Z}_{i} \) in \eqref{opt:tree-dp-value-iteration} for simplicity. Proposition~\ref{prop:close-solution} shows, for a more general setting, how to obtain an interior feasible point that is arbitrarily close to the solution of the problem over the closure. 
\end{enumerate}
\end{remark}

We consider the following generalized problem of \eqref{opt:tree-dp-value-iteration} for the case of constant controls. In particular, we will provide results on the solution structure of \eqref{opt:lp-meta-star} below which enable closed form computation for every recursive step in the two iteration algorithm proposed in Section~\ref{sec:tree-reducible-net}, as far as constant controls are considered. 
\begin{equation}
\label{opt:lp-meta-star}
\begin{array}{r} 
\starobjout(z) = \starop \{(\starobjin_{j}, X_{j})\}_{j \in [n]} :=  \displaystyle{ \max_{x \in \real^{n}}}   \sum_{j=1}^{n} \starobjin_{j}(x_{j})  \\ 
 \mathrm{s.t.} \quad \trans{\onebf} x = z; \quad x_{j} \in X_{j},\; \forall\, j\in [n] \\
\end{array} 
\end{equation}
where \( X_{j} \subset \real \) is the union of a finite number of disjoint closed intervals for all \( j\in [n] \). Operator \( \starop \) maps  from \( n \) input functions \( \starobjin_{j} \) with restricted domain \( X_{j} \) to a single output function \( \starobjout \) with domain \( \sum_{j=1}^{n} X_{j} \), where the domain of \( \starobjout \) is not explicitly written in \eqref{opt:lp-meta-star}. Due to the possible disconnected feasible set \( X_{j} \), \eqref{opt:lp-meta-star} is in general non-convex. 
\begin{remark}
\label{rem:correspondence}
\eqref{opt:lp-meta-star} becomes \eqref{opt:tree-dp-value-iteration} with the following substitutions: \( i = 1 \), \( \mathcal{C}_{i} = [n]\setminus \{1\} \), \( \starobjin_{1}(x_{1}) = \sdsign_{i} x_{1}\), \( X_{1} =  \cube{p^{0}_{i}}  \), \( \starobjin_{j}(x_{j}) = \starobj_{j}(x_{j}) \) and \( X_{j} = \mathcal{Z}_{j}   \) for all \( j \in [n]\setminus \{1\} \). 
\end{remark}

The class of input functions considered depends on the following construct. For a given point \( \tau= (\tau_{1}, \tau_{2}) \in \real^{2} \), define a (continuous) piecewise affine function: \( \map{\roofun_{\tau}}{ \real }{\real} \) as:
\begin{equation} 
\label{eq:l-fun}
 \roofun_{\tau}(x) := \left\{
 \begin{array}{ll}
  x - \tau_{1} + \tau_{2}  \quad &   x \le \tau_{1} \\
  - x + \tau_{1} + \tau_{2}  &    x > \tau_{1} \\
 \end{array}\right.
\end{equation}
As shown in Fig~\ref{fig:l-fun}, function \( \roofun_{\tau} \) contains two rays joining at \( \tau \), referred to as the \emph{top point} of \( \roofun_{\tau} \), and intersects the vertical axis at $(0,\tau_2-|\tau_1|)$.

\begin{figure}[htbp]
\centering
  \includestandalone[width=0.28\linewidth]{hz-fun}
\caption{Illustration of \( \roofun_{\tau}(x) \) defined in \eqref{eq:l-fun}.}
\label{fig:l-fun}
\end{figure}

It will be useful to consider a \( \roofun \) function defined over a restricted domain. When the restricted domain is a closed interval, Lemma~\ref{lem:lx-extension} can be used to translate the top point into the domain, if not already inside, without changing function values. It is straightforward that, with bounded domain, \( \roofun \) functions include linear functions as special cases. 
\begin{lemma}
\label{lem:lx-extension}
Consider a point \( \tau = (\tau_{1}, \tau_{2})\in \real^{2} \) and a closed interval \( [b_{1}, b_{2}] \subset \real \) such that \( \tau_{1}  \not\in [b_{1}, b_{2}] \), let \( \tau':= (b_{2}, b_{2}-\tau_{1}+\tau_{2}) \) if \( \tau_{1} > b_{2} \) and \( \tau' := (b_{1}, -b_{1}+\tau_{1}+\tau_{2}) \) if \( \tau_{1} < b_{1} \). Then \( \roofun_{\tau}(x) = \roofun_{\tau'} (x) \) for all \( x \in [b_{1}, b_{2}] \). 
%
\end{lemma}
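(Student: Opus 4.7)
The proof proposal is a direct case analysis on the two possibilities for the location of $\tau_1$ relative to $[b_1,b_2]$, exploiting the piecewise-affine structure of $\chi_\tau$ defined in \eqref{eq:l-fun}. The essential point is that once $\tau_1$ lies outside $[b_1,b_2]$, only one of the two affine pieces of $\chi_\tau$ is active on the entire interval, so only that affine expression matters, and $\tau'$ is explicitly chosen to produce the same affine piece on $[b_1,b_2]$.

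First I would handle the case $\tau_1 > b_2$. For every $x \in [b_1, b_2]$ we have $x \le b_2 < \tau_1$, so by the top branch of \eqref{eq:l-fun}, $\chi_\tau(x) = x - \tau_1 + \tau_2$. On the other hand, by construction $\tau'_1 = b_2$ and $\tau'_2 = b_2 - \tau_1 + \tau_2$; every $x \in [b_1, b_2]$ satisfies $x \le b_2 = \tau'_1$, hence $\chi_{\tau'}(x) = x - \tau'_1 + \tau'_2 = x - b_2 + (b_2 - \tau_1 + \tau_2) = x - \tau_1 + \tau_2$, which matches $\chi_\tau(x)$.

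Next I would handle the symmetric case $\tau_1 < b_1$. For every $x \in [b_1, b_2]$ we have $x \ge b_1 > \tau_1$, so the lower branch of \eqref{eq:l-fun} gives $\chi_\tau(x) = -x + \tau_1 + \tau_2$. With $\tau'_1 = b_1$ and $\tau'_2 = -b_1 + \tau_1 + \tau_2$, we split at $x = b_1$: for $x = b_1$ the upper branch yields $\chi_{\tau'}(b_1) = b_1 - b_1 + (-b_1 + \tau_1 + \tau_2) = -b_1 + \tau_1 + \tau_2$, while for $x > b_1 = \tau'_1$ the lower branch yields $\chi_{\tau'}(x) = -x + \tau'_1 + \tau'_2 = -x + b_1 + (-b_1 + \tau_1 + \tau_2) = -x + \tau_1 + \tau_2$. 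Both expressions coincide with $\chi_\tau(x)$ on $[b_1, b_2]$.

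There is no genuine obstacle here: the lemma is essentially a bookkeeping statement that translating the top point of the tent-shaped function $\chi_\tau$ to the nearest endpoint of $[b_1,b_2]$ along the active affine branch leaves the values on $[b_1,b_2]$ unchanged, which is immediate from the fact that $\tau'_2$ is defined precisely as the value of the active affine piece evaluated at the new top point. I would conclude by noting that in both cases the restriction $\chi_\tau|_{[b_1,b_2]}$ equals $\chi_{\tau'}|_{[b_1,b_2]}$, completing the proof.
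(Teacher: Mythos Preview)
Your proof is correct: it is exactly the direct case-by-case verification from the definition \eqref{eq:l-fun} that the paper has in mind. The paper states this lemma without proof (treating it as immediate), so your argument fills in precisely the omitted computation.
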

We first consider the case when \( X_{j} \) is a single piece of closed interval for all \(j\in [n] \). As we note in Remarks~\ref{rem:feasible-equiv-flow} and \ref{rem:correspondence}, we believe that this case is common in practice. In this case, the feasible set of \eqref{opt:lp-meta-star} is convex and hence \eqref{opt:lp-meta-star} is convex if input functions \( \starobjin_{j} \) are concave, e.g., if \( \starobjin_{j} \) are \( \roofun \) functions. The next result shows that \( \roofun \) functions defined over closed intervals are invariant through operator \( \starop \) by solving \eqref{opt:lp-meta-star} explicitly.
\begin{lemma}
\label{lem:meta-star-net}
If \( \starobjin_{j} \) is a \( \roofun \) function (cf. \eqref{eq:l-fun}) with top point \( \tau_{j} = (\tau_{j}^{1}, \tau_{j}^{2}) \) and \( X_{j} = [q_{j}^{l}, q_{j}^{u}] \) (\( q_{j}^{l} \le \tau_{j}^{1} \le q_{i}^{u} \)) for all \( j\in [n] \), then the following hold true for \( \starop \) defined in \eqref{opt:lp-meta-star}:
\begin{enumerate}[(i)]
\item \( \starobjout =  \starop \{(\starobjin_{j}, X_{j})\}_{j \in [n]} \) is a \( \roofun\) function with top point \(  \left(\trans{\onebf} \tau^{1}, \trans{\onebf} \tau^{2} \right) \) and domain \( [\trans{\onebf}q^l, \trans{\onebf}q^u] \). 
\item the set of maximizers is $\setdef{x^{*} \in [q^{l}, \tau^{1}]  }{ \trans{\onebf} x^{*} = z }$ if $\trans{\onebf} q^{l} \le z < \trans{\onebf} \tau^{1} $, is \( \{ \tau^{1} \} \) if \( z=  \trans{\onebf} \tau^{1} \), and is $\setdef{x^{*} \in [\tau^{1}, q^{u}]  }{ \trans{\onebf} x^{*} = z }$ if $ \trans{\onebf} \tau^{1} < z \le \trans{\onebf} q^{u}$.
\end{enumerate} 
where \( \tau^1:= \tau_{[n]}^{1} \), \( \tau^2:= \tau_{[n]}^{2} \), \( q^{l} := q^{l}_{[n]}\) and \( q^{u}:= q^{u}_{[n]} \) are \( n \) dimensional vectors. 
\end{lemma}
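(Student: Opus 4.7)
The plan is to exploit the concavity and piecewise-affine structure of the summands by producing two linear upper bounds on the objective that are valid over the entire feasible polytope, and then verifying their tightness on the appropriate halves of the domain. The whole argument reduces to one elementary observation about $\roofun_\tau$: inspecting the two branches of \eqref{eq:l-fun}, for every $x \in \real$,
\begin{equation*}
\roofun_\tau(x) \le x - \tau_1 + \tau_2 \quad \text{(equality iff } x \le \tau_1 \text{),} \qquad \roofun_\tau(x) \le -x + \tau_1 + \tau_2 \quad \text{(equality iff } x \ge \tau_1 \text{).}
\end{equation*}
Summing the first inequality over $j$ and using $\trans{\onebf}x = z$ gives $\sum_j \roofun_{\tau_j}(x_j) \le z - \trans{\onebf}\tau^1 + \trans{\onebf}\tau^2$; the second gives $\sum_j \roofun_{\tau_j}(x_j) \le -z + \trans{\onebf}\tau^1 + \trans{\onebf}\tau^2$. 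Taking the minimum of the two right-hand sides yields $\starobjout(z) \le \roofun_{(\trans{\onebf}\tau^1,\, \trans{\onebf}\tau^2)}(z)$.

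Next I would prove that each upper bound is attained on the appropriate side. For $z \in [\trans{\onebf}q^l,\, \trans{\onebf}\tau^1]$, I would construct a feasible point inside the box $\prod_j [q_j^l, \tau_j^1]$ meeting $\trans{\onebf}x = z$; such a point exists because (using $q_j^l \le \tau_j^1$) the continuous image of this box under $x \mapsto \trans{\onebf}x$ is exactly the interval $[\trans{\onebf}q^l,\, \trans{\onebf}\tau^1]$, which contains $z$. At such an $x$, every coordinate satisfies $x_j \le \tau_j^1$, so the first bound is tight in each coordinate and $\sum_j \roofun_{\tau_j}(x_j) = z - \trans{\onebf}\tau^1 + \trans{\onebf}\tau^2$. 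The symmetric construction on $\prod_j [\tau_j^1, q_j^u]$ handles $z \in [\trans{\onebf}\tau^1,\, \trans{\onebf}q^u]$. This establishes that $\starobjout(z) = \roofun_{(\trans{\onebf}\tau^1,\, \trans{\onebf}\tau^2)}(z)$ on $[\trans{\onebf}q^l,\, \trans{\onebf}q^u]$. The domain statement is then immediate: the constraint $\trans{\onebf}x = z$ with $x \in \prod_j[q_j^l, q_j^u]$ is feasible precisely when $z \in [\trans{\onebf}q^l,\, \trans{\onebf}q^u]$, again by the same box-projection observation. This proves (i).

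For (ii), I would read off the argmax directly from the equality conditions. If $z < \trans{\onebf}\tau^1$, the first bound is the binding one, so any maximizer $x^*$ must satisfy $x_j^* \le \tau_j^1$ for every $j$; combined with $x_j^* \ge q_j^l$ and $\trans{\onebf}x^* = z$, this gives $\{x^* \in [q^l,\tau^1] : \trans{\onebf}x^* = z\}$, and conversely every such point achieves the bound. The analogous argument for $z > \trans{\onebf}\tau^1$ yields $\{x^* \in [\tau^1, q^u] : \trans{\onebf}x^* = z\}$. At $z = \trans{\onebf}\tau^1$ both bounds are simultaneously binding, which forces $x_j^* = \tau_j^1$ for all $j$, giving the unique maximizer $\tau^1$.

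I do not anticipate any serious obstacle: once the two pointwise linear majorants of $\roofun_\tau$ are identified, the rest is bookkeeping. The only modest care is in the achievability step, where one must observe that the product structure of the feasible box (rather than a general polytope) guarantees that $x \mapsto \trans{\onebf}x$ sweeps out the full interval $[\trans{\onebf}q^l,\, \trans{\onebf}\tau^1]$ and $[\trans{\onebf}\tau^1,\, \trans{\onebf}q^u]$ respectively, so that a feasible coordinate-wise-tight point is available for every admissible $z$.
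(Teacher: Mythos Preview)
Your proof is correct and genuinely different in presentation from the paper's. The paper argues via Lagrangian duality: it introduces a multiplier $\mu$ for the constraint $\trans{\onebf}x=z$, computes the dual function $\phi(\mu)$, shows it is convex piecewise affine with breakpoints at $\mu=\pm 1$, and concludes $\starobjout(z)=\min\{\phi(-1),\phi(1)\}$; the maximizer description in (ii) is then read off from the sets $\mathcal X_j^*(\mu^*)$. Your argument is a direct primal one: you exploit the two global linear majorants of each $\roofun_{\tau_j}$, sum them, and then exhibit feasible points in $\prod_j[q_j^l,\tau_j^1]$ (resp.\ $\prod_j[\tau_j^1,q_j^u]$) on which the relevant bound is attained coordinate-wise. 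The two routes are closely related --- your two upper bounds are exactly the paper's $\phi(-1)$ and $\phi(1)$ --- but you bypass the duality machinery entirely, which makes the argument shorter and more elementary for this specific structure. The paper's approach has the mild advantage of being a standard template; yours is more transparent about where the tent shape of $\roofun$ is actually used. For (ii) at $z=\trans{\onebf}\tau^1$, your ``both bounds binding'' step is fine: equality in the first summed bound forces $x_j^*\le\tau_j^1$ for all $j$, equality in the second forces $x_j^*\ge\tau_j^1$, and together with $\trans{\onebf}x^*=\trans{\onebf}\tau^1$ this pins $x^*=\tau^1$.
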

\begin{remark}
\label{rem:linearity-invariance}
 Lemma~\ref{lem:lx-extension} implies that the condition \( \tau_{j}^{1} \in X_{j} \) in Lemma~\ref{lem:meta-star-net} is without loss of generality. In addition, the top point of \( \starobjout \) is inside its domain, that is, \( \trans{\onebf} \tau^{1} \in  [\trans{\onebf}q^l, \trans{\onebf}q^u] \).
\end{remark}

We now consider the case when \( X_{j} \) is the union of a finite number of disjoint intervals for all \(j\in [n] \). Assume \( X_{j} \) contains \( m_{j} \) pieces of intervals and denote each piece by \( X_{j}^{k} \), then \( X_{j} = \cup_{k=1}^{m_{j}} X_{j}^{k} \). The way to solve the nonconvex problem \eqref{opt:lp-meta-star} is to decompose it into multiple subproblems, where each subproblem is associated with a combination \( \sigma \in \comset :=  \Pi_{j=1}^{n} [m_{j}] \subset \real^{n} \) of intervals \( X_{j}^{\sigma_{j}} \) and is denoted by \(  \starop(\sigma) := \starop \{(\starobjin_{j}, X^{\sigma_{j}}_{j})\}_{j \in [n]}  \). That is, 
\begin{equation}
\label{eq:domain-decomposition}
 \starop \{(\starobjin_{j}, X_{j})\}_{j \in [n]}  = \max_{ \sigma \in \comset}  \starop(\sigma)
\end{equation}
If the restriction of \( \starobjin_{j} \) to \( X_{j}^{k} \) is a \( \roofun \) function for all \( k\in [m_{j}] \) and \( j\in [n] \), then Lemma~\ref{lem:meta-star-net} can be used to solve subproblems \(  \starop(\sigma) \) for all \( \sigma \in \comset \). This motivates the following definition of \emph{piecewise \( \roofun \) functions}. 

\begin{definition}
\label{def:piecewise-chi-fun}
\( \map{g}{X\subseteq \real}{\real} \) is called a \emph{piecewise \( \roofun \) function} if the domain \( X \) is the union of multiple disjoint intervals, and over each of these intervals, \( g \) is a restricted \( \roofun \) function. 
\end{definition}
\begin{remark}
The domain \( X \) of a piecewise \( \roofun \) function, as defined in Definition~\ref{def:piecewise-chi-fun}, is not necessarily connected. 
\end{remark}

The next result shows that piecewise \( \roofun \) functions are invariant through operator \( \starop \), whose proof follows straightforwardly from \eqref{eq:domain-decomposition}, Lemma~\ref{lem:meta-star-net}, Remark~\ref{rem:linearity-invariance} and the fact that the point-wise maximum of multiple \( \roofun \) functions is a piecewise \( \roofun \) function.  
\begin{proposition}
\label{prop:star-net-invariance-ext}
If, for all \( j\in [n] \), \( X_{j} \subset \real \) is the union of disjoint and closed intervals (including rays) and \( \map{\starobjin_{j}}{ X_{j} }{\real} \) is a piecewise \( \roofun \) function, then \( \starop \{\starobjin_{j}, X_{j} \}_{j=1}^{n} \) is a piecewise \( \roofun \) function. 
\end{proposition}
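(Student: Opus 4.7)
The plan is to follow the strategy indicated by the authors: decompose the domain via \eqref{eq:domain-decomposition}, apply Lemma~\ref{lem:meta-star-net} to each resulting subproblem, and then assemble the pieces via an upper-envelope argument. First, I would write each $X_j = \cup_{k=1}^{m_j} X_j^k$ as a disjoint union of closed intervals on which $\starobjin_j$ restricts to a $\roofun$ function with top point $\tau_j^k$ that, by Lemma~\ref{lem:lx-extension}, may be assumed to lie in $X_j^k$. Then \eqref{eq:domain-decomposition} gives $\starobjout(z) = \max_{\sigma \in \comset} \starop(\sigma)(z)$, where the maximum is taken over those $\sigma$ for which $z \in I_\sigma := \sum_{j=1}^n X_j^{\sigma_j}$. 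Each $I_\sigma$, being a Minkowski sum of closed intervals, is itself a closed interval, and Lemma~\ref{lem:meta-star-net} together with Remark~\ref{rem:linearity-invariance} guarantees that $\starop(\sigma)$ is a restricted $\roofun$ function on $I_\sigma$ whose top point lies inside $I_\sigma$.

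Next I would analyze $\dom \starobjout = \sum_j X_j = \cup_\sigma I_\sigma$, which is a finite union of closed intervals and hence decomposes into finitely many disjoint maximal intervals formed by merging overlapping $I_\sigma$'s. On each such maximal interval $J$, $\starobjout$ is the pointwise maximum of the finitely many restricted $\roofun$ functions $\starop(\sigma)$ whose domain intersects $J$. The central claim is then that the upper envelope of a finite family of restricted $\roofun$ functions is itself a piecewise $\roofun$ function in the sense of Definition~\ref{def:piecewise-chi-fun}. Since every $\roofun$ has slopes $\pm 1$, any two distinct $\roofun$'s intersect in at most one point, so the envelope on $J$ has only finitely many break points. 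Between consecutive break points, the envelope coincides with a single restricted $\roofun$, and collecting these subintervals across all maximal intervals $J$ produces the required finite disjoint partition of $\dom \starobjout$.

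The main obstacle is to argue cleanly that the upper envelope conforms to Definition~\ref{def:piecewise-chi-fun}, rather than being merely a piecewise affine function with slopes $\pm 1$. The key observation is that within the interior of any single $\starop(\sigma)$'s dominance region the envelope is concave (being one $\roofun$), so any upward kink, i.e., a slope change $-1 \to +1$, necessarily corresponds to a transition between two different $\starop(\sigma)$'s; such a kink provides a natural boundary for the partition. Between two such valley points, the envelope contains at most one downward kink, which must coincide with the top point of the dominant $\roofun$ on that subinterval. Hence the restriction of the envelope to each subinterval between consecutive valleys is exactly a restricted $\roofun$ function, and the bookkeeping then yields the desired piecewise $\roofun$ structure, completing the proof.
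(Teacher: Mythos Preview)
Your proposal is correct and follows essentially the same route as the paper: decompose via \eqref{eq:domain-decomposition}, apply Lemma~\ref{lem:meta-star-net} (with Lemma~\ref{lem:lx-extension} and Remark~\ref{rem:linearity-invariance}) to each subproblem $\starop(\sigma)$, and then use that the pointwise maximum of finitely many $\roofun$ functions is a piecewise $\roofun$ function. The paper simply asserts this last fact without further argument, whereas you supply a more detailed upper-envelope analysis (valleys versus tops); this extra care is fine and indeed clarifies why Definition~\ref{def:piecewise-chi-fun} is satisfied.
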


The number of subproblems on the right hand side of \eqref{eq:domain-decomposition} can be large. The next result provides conditions under which the number of subproblems required to be considered in \eqref{eq:domain-decomposition} can be significantly decreased. This is illustrated in Example~\ref{prop:star-net-invariance}.
\begin{proposition}
\label{prop:star-net-invariance}
For all \( j\in [n] \), let \( X_{j} \subset \real \) be the union of  disjoint and closed intervals (including rays) and let \( \map{\starobjin_{j}}{ \conv{X_{j}} }{\real} \) be a \( \roofun \) function with top point \( \tau_{j} = (\tau_{j}^{1}, \tau_{j}^{2}) \). Then \( \starop \{ (\starobjin_{j}, X_{j}) \}_{j \in [n]} = \starop \{ (\starobjin_{j}, \conv{X_{j}} ) \}_{j \in [n]}  \) if and only if (1) \( \tau_{j}^{1} \in X_{j} \) for all \( j\in [n] \); and (2) both \( \sum_{j=1}^{n} X_{j}\cap (-\infty, \tau_{j}^{1}] \) and  \( \sum_{j=1}^{n} X_{j}\cap [\tau_{j}^{1}, \infty) \) are connected sets. 
\end{proposition}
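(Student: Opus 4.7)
The plan is to establish the two directions of the equivalence separately, using the explicit description in Lemma~\ref{lem:meta-star-net} of the convex-relaxation value and its set of maximizers. Define \( X_j^- := X_j \cap (-\infty,\tau_j^1] \) and \( X_j^+ := X_j \cap [\tau_j^1,\infty) \); by Lemma~\ref{lem:lx-extension} I may assume \( \tau_j^1 \in \conv{X_j} \) throughout, so \( q_j^l \le \tau_j^1 \le q_j^u \).

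For sufficiency, assume (1) and (2). Condition (1) gives \( \tau_j^1 \in X_j^- \cap X_j^+ \). Since each \( X_j^- \subseteq [q_j^l,\tau_j^1] \), the Minkowski sum \( \sum_j X_j^- \) is contained in \( [\trans{\onebf}q^l,\trans{\onebf}\tau^1] \) and contains both endpoints (by summing the \( q_j^l \)'s and the \( \tau_j^1 \)'s, respectively); condition (2) then forces \( \sum_j X_j^- = [\trans{\onebf}q^l,\trans{\onebf}\tau^1] \), and symmetrically \( \sum_j X_j^+ = [\trans{\onebf}\tau^1,\trans{\onebf}q^u] \). In particular the two operator domains \( \sum_j X_j \) and \( \sum_j \conv{X_j} = [\trans{\onebf}q^l,\trans{\onebf}q^u] \) coincide. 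For \( z \) in this common domain with \( z \le \trans{\onebf}\tau^1 \) (the other case is symmetric), Lemma~\ref{lem:meta-star-net} identifies the convex-relaxation optimizer set as \( \{x \in [q^l,\tau^1] : \trans{\onebf}x = z\} \); since \( z \in \sum_j X_j^- \), some \( x \in \prod_j X_j^- \) sums to \( z \), and this \( x \) lies in both the optimizer set and \( \prod_j X_j \), so it attains the convex value under the restricted problem as well.

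For necessity, assume the two operator outputs coincide as functions, which forces \( \sum_j X_j = [\trans{\onebf}q^l,\trans{\onebf}q^u] \). To derive (1), evaluate at \( z = \trans{\onebf}\tau^1 \): Lemma~\ref{lem:meta-star-net} gives that the convex relaxation has \emph{unique} maximizer \( \tau^1 \) with value \( \trans{\onebf}\tau^2 \), so attaining the same value from within \( \prod_j X_j \subseteq \prod_j \conv{X_j} \) requires \( \tau^1 \in \prod_j X_j \). To derive (2), suppose for contradiction that \( \sum_j X_j^- \) is disconnected (the \( X_j^+ \) case is symmetric); since it contains \( \trans{\onebf}q^l \) and \( \trans{\onebf}\tau^1 \), there exists \( z^* \in [\trans{\onebf}q^l,\trans{\onebf}\tau^1] \setminus \sum_j X_j^- \). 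For any \( x \in \prod_j X_j \) with \( \trans{\onebf}x = z^* \), some coordinate satisfies \( x_j > \tau_j^1 \) (else \( x \in \prod_j X_j^- \)), while \( \sum_j(x_j - \tau_j^1) = z^* - \trans{\onebf}\tau^1 < 0 \) forces another coordinate strictly below its \( \tau_j^1 \). Using the \( \roofun \) identity \( \starobjin_j(x_j) = \tau_j^2 - |x_j - \tau_j^1| \) together with the strict triangle inequality, \( \sum_j \starobjin_j(x_j) = \trans{\onebf}\tau^2 - \sum_j|x_j - \tau_j^1| < \trans{\onebf}\tau^2 - |z^* - \trans{\onebf}\tau^1| = z^* + \trans{\onebf}(\tau^2 - \tau^1) \), which is the convex-relaxation value at \( z^* \); this contradicts the assumed equality.

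The main obstacle, and the crux of the argument, is the strict-triangle-inequality step in the proof of necessity of (2): one has to recognize that disconnectedness of \( \sum_j X_j^- \) forces every restricted-feasible \( x \) at a gap point \( z^* \) to have \emph{mixed-sign} deviations from \( \tau^1 \), and that the \( \roofun \) structure of each \( \starobjin_j \) converts the objective gap into a quantity governed exactly by the \( \ell_1 \) triangle inequality. The remainder is bookkeeping with Minkowski sums of intervals and the explicit optimizer description in Lemma~\ref{lem:meta-star-net}.
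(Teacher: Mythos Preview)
Your argument is correct and takes a genuinely different, more direct route than the paper's. Both directions rest on Lemma~\ref{lem:meta-star-net}, but you exploit part (ii) (the explicit optimizer set) together with the identity \( \roofun_{\tau}(x)=\tau_{2}-|x-\tau_{1}| \), whereas the paper works through the combinatorial decomposition \( \starop\{(\starobjin_{j},X_{j})\}=\max_{\sigma\in\comset}\starop(\sigma) \) of \eqref{eq:domain-decomposition}. For sufficiency, the paper verifies piece by piece that every subproblem \( \starop(\sigma) \) with \( \sigma\in\tilde{\comset}_{\le l} \) (resp. \( \tilde{\comset}_{>l} \)) lands exactly on \( \roofun_{\tilde{\tau}} \); you instead observe that a single point \( x\in\prod_{j}X_{j}^{-} \) with \( \trans{\onebf}x=z \) already lies in the optimizer set of the convex relaxation, which is shorter. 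For necessity of (2), the paper argues that every ``mixed'' \( \sigma \) (some \( \sigma_{j}\le l_{j} \), some \( >l_{j} \)) yields \( \starop(\sigma) \) either strictly below \( \roofun_{\tilde{\tau}} \) or reproduced by a pure \( \sigma' \), and then reads off connectedness from equality of domains; your triangle-inequality step collapses all of this: any feasible \( x \) at a gap point \( z^{*} \) must have deviations \( x_{j}-\tau_{j}^{1} \) of both signs, whence \( \sum_{j}|x_{j}-\tau_{j}^{1}|>|z^{*}-\trans{\onebf}\tau^{1}| \) directly. The paper's decomposition viewpoint is the one used elsewhere (e.g.\ in \eqref{eq:domain-decomposition} and Proposition~\ref{prop:star-net-invariance-ext}), so it meshes better with the surrounding narrative; your argument is more self-contained and avoids the case analysis. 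One small point worth making explicit in your necessity argument for (2): you show strict inequality for every feasible \( x \), and to conclude that the \emph{supremum} is strictly below the convex value you need compactness of the feasible set (or the observation that the objective tends to \( -\infty \) as \( |x_{j}|\to\infty \), so attention can be restricted to a compact region on which the continuous gap is bounded away from zero).
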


\begin{example}
\label{ex:merge-subproblem}
Consider the following quantities.
Let \( X_{1}  =  [-4, -2]\cup [-1.5, 1.5] \cup [2, 4]  \), \( X_{2} = [-4, 4] \), and \( X_{3} = [-4, -2] \cup [-1, 1] \cup [2, 4] \). Let  \( \map{g_{i}}{X_{i}}{\real} \), \( i\in [3] \), be piecewise \( \roofun \) functions. \( g_{1} \) contains four pieces: \( \roofun_{(-3, 4)} \) over \( [-4, -2] \),  \( \roofun_{(-1, 2)} \) over \( [-1.5, 0] \),  \( \roofun_{(1, 2)} \) over \( (0, 1.5] \), and \( \roofun_{(3, 4)} \) over \( [4, 2] \). \( g_{2} \) contains two pieces \( \roofun_{(-3, 4)} \) over \( [-4, 0] \) and \( \roofun_{(3, 4)} \) over \( (0, 4] \). \( g_{3} \) has the same values as \( g_{1} \) over its domain. Let \( \starobjout_{i} = \starop(\{\starobjin_{i, j}, X_{i, j}\})_{j\in [2]} \), where  \( \starobjin_{i, j} = g_{i} \) and \( X_{i, j} = X \) for \( j\in [2] \) and \( i \in [3] \). 

Application of Proposition~\ref{prop:star-net-invariance} can be illustrated using the above quantities as follows. Computing
\( \starobjout_{1} \) requires solving 16 subproblems according to \eqref{eq:domain-decomposition}. However, following Proposition~\ref{prop:star-net-invariance}, it is sufficient to consider only 4 subproblems. This is because \(  \starobjout_{1} =  \starobjout_{2} \) and computing \( \starobj_{2} \) requires solving 4 subproblems. The proof for \(  \starobjout_{1} =  \starobjout_{2} \) is as follows: first, since \( g_{3} \le g_{1} \le g_{2} \),\footnote{We use the convention that the values at undefined points are $- \infty$.} it is straightforward to see from \eqref{opt:lp-meta-star} that \( \starobjout_{3} \le \starobjout_{1} \le \starobjout_{2} \); then, Proposition~\ref{prop:star-net-invariance} implies that \( \starobjout_{3} =  \starobjout_{2} \). 
\end{example}

\section{An equivalent state aggregation approach}
\label{sec:state-aggregation-main}
The computational approach proposed in Section~\ref{sec:tree-dp} has provable guarantees only for tree reducible networks. In this section, we return to the approach outlined in Section~\ref{sec:problem-statement}, and \eqref{eq:value-iteration} in particular, for a general network topology. We recall that, in its current form, \eqref{eq:value-iteration} is not amenable to implementations because the underlying state space is uncountably infinite. 
In this section, we develop an equivalent finite abstraction of the state space through \emph{state aggregation}, and correspondingly develop an aggregated version of \eqref{eq:value-iteration}.

\subsection{A State Aggregation Approach}
\label{sec:state-aggregation}
The key idea is to develop a finite \emph{consistent partition} of one time step reachable sets. We recall a few standard terminologies. 
A \emph{cover} of a set \( S \) is a collection of nonempty subsets \( \{S_{i}\}_{i \in I} \) of \( S \) such that \( S = \cup_{i\in I} S_{i} \) and a \emph{partition} is a cover with pairwise disjoint elements. We call a cover or partition finite if it contains finitely many elements. 
Furthermore, for a network state \( (\mathcal{E}, p) \), a partition \( \{S_{i}\}_{i \in I} \) of set \( S \subseteq \mathcal{B}_{ \mathcal{E}} \) is said to be \emph{consistent} if, \( \mathcal{F}_{ \mathcal{E}}( \mathcal{E}, u) = \mathcal{F}_{ \mathcal{E}}( \mathcal{E}, \tilde{u}) \) for all \( u, \tilde{u} \in S_{i} \), \( i\in I \). Consistency implies that it is valid to write \( \mathcal{F}_{ \mathcal{E}}( \mathcal{E}, u) \equiv \mathcal{F}_{ \mathcal{E}}( \mathcal{E}, S_{i}) \) for all \( u\in S_{i} \) and \( i\in I \). Note here that the set \( S_i \) is not necessarily the set of admissible control actions \( U( \mathcal{E}, p) \), and can be any arbitrary set of balanced supply-demand vectors. We extend the notion of the set of admissible control actions \eqref{def:control-space} as follows: for a link set \( \mathcal{E} \) and a set \( P \subseteq \mathcal{B}_{ \mathcal{E}} \), 
\begin{equation}
\label{eq:feasible-load-set}
U( \mc E, P)  := \cup_{p \in P} \; U(\mathcal{E}, p) = \mathcal{B}_{ \mathcal{E}} \cap \cube{P}
\end{equation} 
where \( \cube{P}: = \cup_{p\in P} \cube{p} \). 

A finite consistent partition of the set of control actions induces a natural finite cover of the reachable state space at each stage in the cascading dynamics. At $t=0$, the state space \( \{( \mathcal{E}^{0}, p^{0})\} \) is a singleton, and therefore \(  \{( \mathcal{E}^{0}, P^{0})\} \) forms a trivial partition with \( P^{0} := \{p^{0}\} \). Let \( \{U^{0}_{i}\}_{i \in I_{1} } \) be a finite consistent partition of \( U( \mathcal{E}^{0}, P^{0}) \). Then at \( t=1 \), the reachable state space \( \setdef{( \mathcal{F}_{ \mathcal{E}}( \mathcal{E}^{0}, u), u)}{u\in U( \mathcal{E}^{0}, p^{0})} \) is covered by \( \{ (\mathcal{E}_{i}^{1}, P_{i}^{1})\}_{i\in I_{1}} = \{(\mathcal{F}_{ \mathcal{E}}( \mathcal{E}^{0}, U^{0}_{i}),U^{0}_{i})\}_{i \in I_1}\). Let \( \{U^{1}_{j}\}_{j \in I_{2}^{i} } \) be a finite consistent partition of \( U( \mathcal{E}_{i}^{1}, P_{i}^{1}) \), for all \( i\in I_{1} \). Then at \( t=2 \),  the state space reachable from $(\mc E_i^1,P_i^1)$, \( \setdef{( \mathcal{F}_{ \mathcal{E}}( \mathcal{E}^{1}_{i}, u), u)}{u\in U( \mathcal{E}_{i}^{1}, P_{i}^{1})} \), is covered by  \( \{ (\mathcal{E}_{j}^{2}, P_{j}^{2})\}_{j\in I_{2}^{i} } = \{(\mathcal{F}_{ \mathcal{E}}( \mathcal{E}_{i}^{1}, U^{1}_{j}),U_{j}^{1})\}_{j \in I_2^i} \), for all \( i\in I_{1} \). Repeated application of this procedure to all the subsequent stages then gives the desired finite representation. We employ the elements \( ( \mathcal{E}^{t}, P^{t}) \) of the cover at time \( t \) as aggregated states. A natural extension of \eqref{eq:cascade-dynamics} to dynamics over the aggregated states is as follows: 
\begin{equation}
\label{eq:cascade-dynamics-aggregated-new}
\left(\mc E^{t+1}, P^{t+1} \right) = \mc F \left(\mathcal{E}^{t} , P^{t}, U^{t} \right), \qquad U^{t} \in  \mathbb{U}(\mathcal{E}^{t},P^{t})
\end{equation}
where  \( \mc F_p(\mc E, P, U) \equiv \mc F_p(U)  := U \) and \( \mc F_{\mc E}( \mc E, P, U) \equiv \mc F_{\mc E}(\mc E, U) := \setdef{i\in \mathcal{E}}{ |f_{i}( \mathcal{E}, u) | \le c_{i}, \, \forall \, u \in U }  \).
 \( \mathbb{U}(\mathcal{E}^{t},P^{t}) \) is defined by the particular choice of consistent partition of \( U( \mathcal{E}^{t}, P^{t}) \), and serves as the set of admissible aggregated control actions at state \( (\mathcal{E}^{t},P^{t} )\). We associate \(\mathcal{E}\) with a vector \( \beta \in \{1, 0, -1\}^{ \mathcal{E}} \). \( \beta_{i} \) is used to denote whether the flow capacity constraint of link \( i\in \mathcal{E} \) is satisfied or not: the flow stays within capacities for \( \beta_{i} =0 \) and exceeds the upper and lower capacities for \( \beta_{i} = 1 \) and \( \beta_{i} = -1 \), respectively. Let \( U( \mathcal{E}, P, \beta) := \{u \in  U( \mathcal{E}, P) \, |\,   f_{i}( \mathcal{E}, u) < -c_{i} \text{ for } \beta_{i} =-1;   |f_{i}( \mathcal{E}, u)| \le c_{i} \text{ for } \beta_{i} =0;  f_{i}( \mathcal{E}, u) > c_{i} \text{ for } \beta_{i} =1; \forall \, i \in \mathcal{E} \}  \)\footnote{See Remark~\ref{rem:aggregation-is-decomposition} for a variant definition of \( U( \mathcal{E}, P, \beta) \) .}. The consistent partition used in this paper is: 
%
\begin{equation}
\label{eq:partition-def}
 \mathbb{U}( \mathcal{E}, P): = \setdef{U( \mathcal{E}, P, \beta )}{ \beta \in \{1, 0, -1\}^{ \mathcal{E}} }\setminus \emptyset 
\end{equation}
If \( U( \mathcal{E}, P) \) is a polytope, then each member of \( \mathbb{U}( \mathcal{E}, P) \) is also a polytope, with possibly half open boundary due to the strict inequalities in \( U( \mathcal{E}, P, \beta) \). 
An algorithmic procedure to compute and store the partition \( \mathbb{U} \)  defined in \eqref{eq:partition-def} will be presented in Section~\ref{sec:algorithm-section}.

State aggregation gives a finite tree (cf. Section~\ref{sec:problem-statement}), whose nodes are aggregated states and arcs are aggregated control actions. The set of feasible aggregated states is \( \mathbb{S} :=\{(\mathcal{E}, P) \, |\, ( \mathcal{E}, p) \in \mathcal{S}, \, \forall\, p \in P\} \) and the reward associated with a state \( ( \mathcal{E}, P) \in \mathcal{S} \) is \( \sup_{p\in P} \trans{\sdsign}  p =  \max_{p\in \cl{P}} \trans{\sdsign}  p \), where \( \cl{P} \) denotes the closure of \( P \). 
The optimal search over the aggregated tree can be performed through the following calculations, which are adaptations of \eqref{eq:value-iteration}: 
\begin{subequations}
\label{eq:value-iteration-aggregated}
\begin{align}
\label{opt:lp-redispatch-new-aggregated} 
\mathbb{J}_{1}( \mathcal{E}, P ) & =  \max_{u \in U(\mc E,\cl{P})} \trans{\sdsign}u \,\, \text{ s.t. } \, |f( \mathcal{E}, u)| \leq c_{ \mathcal{E}} \\
 \label{eq:value-iteration-b-aggregated} 
\mathbb{J}_{t+1}(\mathcal{E}, P) & = \max_{U \in \mathbb{U}(\mathcal{E}, P)} \mathbb{J}_{t} \left(\mathcal{F}_{ \mathcal{E}}(\mathcal{E}, U), U \right)
 \end{align}
\end{subequations} 
where \( t\in [N-1] \) and \( \mathbb{J}_{t}( \mathcal{E}, P) \) is the maximum among values of all feasible aggregated states that can be reached in at most $t$ time steps starting from $(\mc E,P)$. Similar to \eqref{opt:lp-redispatch}, the flow constraint is imposed in \eqref{opt:lp-redispatch-new-aggregated} to ensure the additional constraint on \( u^{N-1} \) (cf. Remark~\ref{rem:additional-check}). This implies that the unique (and optimal) aggregated control action associated with \( \mathbb{J}_{1}( \mathcal{E}, P) \) is \( U^{N-1, *}=U( \mathcal{E}, P, \zerobf) \). This is to be contrasted with  \eqref{eq:value-iteration-b-aggregated} for \( t\ge 2 \) that the optimal aggregated control action \( U^{t, *} \) is not trivial to obtain.
\eqref{opt:lp-redispatch-new-aggregated} maximizes a linear function over a bounded closed set and \eqref{eq:value-iteration-b-aggregated} maximizes over the finite set \( \mathbb{U}( \mathcal{E}, P) \). Therefore, the optimal value is achievable in every iteration in \eqref{eq:value-iteration-aggregated}.
The next result shows that the iterations in \eqref{eq:value-iteration-aggregated} give the same value as that in \eqref{eq:value-iteration}. 
\begin{theorem}
\label{thm:equivalence}
Consider a network with initial state $(\mc E^0,p^0)$, link weights $w \in \preal^{\mc E^0}$ and link capacities $c \in \preal^{\mc E^0}$. For every aggregated state $(\mc E,P)$ obtained from the consistent partition in \eqref{eq:partition-def}, the computations in \eqref{eq:value-iteration} and \eqref{eq:value-iteration-aggregated} satisfy the following for \( t\in [N] \):
$$
\mathbb{J}_{t}( \mathcal{E}, P) = \sup_{ p\in P} J_{t}( \mathcal{E}, p)  
$$
%
\end{theorem}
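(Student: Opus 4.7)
The plan is to proceed by induction on $t \in [N]$, using the definitions of $J_t$ and $\mathbb{J}_t$ given in \eqref{eq:value-iteration} and \eqref{eq:value-iteration-aggregated}, respectively.

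\emph{Base case ($t=1$).} Unfolding \eqref{opt:lp-redispatch} and \eqref{opt:lp-redispatch-new-aggregated}, and using $U(\mc E,P) = \bigcup_{p \in P} U(\mc E,p)$ from \eqref{eq:feasible-load-set}, both quantities reduce to extremization of the linear objective $\trans{\sdsign} u$ over the intersection of a control set with the closed flow-feasibility set $F := \{u \in \real^{\mc V} : |f(\mc E,u)| \le c_{\mc E}\}$. Specifically, $\sup_{p \in P} J_1(\mc E,p) = \sup_{u \in U(\mc E,P) \cap F} \trans{\sdsign} u$, while $\mathbb{J}_1(\mc E,P) = \max_{u \in U(\mc E,\cl{P}) \cap F} \trans{\sdsign} u$. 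The equality will then follow from continuity of the objective and the identification $\cl{U(\mc E,P)} = U(\mc E,\cl{P})$, which I would verify directly from the piecewise-linear definition of $\cube(\cdot)$ together with the closedness of $\mc B_{\mc E}$.

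\emph{Inductive step.} Assuming the claim at stage $t$ for every aggregated state produced by the partition \eqref{eq:partition-def}, I unfold \eqref{eq:value-iteration-b-aggregated}, apply the inductive hypothesis to each aggregated state $(\mc F_{\mc E}(\mc E, U), U)$ (which is by construction of \eqref{eq:partition-def} a valid aggregated state), use consistency of $\mathbb{U}(\mc E, P)$ to replace $\mc F_{\mc E}(\mc E, U)$ by $\mc F_{\mc E}(\mc E, u)$ for any $u \in U$, combine the outer $\max$ and inner $\sup$ via the fact that $\mathbb{U}(\mc E, P)$ partitions $U(\mc E, P)$, split this using $U(\mc E,P) = \bigcup_{p \in P} U(\mc E,p)$, and finally identify the inner supremum through \eqref{eq:value-iteration-b}. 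This yields the chain
\begin{align*}
\mathbb{J}_{t+1}(\mc E,P)
&= \max_{U \in \mathbb{U}(\mc E,P)} \mathbb{J}_t(\mc F_{\mc E}(\mc E,U), U) \\
&= \max_{U \in \mathbb{U}(\mc E,P)} \sup_{u \in U} J_t(\mc F_{\mc E}(\mc E,U), u) \\
&= \max_{U \in \mathbb{U}(\mc E,P)} \sup_{u \in U} J_t(\mc F_{\mc E}(\mc E,u), u) \\
&= \sup_{u \in U(\mc E,P)} J_t(\mc F_{\mc E}(\mc E,u), u) \\
&= \sup_{p \in P} \sup_{u \in U(\mc E,p)} J_t(\mc F_{\mc E}(\mc E,u), u) \\
&= \sup_{p \in P} J_{t+1}(\mc E,p),
\end{align*}
which completes the induction.

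\emph{Main obstacle.} The routine portion of the argument is the inductive step, which is essentially a sequence of bookkeeping identities once consistency of the partition is invoked. The technically delicate point is the base case: because $P$ need not be closed (cf.\ Example~\ref{eg:policy-not-closed}), one must argue that replacing $P$ by $\cl{P}$ in \eqref{opt:lp-redispatch-new-aggregated} does not strictly increase the supremum of $\trans{\sdsign} u$. This boils down to verifying $\cl{U(\mc E,P)} = U(\mc E,\cl{P})$, and while intuitive from the piecewise-linear form of $\cube(\cdot)$, it must be handled carefully at boundary points where some components of $p$ cross zero, since the $\sign(\cdot)$ appearing in \eqref{def:control-space} is itself discontinuous there.
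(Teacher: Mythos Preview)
Your proposal is correct and follows essentially the same inductive argument as the paper's proof, with the chain of equalities in your inductive step matching the paper's line for line. The paper's base case is actually terser than yours—it asserts the equality in a single line—so your attention to the closure identity $\cl{U(\mc E,P)} = U(\mc E,\cl{P})$ is, if anything, more scrupulous than what the paper provides; note also that your worry about $\sign(\cdot)$ discontinuities is moot, since each aggregated $P$ arising from \eqref{eq:partition-def} lies in a single closed orthant (cf.\ Remark~\ref{rem:in-orthant}).
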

\begin{proof}
We prove by induction. For \( t = 1 \): \(  \displaystyle{\sup_{p\in P}} J_{1}( \mathcal{E}, p)  = \displaystyle{ \sup_{p\in P} \max_{u \in U(\mc E, p)}} \, \{  \trans{\sdsign} u \, |\,   |f( \mathcal{E}, u)| \le c_{\mathcal{E}} \} =  \mathbb{J}_{1}( \mathcal{E}, P) \).
Suppose the claim is true for \(  t \in [k] \). 
\begin{align*}
& \mathbb{J}_{k+1}(\mathcal{E}, P) = \max_{U \in \mathbb{U}( \mathcal{E}, P)} \mathbb{J}_{k}( \mathcal{F}_{ \mathcal{E}}( \mathcal{E}, U), U)  \\
&=  \max_{U \in \mathbb{U}( \mathcal{E}, P)} \sup_{u \in U} J_{k}( \mathcal{F}_{ \mathcal{E}}( \mathcal{E}, u), u)  = \sup_{u \in  U( \mathcal{E}, P)} J_{k}( \mathcal{F}_{ \mathcal{E}}( \mathcal{E}, u), u)  \\
 & = \sup_{p \in P} \sup_{u \in U( \mathcal{E}, p)} J_{k}( \mathcal{F}_{ \mathcal{E}}( \mathcal{E}, u), u)  =  \sup_{p \in P} J_{k+1}( \mathcal{E}, p)
\end{align*} 
where the first equality is due to  \eqref{eq:value-iteration-b-aggregated}; the second equality is due to the induction assumption; the third and forth equalities are due to \( \cup_{U \in \mathbb{U}( \mathcal{E}, P) } U = U( \mathcal{E}, P) = \cup_{ p\in P} U( \mathcal{E}, p)  \), as implied by the definitions of \( \mathbb{U}( \mathcal{E}, P) \) and \( U( \mathcal{E}, P) \); and the last equality is due to \eqref{eq:value-iteration}. 
\end{proof} 

We make the following remarks on the state aggregation and aggregated tree search. 
\begin{remark}
\label{rem:aggregation-is-decomposition}
\leavevmode
\begin{enumerate}
\item Every path of length \( N \) in the aggregated search tree corresponds to a, possibly different, topology sequence that can occur in the cascading dynamics. The state aggregation proposed here is the minimal among all finite abstractions for exact computation of optimal control. 
\item It is straightforward to see from \eqref{eq:value-iteration-aggregated} that the value of \( \mathbb{J}_{t}( \mathcal{E}, P) \) depends only on \( \cl{P} \) for all \( t \in [N]\). Therefore, without introducing error in the computation of \( \mathbb{J}_{t}( \mathcal{E}, P) \), we use \( ( \mathcal{E}, \cl{P}) \) and \( \cl{U} \) in place of \( ( \mathcal{E}, P) \) and \( U \). Correspondingly, without stating explicitly, hereafter we use the following variant of the definition in \eqref{eq:partition-def}: \( U( \mathcal{E}, P, \beta):= \{u \in  U( \mathcal{E}, P) \, |\,   f_{i}( \mathcal{E}, u) \le  -c_{i} \text{ if } \beta_{i} =-1;  |f_{i}( \mathcal{E}, u)| \le c_{i} \text{ if } \beta_{i} =0;  f_{i}( \mathcal{E}, u) \ge c_{i} \text{ if } \beta_{i} =1; \forall \, i \in \mathcal{E} \}  \). 
\end{enumerate}
\end{remark}

\subsection{Optimal Control Synthesis: From Aggregated to the Original State Space}
\label{sec:retrieve-optimal-control} 
The numerical implementation of \eqref{eq:value-iteration-aggregated} is shown in Section \ref{sec:algorithm-section}. Given such a procedure to compute \( U^{*}= (U^{0, *}, \ldots, U^{N-1, *}) \), we next present a result to derive \( u^{*} = (u^{0, *}, \ldots, u^{N-1, *}) \), i.e., control actions for the cascading failure dynamics in \eqref{eq:cascade-dynamics} over the unaggregated state space. However, since the set of feasible control action sequences $\mc D$ is not necessarily closed (cf. Remark~\ref{rem:additional-check}), finding \( u^{*} \) whose cost is \emph{exactly} the same as that of \( U^{*} \) may not be possible. It is however possible to find \( u^{*} \) whose cost is arbitrarily close to that of \( U^{*} \). 

\begin{proposition}
\label{prop:close-solution}
For a network with initial state $(\mc E^0,p^0)$, link weights $w \in \preal^{\mc E^0}$, and link capacities $c \in \preal^{\mc E^0}$, consider $\mathbb{J}_N(\mc E^0,\{p^0\})$ computed by \eqref{eq:value-iteration-aggregated}. For every \( \epsilon >0 \), there exists \( \tilde{u} \in \mathcal{D} \) such that \(\mathbb{J}_{N}( \mathcal{E}^{0}, \{p^{0}\}) \ge \trans{\sdsign} \tilde{u}^{N-1} \ge  \mathbb{J}_{N}( \mathcal{E}^{0}, \{p^{0}\}) - \epsilon\). 
\end{proposition}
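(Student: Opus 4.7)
The plan is to show that $\mathbb{J}_N(\mc E^0,\{p^0\}) = \sup_{u\in\mathcal{D}}\trans{\sdsign}u^{N-1}$, after which the claim reduces to the $\epsilon$-characterization of a supremum; a supplementary constructive recipe would exhibit how one approaches the supremum from the interior of the aggregated cells along an optimal path.

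First, I would apply Theorem~\ref{thm:equivalence} with $P=\{p^0\}$ to get $\mathbb{J}_N(\mc E^0,\{p^0\})=J_N(\mc E^0,p^0)$. Next, unroll the recursion \eqref{eq:value-iteration} stage by stage: the nested supremum at stage $t$ runs over the state-dependent admissible set $U(\mc E^t,p^t)$ defined in \eqref{def:control-space}, the successor state is determined by the cascading map \eqref{eq:cascade-dynamics}, and the terminal linear program \eqref{opt:lp-redispatch} additionally enforces $(\mc E^{N-1},u^{N-1})\in\mathcal{S}$. Collecting these constraints reproduces exactly the feasible set $\mathcal{D}$ in \eqref{eq:feasible-set-def}, so $J_N(\mc E^0,p^0)=\sup_{u\in\mathcal{D}}\trans{\sdsign}u^{N-1}$. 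The upper bound $\mathbb{J}_N(\mc E^0,\{p^0\})\ge\trans{\sdsign}\tilde u^{N-1}$ for every $\tilde u\in\mathcal{D}$ is then immediate, and the existence of $\tilde u\in\mathcal{D}$ with $\trans{\sdsign}\tilde u^{N-1}\ge\mathbb{J}_N(\mc E^0,\{p^0\})-\epsilon$ is the standard approximation property of a supremum.

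For a constructive recipe for $\tilde u$, I would back-track through \eqref{eq:value-iteration-aggregated} to extract an optimal aggregated path $\{(\mc E^t,U^{t,*})\}_{t=0}^{N-1}$ with $U^{t,*}=U(\mc E^t,P^t,\beta^t)$ in the closed variant of Remark~\ref{rem:aggregation-is-decomposition}, pick a maximizer $\bar u^{N-1}\in\cl{U^{N-1,*}}$ of the terminal LP \eqref{opt:lp-redispatch-new-aggregated} (which exists by compactness of the closed cell and linearity of the objective), and back-solve for $\bar u^{0},\ldots,\bar u^{N-2}$ stage-coupled with the path, using the consistency of the partition to ensure $\mathcal{F}_{\mc E}(\mc E^t,\bar u^t)=\mc E^{t+1}$. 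If $\bar u\in\mathcal{D}$, set $\tilde u=\bar u$; otherwise $\bar u^t$ sits on the boundary of some cell where a link destined to fail along the path has flow exactly at its capacity, and I would perturb $\bar u^t$ into the relative interior of the corresponding strictly open cell defined in \eqref{eq:partition-def}. Continuity of the DC flow map \eqref{eq:flow-susceptance-relationship} and linearity of the objective then control the resulting gap $\mathbb{J}_N(\mc E^0,\{p^0\})-\trans{\sdsign}\tilde u^{N-1}$.

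The main obstacle will be coordinating these stage-wise perturbations so that $\tilde u^{t+1}$ remains admissible with respect to the perturbed predecessor $\tilde u^{t}$ rather than $\bar u^{t}$. I would handle this by an induction in $t$ that chooses perturbation magnitudes $\delta_t$ small enough so that $\cube{\tilde u^t}\cap\mathcal{B}_{\mc E^{t+1}}$ stays in a prescribed neighborhood of $\cube{\bar u^t}\cap\mathcal{B}_{\mc E^{t+1}}$, and accumulates the $\delta_t$'s into the overall $\epsilon$-bound via the norm of $\sdsign$ and the operator norm of the linear flow map. A useful sanity-check is that, since $\bar u$ lies in the closure and the strict inequalities in \eqref{eq:partition-def} cut out a nonempty relatively open polyhedron whose closure contains $\bar u^t$ (else $U^{t,*}$ itself would be empty and could not have appeared on the optimal path), there is always room to perform such a perturbation.
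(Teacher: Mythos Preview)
Your proposal is correct and follows essentially the same approach as the paper. Both invoke Theorem~\ref{thm:equivalence} to identify $\mathbb{J}_N(\mc E^0,\{p^0\})$ with $J_N(\mc E^0,p^0)$, then pick a maximizer $\bar u^{N-1}$ of the terminal LP on the closure of the optimal aggregated cell and construct an admissible sequence by perturbing backward into the (relatively open) cells $U^{t,*}$ while maintaining $\tilde u^{t+1}\in\cube{\tilde u^t}$.

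Two minor remarks. First, your opening paragraph, which reduces the claim to the $\epsilon$-characterization of $\sup_{u\in\mathcal{D}}\trans{\sdsign}u^{N-1}$ by unrolling \eqref{eq:value-iteration}, is a clean non-constructive argument that the paper does not spell out; it already suffices for the statement, and the constructive part is then supplementary. Second, your $\epsilon$-accounting is slightly over-engineered: since the objective depends only on $\tilde u^{N-1}$, only the perturbation at stage $N-1$ contributes to the gap, and the earlier perturbations need merely preserve admissibility. The paper exploits this by fixing $\tilde u^{N-1}\in U^{N-1,*}\cap M(u^{N-1,*},\epsilon/|\mc V_l|)$ first and then choosing each $\tilde u^{t}$ in a ball of radius $\|u^{t+1,*}-\tilde u^{t+1}\|_2$ around $u^{t,*}$, working backward; no flow-map operator norms enter.
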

\begin{proof}
For brevity, in this proof, we let $\mathbb{J}_N(\mc E^0,\{p^0\}) \equiv \mathbb{J}_N(\mc E^0,p^0)$.
Theorem~\ref{thm:equivalence} implies that \( \mathbb{J}_{N}( \mathcal{E}^{0}, p^{0}) = J_N(\mc E^0,p^0) \ge \trans{\sdsign} u^{N-1} \) for all \( u \in \mathcal{D} \). Therefore, we only prove the second inequality. 

Let  \( U^{*} \) be an optimal aggregated control sequence associated with computing $\mathbb{J}_N(\mc E^0,p^0)$ in \eqref{eq:value-iteration-aggregated}, and let \( \mathcal{E}^{*} \) be the induced active link set sequence (recall \( U^{N-1, *}= U( \mathcal{E}^{N-1, *}, U^{N-2, *}, \zerobf) \)). Let \( u^{N-1, *} \) be a maximizer to \eqref{opt:lp-redispatch-new-aggregated} for \( \mathbb{J}_{1}( \mathcal{E}^{N-1,*}, U^{N-2,*}) \). Then, \( u^{N-1, *} \in \cl{U^{N-1, *}} \) and \( \mathbb{J}_{N}( \mathcal{E}^{0}, p^{0}) = \trans{\sdsign} u^{N-1, *} \). We now show that, for arbitrary \( \epsilon >0 \), there exists  \( \tilde{u} \in \mathcal{D} \) such that \( \trans{\sdsign} \tilde{u}^{N-1} \ge  \trans{\sdsign} u^{N-1, *} - \epsilon\). 

Let \( M(u, \epsilon) \) be the open ball centered at \( u \in \real^{ \mathcal{V}} \) with radius \( \epsilon \).  Since \( u^{N-1, *} \in \cl{U^{N-1, *}} \), \( U^{N-1, *} \cap M(u^{N-1, *}, \epsilon/| \mathcal{V}_{l}|) \neq \emptyset \) for every \( \epsilon >0 \). It is then possible to pick \( \tilde{u}^{N-1} \in U^{N-1, *} \cap M(u^{N-1, *}, \epsilon/| \mathcal{V}_{l}|) \) such that \( \tilde{u}^{N-1} \neq u^{N-1, *} \) and \(  \trans{\sdsign} \tilde{u}^{N-1} > \trans{\sdsign} u^{N-1, *} - \epsilon  \). It is now sufficient to show that there exist \( \tilde{u}^{0}, \tilde{u}^{1}, \ldots, \tilde{u}^{N-2} \) such that \( \tilde{u}^{t+1} \in \cube{\tilde{u}^{t}} \) and \( \tilde{u}^{t} \in U^{t, *} \) for all \( 0 \le t \le N-2 \). We provide details for \( \tilde{u}^{N-2} \); the reasoning for \( \tilde{u}^{N-3}, \ldots, \tilde{u}^{0} \) follows along the same lines. 

Since \( u^{N-1, *} \in U( \mathcal{E}^{N-1, *}, \cl{U^{N-2,*}}) \), there exists \( u^{N-2, *} \in \cl{U^{N-2,*}} \) such that \( u^{N-1, *} \in \cube{u^{N-2, *}} \). Hence, we can pick \( \tilde{u}^{N-2}  \in U^{N-2,*} \cap M(u^{N-2, *}, \| u^{N-1, *} - \tilde{u}^{N-1} \|_{2})\) so that \(  \tilde{u}^{N-2}  \neq u^{N-2, *} \) and \( \tilde{u}^{N-1} \in \cube{\tilde{u}^{N-2} }\), where the special choice of \(  \tilde{u}^{N-1} \neq  u^{N-1, *}  \) ensures that \( M(u^{N-2, *}, \| u^{N-1, *} - \tilde{u}^{N-1} \|_{2}) \) has positive radius. 
\end{proof}

The proof of Proposition~\ref{prop:close-solution} implies that, in order to find \( u^{*} \), it remains to solve for \( u^{N-1, *} \) from \eqref{opt:lp-redispatch-new-aggregated}. 
The next result implies that \eqref{opt:lp-redispatch-new-aggregated} is a linear program by showing that, for every \( (\mathcal{E}, P) \), the set \( U( \mathcal{E}, P) \) is a polytope. 
\begin{lemma}
\label{lem:polyhedron-partition}
Consider a network with initial state $(\mc E^0,p^0)$, link weights $w \in \preal^{\mc E^0}$ and link capacities $c \in \preal^{\mc E^0}$. For every aggregated state $(\mc E^t,P^t)$, $t \in [N] \cup \{0\}$, induced by the consistent partition in \eqref{eq:partition-def}, both \( P^t \) and \( U( \mathcal{E}^t, P^t) \) are polytopes. 
\end{lemma}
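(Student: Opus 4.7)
The plan is to proceed by induction on $t \in [N] \cup \{0\}$, establishing at each step that both $P^t$ and $U(\mc E^t, P^t)$ are polytopes. The base case $t=0$ is immediate: $P^0 = \{p^0\}$ is a singleton, and $U(\mc E^0, P^0) = \cube{p^0} \cap \mc B_{\mc E^0}$ is the intersection of the bounded box $\cube{p^0}$ (from \eqref{def:control-space}) with the affine subspace $\mc B_{\mc E^0}$ defined by the balance equations \eqref{eq:load-balance}, hence a polytope.

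For the inductive step, assume $U(\mc E^t, P^t)$ is a polytope. The aggregated dynamics \eqref{eq:cascade-dynamics-aggregated-new} together with \eqref{eq:partition-def} give $P^{t+1} = U^t = U(\mc E^t, P^t, \beta)$ for some $\beta \in \{1, 0, -1\}^{\mc E^t}$. Adopting the closed variant from Remark~\ref{rem:aggregation-is-decomposition}, and noting that $f(\mc E^t, \cdot)$ is linear in $u$ by \eqref{eq:flow-susceptance-relationship}, the constraints distinguishing $U(\mc E^t, P^t, \beta)$ from $U(\mc E^t, P^t)$ are a finite collection of weak linear inequalities in $u$. Hence $P^{t+1}$ is a polytope.

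It remains to show $U(\mc E^{t+1}, P^{t+1}) = \mc B_{\mc E^{t+1}} \cap \cube{P^{t+1}}$ is a polytope. The first factor is a polyhedron defined by finitely many linear equalities. The essential challenge is the second factor, $\cube{P^{t+1}} = \cup_{p \in P^{t+1}} \cube{p}$, which is a priori an uncountable union of boxes. I plan to resolve this via an auxiliary-variable (projection) argument. I would first establish, by a parallel induction, a \emph{sign invariance} property: every $p \in P^t$ satisfies $\sign(p^0_v) p_v \ge 0$ for all $v \in \mc V$. This holds because admissibility forces $\sign(p^t_v) u^t_v \ge 0$ at every stage and $p^{t+1} = u^t$, so the sign pattern of $p^0$ is preserved (modulo zero entries). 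With this sign compatibility in hand, for any $p \in P^{t+1}$ the defining inequalities of $\cube{p}$ can be written as $0 \le \sign(p^0_v) u_v \le \sign(p^0_v) p_v$, which are jointly linear in $(u, p)$. Consequently the lifted set
\begin{equation*}
Q := \left\{ (u,p) \in \real^{\mc V} \times \real^{\mc V} \;\middle|\; p \in P^{t+1},\; 0 \le \sign(p^0_v) u_v \le \sign(p^0_v) p_v,\; \forall v \in \mc V \right\}
\end{equation*}
is a polytope, and $\cube{P^{t+1}}$ is its projection onto the $u$-coordinates, a polytope by Fourier--Motzkin elimination. Intersecting with $\mc B_{\mc E^{t+1}}$ yields a polyhedron, and boundedness is inherited from the enclosure $\cube{P^{t+1}} \subseteq \cube{p^0}$.

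The main obstacle is precisely this last step: showing that $\cube{P^{t+1}}$ is polyhedral despite being an uncountable union of boxes. The sign-invariance lemma combined with the projection trick is the crux of the argument; without sign invariance the inequalities describing $\cube{p}$ would depend nonlinearly on $p$ (through $\sign(p_v)$ and $|p_v|$), and the joint linearization in $(u,p)$ that enables Fourier--Motzkin elimination would not be available.
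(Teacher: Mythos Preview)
Your argument is correct and follows the same inductive skeleton as the paper's proof. The only substantive difference is that the paper disposes of the step ``$U(\mathcal{E}^{t+1},P^{t+1})$ is a polytope'' in a single line (``it follows from the definition in \eqref{eq:feasible-load-set}''), whereas you actually justify it: you establish the sign-invariance property (which the paper states only afterward, in Remark~\ref{rem:in-orthant}) and then use a lifting plus Fourier--Motzkin elimination to show $\cube{P^{t+1}}$ is a polytope. The paper's implicit justification for this step is the sweep construction of Section~\ref{sec:cts-control-space}, which builds $\cube{P}$ algorithmically; your projection argument is a cleaner, non-constructive alternative that yields the same conclusion without the machinery of incidence graphs.
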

\begin{proof}
The claim is proved by induction. It is easy to see that \( P^{0}= \{p^{0}\} \) and \( U( \mathcal{E}^{0}, P^{0} ) = U( \mathcal{E}^{0}, p^{0}) \) are polytopes. Suppose  \( P^{t} \) and \( U( \mathcal{E}^{t}, P^{t}) \) are polytopes for some $t \in [N-1] \cup \{0\}$. It is sufficient to show that, for arbitrary aggregated control action \( U\in \mathbb{U}( \mathcal{E}^{t}, P^{t}) \), the resulting \( P^{t+1} \) and \( U( \mathcal{E}^{t+1}, P^{t+1}) \) are polytopes. \( P^{t+1} = U^t = U( \mathcal{E}^{t}, P^{t}, \beta) \) for some \( \beta \in \{-1,0,+1\}^{\mc E^t}\) (cf. \eqref{eq:partition-def}. Combining this with the induction assumption that \( U( \mathcal{E}^{t}, P^{t})\) is a polytope, we get that \( P^{t+1} \) is a polytope. It follows from the definition in \eqref{eq:feasible-load-set} that \( U( \mathcal{E}^{t+1}, P^{t+1}) \) is a polytope as well. 
\end{proof}

\begin{remark}
\label{rem:in-orthant}
\begin{enumerate}
\item Combined with the definition of \( \cube{P} \), the proof of Lemma~\ref{lem:polyhedron-partition} also implies that, for every \( ( \mathcal{E}, P) \), both \( P \) and \( U( \mathcal{E}, P) \) are contained in a closed orthant of \( \real^{ \mathcal{V}_{l} } \). 
\item The aggregated tree search based on \eqref{eq:value-iteration-aggregated} can be interpreted as a systematic way to decompose the nonconvex feasible set \( \mathcal{D}  \subset \real^{N \times | \mathcal{V}| } \) in \eqref{eq:control-formulation-new}-\eqref{eq:feasible-set-def} into a finite number of subsets. Each subset is a polytope and corresponds to a topology sequence and aggregated control sequence.  For example, the subset corresponding to \( (U^{0}, \ldots, U^{N-1}) \)  is $\Pi_{t=0}^{N-1} U^{t}$. Similar to \eqref{eq:domain-decomposition}, the optimal value of \eqref{eq:control-formulation-new} is equal to the maximum among the optimal values of multiple subproblems associated with the subsets. Each subproblem is a linear program of the form \( \sup_{u \in \Pi_{t=0}^{N-1} U^{t} } \trans{\sdsign} u^{N-1} \) and indeed coincides with \eqref{opt:lp-redispatch-new-aggregated}. 
\item The state aggregation approach allows to include running cost into the problem formulation. In that case, instead of a linear program, a dynamic program is to be solved for every \( (U^{0}, \ldots, U^{N-1}) \). 
\end{enumerate}
\end{remark} 

\subsection{Efficient Aggregated Tree Search} 
\label{sec:tree-search} 
With \eqref{eq:cascade-dynamics-aggregated-new} and \eqref{eq:partition-def} specifying how to expand nodes of aggregated network state and \eqref{eq:value-iteration-aggregated} directing the goal of search, one can then employ any classical tree search algorithm, e.g., the ones in \cite[Chap 3]{russell2009artificial}, to solve the problem. Regarding the implementation of tree search algorithms, a few remarks are in order. First,  the following relationship can be used for tree pruning in a standard branch and bound algorithm framework. 
\begin{equation*}
\mathbb{J}_{1}( \mathcal{E}, P) \le \mathbb{J}_{t}( \mathcal{E}, P) \le \max_{p\in \cl{P}} \trans{\sdsign}  p  \qquad \forall \, ( \mathcal{E}, P), \forall\, t\in [N]
\end{equation*}

Second, iterative deepening depth-first search algorithm presents several advantages for the optimal control problem. On one hand, it achieves a good balance between time and space complexities. This is of particular importance because
the number of aggregated states in the optimal control problem can be quite large. On the other hand,  the search can be stopped anytime in the process of computation while producing a feasible control action with reasonable performance. In fact, the search over the first \( t <N \) layer provides an optimal \( t \)-stage load shedding scheme. At the same time, the upper bound provides an estimate of performance gap from the optimal value when search is terminated early. 

Third, following \eqref{eq:flow-susceptance-relationship} and \eqref{eq:partition-def}, going from one node to another in the aggregated search tree involves computation of pseudo-inverse of Laplacian associated with the new active link set. Doing such a computation from scratch for every search move could be computationally expensive, particularly for large $\mc E^0$ or for large $N$. This problem can be addressed by incrementally updating pseudo-inverse of the Laplacian under link removal, e.g., see \cite{Ba.Savla.CDC16, Soltan.Loh.ea:CONES17, ba2018elements}. 

Finally, while the detailed search algorithm including pruning is standard and omitted here, its implementation with set objects, i.e., the aggregated control action \( U \in \mathbb{U}( \mathcal{E}, P) \) and the set \( U( \mathcal{E}, P) \), require additional tools. Section~\ref{sec:algorithm-section} is devoted to this particular problem.


\section{Computing Aggregation Through Arrangement of Hyperplanes}
\label{sec:algorithm-section}
The numerical implementation of \eqref{eq:value-iteration-aggregated} relies critically on proper representation of  set $U(\mc E,P)$ and its partition $\mathbb{U}(\mc E,P)$. While Lemma~\ref{lem:polyhedron-partition} characterizes an important property of these objects, in this section, we provide an algorithmic procedure for their representation. Our machinery relies on and extends tools from the domain of \emph{arrangement of hyperplanes} e.g., see \cite{edelsbrunner1987algorithms} \cite[Chapter 24]{toth2004handbook}, and \emph{polytopes}, e.g., see \cite{ziegler2012lectures} \cite{grunbaum1967convex}.

\subsection{Arrangement, Polytope and Incidence Graph}
\label{sec:arrangement}

\begin{figure}[htbp!]
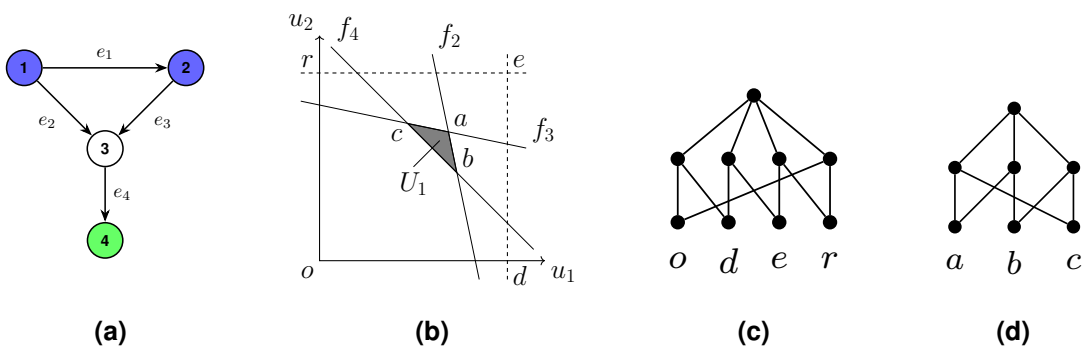

\captionsetup{width=0.9\linewidth}
\centering
\begin{subfigure}[b]{.2\linewidth}
  \centering
  \includestandalone[width=0.8\linewidth]{net-1s2d}
  \vspace{1em}
  \caption{ }
  \label{fig:net-1s2d}
\end{subfigure}
 \begin{subfigure}[b]{.3\linewidth}
  \centering
  \includestandalone[width=0.8\linewidth]{arrangement-1}
  \caption{ }
  \label{fig:arrangement-1}
\end{subfigure}
\begin{subfigure}[b]{.2\linewidth}
  \centering
  \includestandalone[width=0.83\linewidth]{incidence-graph-initial-construction}
  \caption{ }
  \label{fig: incidence-graph-initial-construction}
\end{subfigure}
 \begin{subfigure}[b]{.2\linewidth}
  \centering
  \includestandalone[width=0.68\linewidth]{incidence-graph-control}
  \caption{ }
  \label{fig:incidence-graph-control}
\end{subfigure}
\caption{\sf  (a) a network \( ( \mathcal{V}, \mathcal{E}^{0}) \) with \( \mathcal{V}_{+} = \{1, 2\} \), \( \mathcal{V}_{-}=\{4\} \), \( w= \onebf \), \( c = \trans{[10, 3, 3, 6]} \), \( p^{0} = \trans{[-5, -5,  0, 10]} \); (b) projections of \( U( \mathcal{E}^{0}, p^{0}) \) and \( \setdef{u\in \mathcal{B}_{ \mathcal{E}^{0}}}{f_{i}( \mathcal{E}^{0}, u) = c_{i}} \), \( i\in \{2, 3, 4\} \), on \( u_{1}-u_{2} \) plane; (c) incidence graph of \( U( \mathcal{E}^{0}, p^{0}) \); (d) incidence graph of \( U_{1}\). }
\label{fig:arrangement-1st-example}
\end{figure}

We start with the simple illustrative example shown in Fig.~\ref{fig:net-1s2d}. Referring to the definition of \( \cube{p^{0}} \), there are three non-trivial components of $p^0$. Taking into account the additional constraint imposed by $\mc B_{\mc E}$, referring to \eqref{def:control-space}, the set of admissible control actions $U(\mc E^0,p^0)$ can be completely understood in terms of its two dimensional projection, say on the $u_1-u_2$ plane. In Fig.~\ref{fig:arrangement-1}, the box \( oder \) and the point $e$ correspond to the projections of $U(\mc E^0,p^0)$ and $p^0$, respectively, and the solid lines labeled by \( f_{2}, f_{3}, f_{4} \) correspond to the projections of capacity constraints associated with the links $e_2$, $e_3$ and $e_4$. The flow capacity constraint for \( e_{1} \) is ignored here because it is satisfied by all \( u\in U( \mathcal{E}^{0}, p^{0}) \) and hence irrelevant for the problem. Lines \( f_{2} \), \( f_{3} \) and \( f_{4} \) dissect the box \( oder \) into seven two dimensional pieces. Each piece, e.g., the triangle $abc$ denoted by $U_1$, is an aggregated control action, e.g., \( U_{1} = U( \mathcal{E}^{0}, p^{0}, \trans{[0, 0, 0, 1]}) \).  These seven pieces constitute the partition $\mathbb{U}(\mc E^0,\{p^0\})$. 
 
%

In general, a finite collection $ \mathcal{H} $ of hyperplanes in $ \real^{d} $ dissects $ \real^{d} $ into finitely many connected pieces of various dimensions. The collection of these pieces is called the \emph{arrangement}, denoted by $ \mathcal{A}( \mathcal{H} ) $,  induced by $ \mathcal{H}$ 
, and each piece is called a \emph{face}, denoted by \( \Gamma \), of the arrangement. The dimension of a face is the dimension of its affine hull; a \( k \) dimensional face is called a \( k \)-face, denoted by \( \Gamma^{k} \). For convenience, \( 0 \)-face, \( 1 \)-face,  \( (d-2) \)-face, \( (d-1) \)-face and \( d \)-face are, respectively, referred to as \emph{vertex}, \emph{edge}, \emph{ridge}, \emph{facet} and \emph{cell}\footnote{In some literature, \emph{cell} is used to refer to the connected pieces and \emph{face} is used exclusively for the \( 2 \)-face. In this paper, we adopt the terminology convention in \cite{edelsbrunner1987algorithms} and use \emph{cell} to denote \( d \)-face exclusively in \( \real^{d} \).}. We call two faces \emph{incident} if one is contained in the boundary of the other and if the difference in their dimensions is one. In a pair of incident faces, the lower (or higher) dimensional face is called the subface (or superface) of the other. In Fig.~\ref{fig:arrangement-1}, the three solid lines (i.e., hyperplanes) dissect box \( oder \) into seven cells, nine edges (or facets) and three vertices (or ridges). As indicated by this example, in the setting of this paper, for every state \( ( \mathcal{E}, P) \), the capacity constraints, balanced condition (captured by $\mc B_{\mc E}$), and load shedding requirement (captured by $\cube(P)$) form the collection of hyperplanes. We are interested in the substructure of the arrangement of these hyperplanes inside \( U( \mathcal{E}, P) \). The closure of each facet in the substructure corresponds to an aggregated control action and the collection of these facets corresponds to \( \mathbb{U}( \mathcal{E}, P) \)\footnote{Fig.~\ref{fig:arrangement-1} shows the projection of the arrangement in \( \real^{3} \) onto the \( u_{1}-u_{2} \) plane. Each cell in Fig.~\ref{fig:arrangement-1} is the projection of a facet of the arrangement.}. 

The closure of a bounded face in the arrangement is a polytope, or polytope for short. We use the same letter \( P \), as in the aggregated state, to denote a general polytope for simplicity, because every aggregated state is a polytope, as shown in Lemma~\ref{lem:polyhedron-partition}. Formally, a \emph{polytope} is a point set \( P \subset \real^{d} \) that can be presented either as a convex hull of a finite number of points in \( \real^{d} \) or the bounded intersection of finite number of closed half spaces in \( \real^{d} \) \cite{ziegler2012lectures}. The same notion of face, as in an arrangement, is used for a polytope \( P\subset \real^{d} \) \footnote{The notion of \emph{face} is slightly different for an arrangement and a polytope. While the former considers a face as an open set, the latter treats a face as a closed set. This difference does not affect the results in this section and hence ignored.} and furthermore, a face of \( P \) can be described as \( \Gamma = P \cap \setdef{x\in \real^{d}}{\trans{\pi} x = \pi_{0}} \), where the linear inequality \( \trans{\pi} x \le \pi_{0} \) must be satisfied for all \( x \in P \).
We call \( P \subset \real^{d} \) full dimensional if its dimension is \( d \). For a  full dimensional polytope, the affine hull of its facet  \( \Gamma_{i}^{d-1} \) is a hyperplane, denoted by \( H_{i} = \setdef{x\in \real^{d}}{\trans{(\pi^{i})} x = \pi^{i}_{0}} \) and referred as its \emph{defining hyperplane}. As a convention, the direction of \( \pi^{i} \) for \( H_{i} \) is chosen to point outwards from \( P \). 

The geometry of the arrangement of hyperplanes and polytopes is difficult to comprehend, especially in high dimensions. Because of this, they are represented using \emph{incidence graph} (also called the \emph{facial lattices} or \emph{face lattices}). The incidence graph of an arrangement or a polytope contains the incidence relationship between various faces. It is a layered (undirected) graph whose nodes have a one-to-one correspondence with faces of the arrangement or polytope, and an edge exists between two nodes if and only if the corresponding faces are incident. All the nodes corresponding to faces of the same dimension constitute a layer. 
We place the layer corresponding to vertices at the bottom, and the layer corresponding to cells on the top. 
For example. Fig.~\ref{fig: incidence-graph-initial-construction} shows the incidence graph of the polytope \( oder \) (or projection of \( U( \mathcal{E}^{0}, p^{0})\)). The single node at the top layer corresponds to \( oder \) itself, the four nodes in the middle layer correspond to the four edges $or$, $re$, $ed$ and $do$, and the four nodes in the bottom layer correspond to the four vertices $o$, $r$, $e$ and $d$. The edges between these layers correspond to the incidence relation between the faces, as shown in Figure~\ref{fig:arrangement-1}. Similarly, \ref{fig:incidence-graph-control} shows the incidence graph of the polytope \( acb \) (or aggregated control action \( U_{1} \)).

Furthermore, we have the following remark on the auxiliary information required for storing the incidence graph of an arrangement or a polytope. 
\begin{remark}
When implemented, the incidence graph is usually associated with some auxiliary information that enables numerical implementation of geometric operations such as determining if a hyperplane intersects with a face, or finding the intersection of a hyperplane and an edge. Further details can be found, e.g., in \cite{de2008computational} and \cite{edelsbrunner1987algorithms}. While several choices for auxiliary information are possible, in this paper, we use the following: analytical expressions for hyperplanes, coordinate for vertices, and for every face, the mean of the coordinates of the vertices contained in it. We shall not explicitly  mention this auxiliary information in algorithms in subsequent sections. 
\end{remark}

\subsection{Constructing the Incidence Graph of $U(\mc E,P)$ and $\mathbb{U}(\mc E,P)$}
\label{subsec:arrangement} 
We recall from Section~\ref{sec:state-aggregation-main} that the implementation of \eqref{eq:value-iteration-aggregated} relies on an efficient procedure to construct the representations (i.e., incidence graphs) of $U(\mc E,P)$ and $\mathbb{U}(\mc E,P)$ from that of \( P \). Such a procedure starts with the incidence graph of \( P^{0} = \{p^{0}\} \), which is known trivially, and is to be repeatedly invoked at every iteration in \eqref{eq:value-iteration-aggregated}. The procedure consists of two steps: (I) construct the incidence graph of \( \cube{P} \) from that of \( P \), and (II) construct the incidence graph of $U(\mc E,P)$ and $\mathbb{U}(\mc E,P)$ from that of \( \cube{P} \). The key ingredient in step (II) is a sub-procedure to update incidence of graph of \( \cube{P} \) upon addition of hyperplanes corresponding to \( \mathcal{B}_{ \mathcal{E}} \) (cf. \eqref{eq:feasible-load-set}) and flow capacity constraints to get \( U(\mathcal{E}, P) \) and \(\mathbb{U}(\mc E,P)\) respectively. Implementation of such a sub-procedure exists in well-known algorithms for constructing arrangement of an arbitrary set of hyperplanes, e.g., in\cite{edelsbrunner1986constructing} and \cite[Chapter 7]{edelsbrunner1987algorithms}. On the other hand, to the best of our knowledge, a systematic algorithmic description to execute step (I) does not exist in the literature. The purpose of Section~\ref{sec:cts-control-space} is to address this deficiency.

\begin{remark} 
\label{rem:arrangement-construction}
\leavevmode
\begin{enumerate} 
\item The algorithms in \cite{edelsbrunner1986constructing} and \cite[Chapter 7]{edelsbrunner1987algorithms} have time complexity \( \Theta(| \mathcal{H}|^{d}) \) for constructing a general positioned arrangement of hyperplanes in \( \mathcal{H} \) in a \( d \) dimensional affine space. Note for a connected network with active link set \( \mathcal{E} \), \( U( \mathcal{E}, P) \) is contained in the \( | \mathcal{V}_{l}| -1\) dimensional affine space \( \mathcal{B}_{ \mathcal{E}} \cap \setdef{p \in \real^{ \mathcal{V}}}{p_{i} = 0, \forall\, i\in \mathcal{V}\setminus \mathcal{V}_{l}} \). 
\item For a network with a single supply-demand pair, the hyperplanes and aggregated controls reduce to points and contiguous intervals that can be represented by two numbers. In this case, \( \cube(P) \) can be computed in constant time: e.g., for \( P = (p^{l}, p^{u}] \subset \preal \), \( \cube(P) = [0, p^{u}]  \); and the incidence graph of \( \mathbb{U}(\mc E, P) \) at every state \( ( \mathcal{E}, P) \) can be computed in linear time with respect to the number of infeasible links. 
\item Recalling that \( \Drelone_i \) defined in \eqref{eq:Drel-def} is for a (sub-)network $\mc G_i$ between a single supply-demand pair, the procedure in 2) of this remark can be used for its construction. \( \Drelone_i \) is the set of feasible aggregated control sequences without monotonicity constraint. Therefore, when constructing the incidence graph for the set \( \mathbb{U}(\mathcal{E}, P) \) of aggregated controls at state \( ( \mathcal{E}, P) \), instead of building upon $ \cube{P}$, one needs to use $[-c^{\text{min-cut}}, c^{\text{min-cut}}]$, where $c^{\text{min-cut}}$ is the min-cut capacity of \( \mathcal{G}_{i} \). Similarly, when constructing \( \Drelone_i \) for constant control (as used in Section~\ref{sec:invariance}), one needs to use $ P$ rather than \( \cube{P} \).
\end{enumerate}
\end{remark}

\subsection{Constructing the Incidence Graph of $\cube{P}$ from the Incidence Graph of $P$} 
\label{sec:cts-control-space} 
Remark~\ref{rem:in-orthant} implies that it is sufficient to focus on \( P \subset \preal^{d} \). We first consider \( \cube{p^{0}}  \) for \( p^{0} \in \preal^{d} \). Since the dimensions corresponding to \( p_{i}^{0} =0 \) can be ignored, we assume \( p^{0} \in \spreal^{d} \) without loss of generality. In this case, \( \cube{p^{0}} \subset \preal^{d} \) is a hypercube and its incidence graph can be obtained straightforwardly by Lemma~\ref{lem:arrangement-cube}, whose proof is omitted. Fig.~\ref{fig: incidence-graph-initial-construction} shows an example in \( \real^{2} \). 
\begin{lemma}
\label{lem:arrangement-cube}
For \( p^{0}\in \spreal^{d} \), let $\map{\alpha}{\cube p^0}{\{-1, 0, 1\}^{d} }$ be defined as: $\alpha_i(x)=-1$ if $x_i=0$, $\alpha_i(x)=0$ if $x_i \in (0,p_i^0)$, and $\alpha_i(x)=1$ if $x_i=p_i^0$. 
%
Then,
\begin{enumerate}[(i)]
\item every \( \tilde{\alpha} \in \{-1, 0, 1\}^{d} \) is associated with a \( (d-|\tilde{ \alpha}|_{1}) \)-face \( \Gamma( \tilde{\alpha} ) : = \cl{\setdef{x\in \cube{p^{0}}}{ \alpha(x) = \tilde{\alpha} }} \) of \( \cube{p^{0}} \);
\item two faces \( \Gamma( \alpha^{1}) \) and \( \Gamma( \alpha^{2}) \)  of $\cube p^0$ are incident if \( \alpha^{1} \) and \( \alpha^{2} \) are equal except for one component which equals zero in one among \( \alpha^{1} \) and \( \alpha^{2} \).
\end{enumerate}
\end{lemma}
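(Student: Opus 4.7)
The plan is to exploit the product structure of the hypercube $\cube{p^{0}}=\prod_{i=1}^{d}[0,p_{i}^{0}]$. Since the faces of a Cartesian product of polytopes are exactly Cartesian products of faces of the factors, and each one-dimensional interval $[0,p_{i}^{0}]$ has precisely three faces, namely the endpoints $\{0\}$ and $\{p_{i}^{0}\}$ together with the whole interval $[0,p_{i}^{0}]$, the faces of $\cube{p^{0}}$ are in bijection with $\{-1,0,1\}^{d}$ via the choice of face in each coordinate.

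For part (i), given $\tilde{\alpha}\in\{-1,0,1\}^{d}$ I would define
\[
F(\tilde{\alpha}):=\prod_{i=1}^{d}I_{i}, \quad I_{i}=\begin{cases}\{0\}, & \tilde{\alpha}_{i}=-1,\\ \{p_{i}^{0}\}, & \tilde{\alpha}_{i}=1,\\ [0,p_{i}^{0}], & \tilde{\alpha}_{i}=0,\end{cases}
\]
and verify three things. First, $F(\tilde{\alpha})$ is a face of $\cube{p^{0}}$: the linear functional $x\mapsto \trans{\tilde{\alpha}}x$ attains its maximum over $\cube{p^{0}}$ (which equals $\sum_{i:\tilde{\alpha}_{i}=1}p_{i}^{0}$) precisely on $F(\tilde{\alpha})$, giving a supporting hyperplane of the required form. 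Second, $\dim F(\tilde{\alpha})=d-|\tilde{\alpha}|_{1}$, since each nonzero component of $\tilde{\alpha}$ collapses one factor to a point while each zero component retains a one-dimensional factor. Third, $F(\tilde{\alpha})=\Gamma(\tilde{\alpha})$: the set $\{x\in\cube{p^{0}}:\alpha(x)=\tilde{\alpha}\}$ is precisely the relative interior of $F(\tilde{\alpha})$ (the free coordinates lie in the open intervals $(0,p_{i}^{0})$), and its closure is $F(\tilde{\alpha})$.

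For part (ii), I would first establish the containment criterion $\Gamma(\alpha^{1})\subseteq\Gamma(\alpha^{2})$ if and only if, for every $i$, either $\alpha^{2}_{i}=0$ or $\alpha^{1}_{i}=\alpha^{2}_{i}$; this is immediate from the product description of the faces. Incidence additionally requires that the dimensions differ by one, which by part (i) means $\bigl||\alpha^{1}|_{1}-|\alpha^{2}|_{1}\bigr|=1$. Combining these, without loss of generality $\Gamma(\alpha^{1})\subset\Gamma(\alpha^{2})$, so $\alpha^{1}$ and $\alpha^{2}$ must agree wherever $\alpha^{2}$ is nonzero and differ at exactly one further coordinate where $\alpha^{2}$ is zero and $\alpha^{1}$ is $\pm 1$, which is precisely the condition stated.

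The only step that is not mechanical bookkeeping is identifying the relative interior in part (i) so that the closure in the definition of $\Gamma(\tilde{\alpha})$ yields the full product $F(\tilde{\alpha})$; everything else follows transparently once the product decomposition is invoked.
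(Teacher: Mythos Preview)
Your argument is correct. The paper omits the proof of this lemma entirely (it states that the incidence graph of the hypercube ``can be obtained straightforwardly'' and that the proof is omitted), so there is no approach to compare against; your product-structure derivation with the supporting functional $x\mapsto\trans{\tilde{\alpha}}x$ is the natural way to establish this standard fact about hypercubes.
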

\begin{remark}
Lemma~\ref{lem:arrangement-cube} (i) describes a procedure to enumerate all the nodes in the incidence graph of $\cube p^0$, in terms of all vectors in $\{-1,0,1\}^d$, whereas (ii) specifies how to add edges to the incidence graph. 
\end{remark}

For a general polytope $P \subset \preal^{d}$, we present a sequential procedure to construct the incidence graph of \( \cube{P} \) from that of \( P \). For this purpose, we define the \emph{projection} and \emph{sweep} of a polytope \( P\subset \preal^{d}  \) in direction \( \unit_{k} \), \( k\in \until{d} \): 
\begin{align}
\label{eq:proj-def}
\proj_{k}(P) :=& \setdef{ p -  p_{k}\unit_{k}}{ p\in P } \\
\label{eq:sweep-def}
\swp_{k}(P) :=&  \setdef{ p - \theta_{k} p_{k}\unit_{k}}{ p\in P, \theta_{k} \in [0, 1] }
\end{align}
It is straightforward that \( P \subseteq \swp_{k}(P) \) and \( \proj_{k}{P}\subseteq \swp_{k}(P) \). In fact, \( \swp_{k}(P) \) is the trace of \( P \) projecting to \( \proj_{k}{P} \) in the direction of \( \unit_{k} \) and therefore, it is also a polytope in \( \preal^{d} \). One can again apply sweep on \( \swp_{k}(P) \) along \( \unit_{i} \) for some \( i\neq k \) and get \( \swp_{i}(\swp_{k}(P)) =  \setdef{ p - \theta_{1} p_{k}\unit_{k} - \theta_{2} p_{i}\unit_{i}}{p\in P, (\theta_{1}, \theta_{2}) \in [0, 1]^2}  \). This motivates us to define sweep for an index set \( I \subseteq \until{d}\) as 
\[ \swp_{I} (P):=  \setdef{ p - \sum_{k\in I} \theta_{k} p_{k}\unit_{k}}{ p\in P, \theta_{k} \in [0, 1]\, \forall k\in I }.  \]
With this definition, \( \cube{P} = \swp_{\until{d}}(P) \) can be obtained by recursively applying sweep on \( P \), e.g., 
\[ \swp_{\until{d}} (P) = \swp_{1}(\swp_{2}(\ldots \swp_{d}(P))). \]
Therefore, in order to obtain \( \cube{P} \), it is sufficient to focus on constructing \( \swp_{k}(P) \) from \( P \) for a given \( k\in \until{d} \). 

Let \( \bar{H}:= \setdef{x\in \real^{d}}{x_{k} = 0} \) and \( \bar{H}^{+} := \{x\in \real^{d}\, |\, x_{k} \ge 0 \} \).\footnote{We do not show subscript $k$ for brevity in notation.} 
For a given polytope \( P \subset \bar{H}^{+}  \) and \( k\in \until{d} \),  \( \swp_{k}(P) \) relates to projection between two affine spaces: \( \aff P \subset \real^{d} \) and \( \aff \proj_{k} P \subset \bar{H} \). We differentiate between the following two scenarios based on the difference in dimensions of these two affine spaces: (I) 
\(\dim P = \dim (\proj_{k}{P}) \); and (II) 
\( \dim P = \dim(\proj_{k} P) +1 \). Scenario I occurs when \( P \)  is contained in a hyperplane that is not orthogonal to \( \bar{H} \). 
In this case, there is a one-to-one correspondence between the points in \( P \) and \( \proj_{k} P \). Scenario II occurs when either $\dim{P}=d$, or every hyperplane containing $P$ is orthogonal to \( \bar{H} \). 
The two scenarios are illustrated in Figure~\ref{fig:sweep-2cases} for \( \real^{2} \) and \( k = 2 \).

\begin{figure}[htbp]
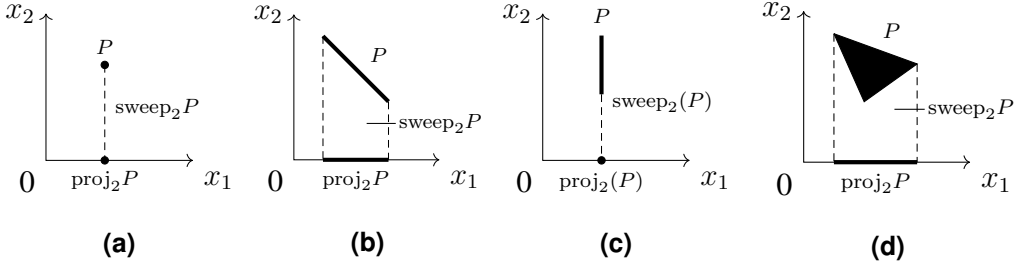

   \centering
      \begin{subfigure}{.2\linewidth}
  \centering
  \includestandalone[width=\linewidth]{sweep-2cases-1a}
  \caption{}
  \label{fig:sweep-2d-case1a}
\end{subfigure}%
   \begin{subfigure}{.2\linewidth}
  \centering
  \includestandalone[width=\linewidth]{sweep-2cases-1b}
  \caption{}
  \label{fig:sweep-2d-case1b}
\end{subfigure}%
\begin{subfigure}{.2\linewidth}
  \centering
  \includestandalone[width=\linewidth]{sweep-2cases-2a}
  \caption{}
  \label{fig:sweep-2d-case2a}
\end{subfigure}
\begin{subfigure}{.22\linewidth}
  \centering
  \includestandalone[width=\linewidth]{sweep-2cases-2b}
  \caption{}
  \label{fig:sweep-2d-case2b}
\end{subfigure}
\caption{\sf Two possible scenarios for projection: (a) and (b) show the projection onto a space of the same dimension; and (c) and (d) show projection onto a space of lower dimension.}
\label{fig:sweep-2cases}
\end{figure}


\subsection*{\underline{Scenario I}} 
In this scenario, \( \proj_{k}(P) \) is \emph{affinely isomorphic}
\footnote{Two polytopes \( P_{1} \subseteq \real^{d_{1}} \) and \( P_{2} \subseteq \real^{d_{2}} \) are \emph{affinely isomorphic} to each other if there exists an affine and bijection map between them.}
to \( P \). Therefore, its incidence graph is identical to that of $P$. The following result relates the incidence graph of \( \swp_{k}(P) \) to that of \( P \) and \( \proj_{k}(P) \). 
\begin{proposition}
\label{prop:sweep-case-1}
Consider an \( n \) dimensional polytope \( P \subset \setdef{x\in \real^{d}}{x_{k} > 0 } \), \(0\le n \le d-1 \). If there exists a hyperplane containing $P$ which is not orthogonal to \( \bar{H} \), then: 
%
(i)  \( \swp_{k}(P) \) is an \( (n+1) \) dimensional polytope; (ii) an \(l\)-face of \( \swp_{k}(P) \) is either an \(l\)-face of \( P \) or of \( \proj_{k}{P} \), or it is \( \swp_{k}{ \Gamma^{l-1}} \) for some \( (l-1) \)-face \( \Gamma^{l-1} \) of \( P \); and (iii) each face of \( P \) and of \(  \proj_{k}{P} \) is a face of \( \swp_{k}(P) \), and \( \swp_{k}{\Gamma} \) is a face of \( \swp_{k}{P} \) for every face \( \Gamma \) of \( P \). 
\end{proposition}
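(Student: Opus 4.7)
Proof proposal for Proposition~\ref{prop:sweep-case-1}.

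My plan is to first observe that $\swp_k(P)=\conv(P\cup \proj_k(P))$, which follows immediately from \eqref{eq:sweep-def} and convexity of $P$. Next, I will exploit the hypothesis as follows. Let $H'=\{x:\nu\cdot x=c\}$ be a hyperplane with $P\subset H'$ and $\nu_k\neq 0$. If two points $p,q\in P$ satisfied $q-p\in\spn{\unit_k}$, then $\nu\cdot(q-p)=\nu_k(q_k-p_k)=0$ would force $p=q$. Hence $\proj_k$ is injective on $P$ and so $\proj_k(P)$ is affinely isomorphic to $P$ (and thus has dimension $n$). Moreover, because $\unit_k$ is not in the direction subspace of $\aff(P)$, the affine hull $\aff(P\cup\proj_k(P))=\aff(P)+\spn{\unit_k}$ has dimension $n+1$, proving (i).

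For (ii), I will classify faces of $\swp_k(P)$ via supporting functionals. Any nonempty face $F$ arises as the set of maximizers on $\swp_k(P)$ of some linear functional $\phi$. For any $y=p-\theta p_k \unit_k$ in $\swp_k(P)$ with $p\in P$, $p_k>0$, $\theta\in[0,1]$, one has $\phi\cdot y=\phi\cdot p-\theta p_k\phi_k$. Splitting by the sign of $\phi_k$:
\begin{itemize}
\item If $\phi_k>0$, maximizing forces $\theta=0$, so $F\subset P$ and $F$ is exactly the face of $P$ where $\phi|_P$ is maximized; dimensions then agree.
\item If $\phi_k<0$, maximizing forces $\theta=1$, so $F\subset\proj_k(P)$ and $F$ is a face of $\proj_k(P)$ of the same dimension.
\item If $\phi_k=0$, then $\phi\cdot y=\phi\cdot p$ is independent of $\theta$, so $F=\swp_k(\Gamma)$ where $\Gamma$ is the face of $P$ maximizing $\phi|_P$; since the sweep of an $(l-1)$-face picks up the new $\unit_k$ direction (by the same argument as in (i), using that the supporting functional of $\Gamma$ in $P$ can also be chosen to have $\nu_k\neq 0$-style adjustment to confirm $e_k\notin\aff\Gamma$), its dimension is $l=(l-1)+1$.
\end{itemize}
This exhausts the three types in (ii).

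For (iii), I must show conversely that every face of $P$, every face of $\proj_k(P)$, and every set $\swp_k(\Gamma)$ with $\Gamma$ a face of $P$, is realized as a face of $\swp_k(P)$. Given a face $\Gamma$ of $P$ with defining functional $\psi$ (i.e., $\psi\cdot x\le\alpha$ on $P$ with equality exactly on $\Gamma$), the key trick is that since $P\subset H'$, the functional $\psi$ is equivalent on $P$ to $\psi+t\nu$ for any $t\in\real$. Because $\nu_k\neq 0$, I can choose $t$ to make the $k$-th component of the modified functional any desired real number: positive to realize $\Gamma$ as a face of $\swp_k(P)$ lying in $P$, negative to realize the corresponding face of $\proj_k(P)$, and exactly zero to realize $\swp_k(\Gamma)$ via the third case. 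Taking $\Gamma=P$ itself (with $\psi\equiv 0$, $\alpha=0$) shows $P$ and $\proj_k(P)$ are themselves facets.

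The main obstacle I anticipate is the careful dimension bookkeeping in case (iii) of (ii), namely verifying that $\swp_k(\Gamma^{l-1})$ really has dimension $l$. This requires that $\unit_k$ not lie in the direction subspace of $\aff(\Gamma^{l-1})$, which I will derive from the fact that $\Gamma^{l-1}\subset P\subset H'$ with $\nu_k\neq 0$, so the injectivity of $\proj_k$ descends from $P$ to all of its faces. A secondary subtlety is ensuring that the supporting-functional adjustment $\psi\mapsto \psi+t\nu$ does not alter the equality set on $P$; this follows because $\nu\cdot x=c$ is constant on $P\subset H'$, so only the constant $\alpha$ shifts.
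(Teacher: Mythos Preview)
Your argument is correct and essentially complete; the dimension bookkeeping you flag as a worry is handled exactly as you indicate, since every face $\Gamma$ of $P$ lies in the same hyperplane $H'$ with $\nu_k\neq 0$, so $\unit_k$ is never in the direction space of $\aff\Gamma$.

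Your route differs genuinely from the paper's. The paper does not analyze supporting functionals directly. Instead it extends $P$ past $\bar H$ by forming the prism $\hat P := P - [\zerobf,\hat\gamma\,\unit_k]$ with $\hat\gamma>\max_{x\in P}x_k$, invokes a classical result (Gr\"unbaum) on the face lattice of the Minkowski sum of a polytope with a segment not parallel to its affine hull, and then argues that $\swp_k(P)=\hat P\cap\bar H^{+}$ is combinatorially isomorphic to $\hat P$ via the explicit correspondence $\hat\Gamma\mapsto\hat\Gamma$, $\proj_k(\hat\Gamma)$, or $\hat\Gamma\cap\bar H^{+}$ according to whether $\hat\Gamma$ lies above, below, or straddles $\bar H$. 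Your approach is more self-contained: rather than citing the prism face lattice and then transporting it through a truncation, you classify faces of $\swp_k(P)$ directly by the sign of $\phi_k$ and, crucially, use the freedom $\psi\mapsto\psi+t\nu$ (available because $P\subset H'$) to realize any desired sign. The paper's argument is shorter because it outsources the combinatorics; yours is more elementary and makes the role of the hypothesis $\nu_k\neq 0$ more transparent, since that is precisely what lets you tune the $k$-th coordinate of the supporting functional at will.
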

\begin{proof}
Let $ \hat{\gamma} := 1.2 \max_{x\in P} x_{k} $. Since \( P \) is contained in a hyperplane that is not orthogonal to \( \bar{H} \), the line segment \( [\zerobf, \hat{\gamma} \unit_{k}] \) is not parallel to \( \aff{ P} \). Let \( \hat{P}: =  P-[0, \hat{\gamma} \unit_{k}] \). 
We have the following facts on \( \hat{P} \) \cite[Chapter 4.4]{grunbaum1967convex}: \( \hat{P} \) is an \( (n+1) \) dimensional polytope; an \(l\)-face of \( \hat{P} \) is either an \(l\)-face of \( P \) or of \( P - \{ \hat{\gamma} \unit_{k} \} \), or it is the Minkowski sum of singleton \(-\{ \hat{\gamma} \unit_{k} \}  \) and some \( (l-1) \)-face of \( P \); each face of \( P \) and of \(  P - \{ \hat{\gamma} \unit_{k} \}  \) is a face of \( \hat{P} \); and the Minkowski sum of singleton \( -\{ \hat{\gamma} \unit_{k} \} \) and any face of \( P \) is a face of \( \hat{P}  \). 

It is then sufficient to show that \( \swp_{k}{P} \) and  \( \hat{P} \) are combinatorial isomorphic\footnote{Two polytopes \( P_{1}\) and \( P_{2} \) are \emph{combinatorial isomorphic} to each other if there exists a one-to-one correspondence \( \varphi \) between the set of faces in \( P_{1} \) and the set of faces in \( P_{2} \), such that  \( \varphi \) is inclusion-preserving, i.e., for two faces \( \Gamma_{1} \) and \( \Gamma_{2} \) of \( P_{1} \), \( \Gamma_{1} \subset \Gamma_{2}  \) if and only if \( \varphi(\Gamma_{1}) \subset \varphi(\Gamma_{2})\).}.
By definition, \( \swp_{k}(P) = \hat{P} \cap \bar{H}^{+} \). The correspondence between face \( \hat{\Gamma} \)  of \( \hat{P} \)  and face \( \tilde{\Gamma} \)  of \( \swp_{k}(P) \) is as follows: \( \tilde{\Gamma} = \hat{\Gamma} \) if \( \hat{\Gamma} \subset \setdef{x\in \real^{d}}{x_{k} > 0 }  \); \( \tilde{\Gamma} = \proj_{k}(\hat{\Gamma}) \) if \( \hat{\Gamma} \subset \setdef{x\in \real^{d}}{x_{k} < 0 }  \); and \( \tilde{\Gamma} = \hat{\Gamma} \cap \bar{H}^{+} \) otherwise. 
\end{proof}

\begin{remark}
\label{rem:sweep-case-1}
The complement of the \( P \subset \{x\in \real^{d}\, |\, x_{k} > 0 \} \) condition in Proposition~\ref{prop:sweep-case-1} can be handled as follows. If \( P \subset \bar{H} \), then trivially \( \swp_{k}(P) = P = \proj_{k}(P) \). For the remaining scenario when $P \cap \bar{H} \neq \emptyset$, one can use a standard perturbation trick~\cite{edelsbrunner1987algorithms}. Let $\hat{P}$ be a small perturbation of $P$ such that  \( \hat{P} \subset \setdef{x\in \real^{d}}{x_{k} > 0 } \), and hence to which Proposition~\ref{prop:sweep-case-1} applies. The incidence graph of $P$ is then obtained from that of $\hat{P}$ by merging faces which are in close proximity (relative to the perturbation) and maintaining incidence relationships between the remaining faces. 
%
\end{remark}

Example~\ref{eg:sweep-case-1} illustrates how to use Proposition~\ref{prop:sweep-case-1} to obtain the incidence graph of \( \swp_{k}(P) \) in \( \real^{2} \).
\begin{example}
\label{eg:sweep-case-1}
Consider the polytope \( P \) corresponding to the line segment between points $a$ and $b$ in Figure~\ref{fig:sweep-2d-polytope}. The figure also shows the corresponding $\hat{P}$, \( \proj_{2}(P) \) and \( \swp_{2}(P) \). The subgraph shown in solid black in Figure~\ref{fig:sweep-2d-face-lattice} is the incidence graph of $P$, whereas the subgraph shown in gray, which is identical to the solid black one, is the incidence graph of $\proj_{2}{P}$, where \( a' = \proj_{2}(a) \), \( b' = \proj_{2}(b) \). The incidence graph of \( \swp_{2}(P) \) is constructed using Proposition~\ref{prop:sweep-case-1} as follows:  
\begin{enumerate}
\item The vertices (\( 0 \)-faces) contain vertices both of \( P \), i.e., \( a \)  and \( b \), and of \( \proj_{2}(P) \), i.e., \( a' \)  and \( b' \);
\item The edges (\( 1 \)-faces) contain both the edge of \( P \), i.e., \( ab \), the edge of \( \proj_{2}(P) \), i.e., \( a'b' \), and the edges formed by sweep of vertices, i.e., \( aa' \) and \( bb' \); 
\item Edges \( aa' \) and \( bb' \) are incident to \( a \), \( a' \)  and \( b \), \( b' \), respectively, as they result from sweeping, contain the corresponding vertices, and their dimensions differ by \( 1 \); 
\item \( \swp_{2}(P) \) is a two dimensional polytope. Its incidence graph contains itself as a \( 2 \)-face incident to all the edges. 
\end{enumerate}
%

\begin{figure}[htbp]
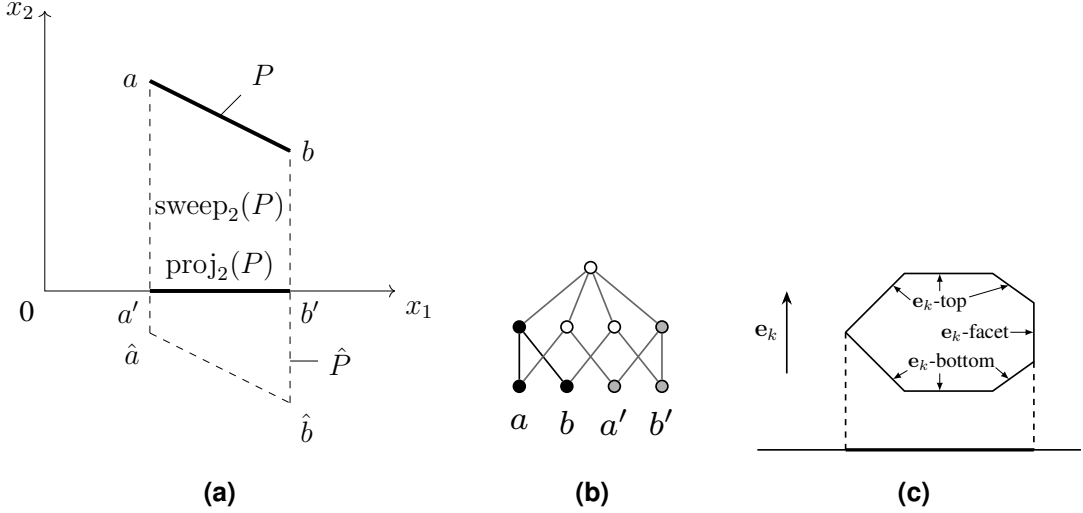

   \centering
   \begin{subfigure}[t]{.4\linewidth}
  \centering
  \includestandalone[width=.9\linewidth]{sweep-2d-polytope}
  \caption{}
  \label{fig:sweep-2d-polytope}
\end{subfigure}%
\begin{subfigure}[t]{.2\linewidth}
  \centering
  \includestandalone[width=.8\linewidth]{sweep-2d-face-lattice}
  \caption{}
  \label{fig:sweep-2d-face-lattice}
\end{subfigure}
\begin{subfigure}[t]{.3\linewidth}
  \centering
  \includestandalone[width=0.9\linewidth]{facet-classification}
  \caption{}
  \label{fig:facet-classification}
\end{subfigure}
\caption{\sf (a) a sample polytope $P$ in $\real^2$, its sweep, projection, and related objects; (b) the incidence graph of $\swp_2(P)$; and (c) different facets according to the direction \( \unit_{k} \)}
\label{fig:sweep-2d-example}
\end{figure}
\end{example}

\subsection*{\underline{Scenario II}} 
For the second scenario, we first identify the faces of \( P \) that play a role in the projection and then resort to Proposition~\ref{prop:sweep-case-1} for construction of \( \swp_{k} (P) \). It is sufficient to consider \( P \subset \preal^{d} \) to be full dimensional, i.e., \( \dim P =d \). Otherwise, one can work in the affine space \( \aff P \), which is orthogonal to \( \bar{H} \), and the same results hold. Let \( \Gamma^{d-1}_{i}\) be a facet of \( P \subset \bar{H}^{+} \), recall that the direction of \( \pi^{i} \) for its defining hyperplane \( H_{i} \)  is pointed outwards from the polytope. A direction vector \( \mu \in \real^{d} \) classifies the facets of \( P \) into three types \cite{burger1996polytope} according to the value of \( \trans{\mu} \pi^{i} \): \( \mu-\)\emph{facet} for \( \trans{\mu} \pi^{i} =0 \), \( \mu- \)\emph{bottom} for \( \trans{\mu} \pi^{i} < 0 \) and \( \mu- \)\emph{top} for \( \trans{\mu} \pi^{i} > 0 \). With this definition, a facet \( \Gamma^{i} \) belongs to \( \unit_{k}- \)top if \( \pi^{i}_{k} >0 \), \( \unit_{k}- \)facet if \( \pi^{i}_{k} =0 \) and \( \unit_{k} - \)bottom if \( \pi^{i}_{k}<0 \). This is illustrated in Figure~\ref{fig:facet-classification}.


The \( \unit_{k}\)-top and  \( \unit_{k} \)-bottom facets can be described as the facets that are ``visible from the direction \( -\unit_{k} \)'' and ``visible from the direction \( \unit_{k} \)'', respectively. 
For the projection concerned in \( \swp_{k}(P) \), only points in \( \unit_{k}- \)top play a role. This is straightforward to see in \( \real^{2} \) and \( \real^{3} \). For example, the top vertex of the vertical edge in Fig.~\ref{fig:sweep-2d-case2a} and the top edge of the triangular face in \ref{fig:sweep-2d-case2b} completely determine the sweeps. In general, Proposition~\ref{prop:top-facet-boundary-ridge} shows the same is true for \( \real^{d} \), where we say point \( x\in \real^{d} \) is \emph{shaded} by point \( \hat{x} \in \real^{d} \) in direction \( \unit_{k} \) if \( \hat{x}_{k} \ge x_{k} \) and \( \hat{x}_{i} = x_{i} \) for all \( i \in [d]\setminus \{k\} \). It is clear from the definition that a point plays no role in the projection along \( \unit_{k} \) if it is shaded by another point of \( P \) in \( \unit_{k} \). 
The ridges in \( \unit_{k}-\)top facet of \( P \) are of two types: one is in the intersection between two \( \unit_{k}- \)top facets, and the other is in the intersection between a \( \unit_{k}-\)top facet and either a \( \unit_{k}-\)facet or a \( \unit_{k}-\)bottom facet. Proposition~\ref{prop:top-facet-boundary-ridge} implies that the second type of ridges determine the boundaries of \( \swp_{k}(P) \) and $\proj_k(P)$. They are hence called \emph{boundary ridges} of \( P \) in direction \( \unit_{k} \). 

\begin{proposition}
\label{prop:top-facet-boundary-ridge}
For a full dimensional polytope \( P \subset \bar{H}^{+} \) and \( k\in [d] \), the following are true:
\begin{enumerate}[(i)]
\item every point in \( P \) is shaded in direction \( \unit_{k} \) by a point in a \( \unit_{k}- \)top facet of \( P \);
\item every \( \unit_{k} \)-top facet is a facet of \( \swp_{k}(P) \); and
\item for a ridge \( \Gamma^{d-2} \) of \( P \), \( \swp_{k}( \Gamma^{d-2}) \) is a facet of \( \swp_{k}(P) \) if and only if \( \Gamma^{d-2} \) is a boundary ridge and \( \Gamma^{d-2} \not\subseteq \bar{H} \). 
\end{enumerate} 
\end{proposition}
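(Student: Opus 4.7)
My plan is to handle the three claims in order, each one building on the previous, with the key analytical workhorse being the sign of the $k$-th component $\pi_k^i$ of outward facet normals.

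For part (i), the plan is to exploit boundedness of $P$ together with the outward-normal characterization of facets. Given any \( x\in P \), let \( t^{*}:=\max\{t\geq 0 \,:\, x+t\unit_{k}\in P\} \), which is finite since \( P \) is compact, and set \( \hat{x}:=x+t^{*}\unit_{k} \). Then \( \hat{x} \) shades \( x \) by construction. Since \( \hat{x}+\varepsilon \unit_{k}\notin P \) for every \( \varepsilon>0 \), at least one facet inequality \( \trans{(\pi^{i})}y\le \pi^{i}_{0} \) must be violated by \( \hat{x}+\varepsilon\unit_{k} \) for arbitrarily small \( \varepsilon \). Combined with \( \trans{(\pi^{i})}\hat{x}\le \pi^{i}_{0} \), letting \( \varepsilon\downarrow 0 \) forces \( \trans{(\pi^{i})}\hat{x}=\pi^{i}_{0} \) (so \( \hat{x}\in\Gamma_{i}^{d-1} \)) and \( \pi^{i}_{k}>0 \) (so \( \Gamma_{i}^{d-1} \) is a \( \unit_{k} \)-top facet).

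For part (ii), let \( \Gamma_{i}^{d-1} \) be a \( \unit_{k} \)-top facet with outward normal \( \pi^{i} \) (\( \pi^{i}_{k}>0 \)) and defining hyperplane \( H_{i} \). The plan is to verify that \( H_{i} \) is a supporting hyperplane of \( \swp_{k}(P) \) meeting it exactly in \( \Gamma_{i}^{d-1} \). For any \( y=p-\theta p_{k}\unit_{k}\in \swp_{k}(P) \) with \( p\in P,\ \theta\in[0,1] \), a direct computation gives
\begin{equation*}
\trans{(\pi^{i})}y \;=\; \trans{(\pi^{i})}p - \theta p_{k}\pi^{i}_{k} \;\le\; \pi^{i}_{0},
\end{equation*}
using \( \trans{(\pi^{i})}p\le \pi^{i}_{0} \) and \( \theta p_{k}\pi^{i}_{k}\ge 0 \) (from \( P\subset \bar{H}^{+} \), \( \theta\ge 0 \), \( \pi^{i}_{k}>0 \)). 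Equality forces \( p\in\Gamma_{i}^{d-1} \) and \( \theta p_{k}=0 \), whence \( y=p\in \Gamma_{i}^{d-1} \). Thus \( \Gamma_{i}^{d-1}=\swp_{k}(P)\cap H_{i} \) is a \( (d-1) \)-face, i.e., a facet.

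For part (iii), both directions are needed. For the reverse direction, if \( \Gamma^{d-2}\subseteq \bar{H} \) then the sweep acts trivially and \( \swp_{k}(\Gamma^{d-2})=\Gamma^{d-2} \) has dimension \( d-2 \), too small to be a facet; and if \( \Gamma^{d-2} \) is not a boundary ridge, then it lies in the intersection of two \( \unit_{k} \)-top facets, and part (i) together with a short shading argument places the relative interior of \( \swp_{k}(\Gamma^{d-2}) \) strictly below \( P \) and hence in the relative interior of \( \swp_{k}(P) \), contradicting being a facet. For the forward direction, write \( \Gamma^{d-2}=\Gamma_{i}\cap \Gamma_{j} \) with \( \pi_{k}^{i}>0 \) and \( \pi_{k}^{j}\le 0 \), and choose nonnegative \( \alpha,\beta \), not both zero, with \( \alpha\pi_{k}^{i}+\beta\pi_{k}^{j}=0 \) (take \( \alpha=-\pi_{k}^{j}>0,\ \beta=\pi_{k}^{i}>0 \) if \( \pi_{k}^{j}<0 \); take \( \alpha=0,\ \beta=1 \) if \( \pi_{k}^{j}=0 \)). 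The hyperplane
\begin{equation*}
H^{*} \;:=\; \bigl\{\, y\in \real^{d} \,:\, \trans{(\alpha\pi^{i}+\beta\pi^{j})}y = \alpha\pi_{0}^{i}+\beta\pi_{0}^{j}\,\bigr\}
\end{equation*}
contains \( \Gamma^{d-2} \) and, because its normal has zero \( k \)-th component, is invariant under sweeps along \( \unit_{k} \); so it contains \( \swp_{k}(\Gamma^{d-2}) \). The same sign computation as in (ii) shows \( H^{*} \) supports \( \swp_{k}(P) \). The hypothesis \( \Gamma^{d-2}\not\subseteq \bar{H} \) guarantees some point of \( \Gamma^{d-2} \) has positive \( k \)-th coordinate, so the sweep genuinely adds a dimension and \( \dim \swp_{k}(\Gamma^{d-2})=d-1 \), making it a facet.

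The main obstacle is the forward direction of (iii): verifying that \( \swp_{k}(P)\cap H^{*}=\swp_{k}(\Gamma^{d-2}) \) and not a strictly larger set. This requires, for \( y=p-\theta p_{k}\unit_{k}\in \swp_{k}(P)\cap H^{*} \), tracing tight equality in both \( \trans{(\pi^{i})}y\le \pi_{0}^{i} \) and \( \trans{(\pi^{j})}y\ge \pi_{0}^{j} \) (the latter inequality being derived from \( \pi_{k}^{j}\le 0 \) and \( p\in P \)), and using nonnegativity of the coefficients \( \alpha,\beta \) to conclude \( p\in \Gamma_{i}\cap \Gamma_{j}=\Gamma^{d-2} \). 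The degenerate case \( \pi_{k}^{j}=0 \), in which \( \Gamma_{j} \) is itself a \( \unit_{k} \)-facet and \( H^{*}=H_{j} \), needs a slightly separate treatment that combines (ii) applied to \( H_{j} \) with the relative-interior structure of \( \Gamma_{j} \).
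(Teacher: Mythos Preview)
Your overall strategy matches the paper's: part (i) via a ``highest point in direction $\unit_k$'' argument, part (ii) by showing the defining hyperplane of a top facet still supports $\swp_k(P)$, and part (iii) by constructing a supporting hyperplane for $\swp_k(P)$ whose normal is a nonnegative combination $\alpha\pi^i+\beta\pi^j$ with vanishing $k$-th component. Your treatments of (i) and (ii) are in fact more elementary and more explicit than the paper's (which invokes the separating hyperplane theorem and a Farkas-type lemma), and your observation that one must actually verify $\swp_k(P)\cap H^{*}=\swp_k(\Gamma^{d-2})$, not merely containment, is a point the paper's proof does not spell out.

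There is, however, a genuine gap in your necessity argument for (iii). You write: ``if $\Gamma^{d-2}$ is not a boundary ridge, then it lies in the intersection of two $\unit_k$-top facets.'' This is false. A ridge $\Gamma^{d-2}=\Gamma_i^{d-1}\cap\Gamma_j^{d-1}$ fails to be a boundary ridge in \emph{two} distinct ways: either both $\Gamma_i^{d-1},\Gamma_j^{d-1}$ are $\unit_k$-top, \emph{or} neither is. You handle only the first, and even there your ``shading'' sketch is really the argument appropriate to the second case. The paper treats the two cases separately: when neither facet is top, every point of $\Gamma^{d-2}$ is strictly shaded by another point of $P$, so $\swp_k(\Gamma^{d-2})$ sits in the interior of $\swp_k(P)$; when both are top, part (ii) gives that $\Gamma_i^{d-1}$ and $\Gamma_j^{d-1}$ remain facets of $\swp_k(P)$, hence $\Gamma^{d-2}$ remains a ridge of $\swp_k(P)$, and since a ridge lies in exactly two facets, $\swp_k(\Gamma^{d-2})$ cannot be a third facet through it. You should separate these two cases and supply the clean ``two facets per ridge'' argument for the both-top case.

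A smaller point: in your sufficiency sketch for the degenerate case $\pi_k^j=0$, you propose to invoke (ii) applied to $H_j$. But (ii) concerns $\unit_k$-\emph{top} facets, whereas here $\Gamma_j$ is a $\unit_k$-\emph{facet}; so (ii) does not apply directly. What you need instead is that $H_j$ (being parallel to $\unit_k$) is invariant under the sweep, so $\swp_k(P)\cap H_j=\swp_k(\Gamma_j)$, and then argue that $\swp_k(\Gamma^{d-2})$ fills out this face.
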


With Proposition~\ref{prop:sweep-case-1} and Proposition~\ref{prop:top-facet-boundary-ridge}, \( \cube P \)  of a full dimensional polytope \( P \subset \preal^{d} \) can be constructed as follows: set \( k = 1\); while $k \leq d$, do the following:
\begin{enumerate}[(I)]
\item find \( \unit_{k} \)-top facets  of \( P \) and remove all faces of \( P \)  that are not contained in anyone of them;
\item find the boundary ridges and construct their sweep along \( \unit_{k} \) according to Proposition~\ref{prop:sweep-case-1} and Remark~\ref{rem:sweep-case-1}; 
\item add a facet in \( \bar{H} \) that is incident to the projections of all boundary ridges along \( \unit_{k} \) and a cell, corresponding to \( \swp_{k}(P) \), that is incident to all the facets;
\item set \( P = \swp_{k}(P) \), $k=k+1$, and repeat;
\end{enumerate}
where the second step is possible because every boundary ridge belongs to a \( \unit_{k}\)-top facet that is not orthogonal to \( \bar{H} \). The above four-step procedure is illustrated in Example~\ref{eg:sweep-case-2}. 

\begin{example}
\label{eg:sweep-case-2}
Consider \( P = U_{1} \subset \preal^{2} \) shown in Fig.~\ref{fig:arrangement-1}, in this case \( \cube{U_{1}} = \swp_{1} (\swp_{2} (U_{1})) \). These two sweep operations are shown in Fig.~\ref{fig:sweep-1} and  \ref{fig:sweep-2}, respectively. In particular, Fig.~\ref{fig:sweep-incidence-graph} shows the incidence graph of \( \swp_{2}(U_{1}) \), where the solid black substructure is inherited from the incidence graph of \( U_{1} \). As can be seen, the \( \unit_{2} \)-bottom facet \( cb \) is removed in \( \swp_{2}(U_{1}) \); the \( \unit_{2} \)-top facets, edges \( ac \) and \( ab \), and the associated subfaces, vertices \( a \), \( b \), \( c \), persist; the boundary ridges, vertices \( b \) and \( c \), play critical roles in the construction of \( \swp_{2}(U_{1}) \): their projections, i.e., \( b' \) and \( c' \), and sweeps, i.e., \( bb' \) and \( cc' \), are added as ridges and facets, respectively.

\begin{figure}[htbp]
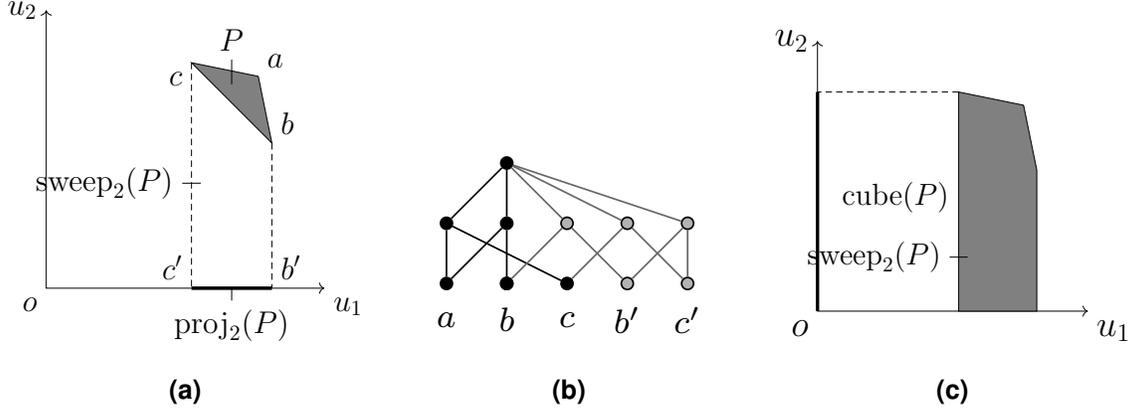

\centering
 \begin{subfigure}[b]{.3\linewidth}
  \centering
  \includestandalone[width=\linewidth]{sweep-1}
  \caption{ }
  \label{fig:sweep-1}
\end{subfigure}
\begin{subfigure}[b]{.3\linewidth}
  \centering
  \includestandalone[width=0.8\linewidth]{sweep-incidence-graph}
  \caption{ }
  \label{fig:sweep-incidence-graph}
\end{subfigure}
\begin{subfigure}[b]{.3\linewidth}
  \centering
  \includestandalone[width=\linewidth]{sweep-2}
  \caption{}
  \label{fig:sweep-2}
\end{subfigure}
\caption{\sf Illustration of sweep: (a) sweep of \( U_{1} \); (b) the incidence graph of \( \swp_{2}(U_{1}) \); and (c) \( \cube{U_{1}} \) as sweep of \( \swp_{2}(U_{1}) \).}\label{fig:sweep}
\end{figure}
\end{example}

\begin{remark}
If $(\mc E,P)$ has multiple connected components, then it is efficient to construct incidence graphs for each component separately, using the technique presented in this section. However, one first needs to construct the incidence graph for the projection of $P$ onto the subspace associated with every component. For a component that does not contain nodes $\{i, \ldots, j\}$, the corresponding projection is obtained iteratively as $ \proj_{i}(\ldots \proj_{j} (P))$. The required operations on the incidence graph of $P$ to execute these projections are already contained in Propositions~\ref{prop:sweep-case-1} and \ref{prop:top-facet-boundary-ridge}.
%
\end{remark}




\section{Approximation Algorithm and Simulations}
\label{sec:approx-algorithm-simulations}

\subsection{Approximation Algorithm via Projection}
\label{sec:approximation-method} 
The exponential dependence of the time complexity on $d=|\mc V_l|-1$ (see Remark~\ref{rem:arrangement-construction})
can be prohibitive for networks that contain large number of non-transmission nodes. We now outline a strategy to project the admissible control actions onto a lower dimensional space.  Aggregation and search in the lower dimensional space then gives an approximation. 


Consider a network \( \mathcal{G} =( \mathcal{V}, \mathcal{E}) \) with supply-demand vector \( p \). For the sake of presentation in this section, assume, without loss of generality that $\mc V=\mc V_l$, i.e., every node is a non-transmission node; if this is not the case, then one can focus only on the subspace of control actions corresponding to the non-transmission nodes. 
 Let \( \Phi = [\Phi_{1}, \ldots, \Phi_{|\mc V_l|}] \in \real^{|\mc V_l| \times |\mc V_l|} \) be an orthonormal (transformation) matrix, and let \( B \subset [|\mc V_l|] \) be an index set. The approximation strategy, which is parametrized by $(\Phi,B)$, considers aggregated set of admissible control actions in the subspace  \( U( \mathcal{E}, P) \cap \range(\Phi_{B}) = \setdef{u\in U( \mathcal{E}, P)}{ \trans{\Phi_{i}} u = 0, \, i \not \in B } \)\footnote{\( \range(\Phi_{B}) \) denotes the range of matrix \( \Phi_{B}\). }, i.e., in the subspace of control actions which can be expressed as a linear combination of \( \Phi_{i} \), \( i\in B \). Remark~\ref{rem:arrangement-construction} implies that this reduces the dimension, and hence correspondingly the time complexity, from $|\mc V_l| -1 $ to $|B|$. Moreover, since the constraints $\trans{\Phi_{i}} u = 0$, $i \not \in B$, are hyperplanes, they can be easily integrated into the construction of arrangements to get \( \mathbb{U}( \mathcal{E}, P) \cap \range(\Phi_{B})\). In fact, by setting \( \Phi_{i} = {\onebf}/{\sqrt{|\mc V_l|}} \) for some \( i\not \in B \), the constraint \( \trans{\Phi_{i}} u = 0 \) is the balance constraint for a connected network (cf. \eqref{eq:load-balance}). 

\begin{example}
\label{eg:proportional-control}
Consider a network with initial supply-demand vector \( p^{0} \). By choosing \( B = \{1\}\) and \( \Phi_{1} = p^{0}/\|p^{0}\|_{2} \), we get \emph{proportional control policies}~\cite[Section 6.1.1]{Bienstock:16}, i.e., a class of control policies whose action set at state \( ( \mathcal{E}, p) \) is $\setdef{ \lambda p}{ 0\le \lambda \le 1}$. 
\end{example}

The one-dimensional search space resulting from proportional control policy, as shown in Example~\ref{eg:proportional-control}, is favorable for computational purposes. However, the projection-based approximation strategy implies that one could possibly find better control actions, using  comparable computational budget, by using different projections. This is illustrated using simulations on IEEE 39 benchmark system in Section~\ref{sec:simulations}.

\subsection{Simulations}
\label{sec:simulations}
\begin{figure}[htb!]
\begin{subfigure}[b]{.49\linewidth}
  \centering
\centering
   \includegraphics[width=\linewidth]{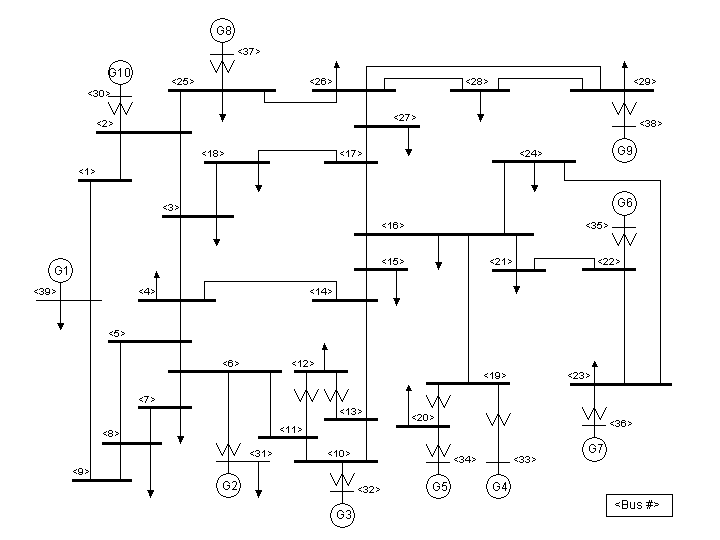}
     \caption{ }
   \label{fig:ieee39-pic}
\end{subfigure}
\begin{subfigure}[b]{.49\linewidth}
  \centering
  \includestandalone[width=0.85\linewidth]{ieee39-net-node-link-label}
  \caption{ }
  \label{fig:ieee39-diagram}
\end{subfigure}
   \caption{\sf (a) the IEEE 39 bus network from \cite{zimmerman2011matpower}; (b) simplified visualization of the network in (a), with the specific choice of supply and demand nodes used for simulations}
   \label{fig:ieee39}
\end{figure}

We conducted numerical experiments on the IEEE 39 bus system illustrated in Figure~\ref{fig:ieee39}. Node 39 is selected to be the only supply; nodes 4 and 16 are selected to be loads; all the other nodes are transmission nodes. This particular choice of supply-demand nodes is consistent with the fact that the actual supply and demand on these nodes, as reported in \cite{zimmerman2011matpower}, have relatively large values. $p^0$ was chosen to be proportional to the actual values reported in \cite{zimmerman2011matpower} for nodes $4$, $16$ and $39$: \( p_{4}^{0} = p_{16}^{0} = -5 \), \( p_{39}^{0} = 10\) and \( p_{i}^{0}=0 \) for all \( i \notin \{4, 16, 39\} \). Link susceptances \( w \) are from \cite{zimmerman2011matpower}. The link capacities were selected as follows: \( c_{8} = 0.5 \), \( c_{9} = 1 \), \( c_{i} = 2.5 \) for \( i\in \{13, 21, 22, 23\} \); \( c_{i} = 3.0 \) for \( i \in \{3, 28, 29, 35, 36, 38 \} \); \(  c_{i} = 3.5 \) for \( i\in \{16, 17\} \); \( c_{i} = 4.0 \) for \( i\in \{7, 26, 30\} \); \( c_{i} = 4.5 \) for \( i\in \{1, 2, 4, 24, 25, 31, 39, 40, 42\} \) and \( c_{i} = 2.0 \) for other links.

\begin{table*}[hbt]
\caption{Optimal residual load under \eqref{eq:control-formulation-new} and under the projection-based approximations in \eqref{eq:approx-control-sims}}
\label{tab:performance}
\begin{center}
\begin{tabular}{c|ccccccccccc|c}
\noalign{\hrule height 1pt}
 \backslashbox{\( N \)}{\( \eta \)} &	 0	&	0.1	&	0.2	&	0.3	&	0.4	&	0.5	&	0.6	&	0.7	&	0.8	&	0.9	&	1	& Optimal \\
\noalign{\hrule height 1pt}
1	&	3.716	&	3.502	&	3.310	&	3.140	&	2.984	&	2.844	&	2.718	&	2.600	&	2.494	&	2.396	&	2.304	&	3.716	\\
2	&	9.860	&	9.806	&	9.750	&	9.696	&	8.974	&	9.000	&	7.334	&	6.742	&	4.578	&	4.078	&	4.000	&	9.860	\\
3	&	10.000	&	11.112	&	11.090	&	11.028	&	10.000	&	9.000	&	7.334	&	6.742	&	5.000	&	4.444	&	4.000	&	11.150	\\
4	&	10.000	&	11.112	&	11.090	&	11.028	&	10.000	&	9.000	&	7.334	&	6.742	&	5.000	&	4.444	&	4.000	&	11.150	\\
5	&	10.000	&	11.112	&	11.090	&	11.028	&	10.000	&	9.000	&	7.334	&	6.742	&	5.000	&	4.444	&	4.000	&	11.150	\\
\noalign{\hrule height 1pt}
\end{tabular}
\end{center}
\end{table*}%

For the above network parameters, $(\mc E^0,p^0)$ is infeasible. Furthermore, under no load shedding control action, i.e., $u^t \equiv p^0$ for all $t$, the only supply node 39 would get disconnected at $t=1$ from the load nodes 4 and 16. However, using the control formulation of this paper, such a scenario can be prevented while minimizing the amount of load to be shed. Table~\ref{tab:performance} (last column) shows the values of residual load, i.e., the optimal solution to \eqref{eq:control-formulation-new}, computed by the techniques in Section~\ref{sec:algorithm-section}, for different control horizons $N$.  The residual load is expectedly nondecreasing with \( N \). This confirms that multi-round control does lead to increase in the residual load, or equivalently decrease in cumulative load shed, in comparison to the single round (\(N=1 \)) control underlying power re-dispatch. However, there are no gains in residual load beyond \( N\ge 3 \). This is because the network in Figure~\ref{fig:ieee39} contains a very few cycles; and once the network becomes a tree, the optimal load shedding action is to ensure feasible of all the links in this case.


Table~\ref{tab:performance} also shows optimal residual load within the class of control policies obtained by projection onto a one-dimensional space, as described in Section~\ref{sec:approximation-method}. Specifically, we chose 
\begin{equation}
\label{eq:approx-control-sims}
B= \{1\}, \qquad \Phi_{1} \propto \eta \bar{p}^{1} + (1-\eta) \bar{p}^{2}, \, \, \eta \in [0, 1]
\end{equation}
where \( \bar{p}^{1} \in \real^{39} \) has 1 and -1 on node 39 and 4, respectively, and 0 elsewhere; \( \bar{p}^{2} \in \real^{39} \) has 1 and -1 on node 39 and 16, respectively, and 0 elsewhere. 
Recalling Example~\ref{eg:proportional-control}, it is easy to see that the set of proportional control policies~\cite[Section 6.1.1]{Bienstock:16} corresponds to \( \eta = 0.5 \). 
Table~\ref{tab:performance} contains values for optimal residual load under such an approximation for different values of $\eta$ and $N$. These values show that, similar to the optimal control actions, for every $\eta$, the optimal residual load is nondecreasing in $N$ and stays constant for $N \geq 3$.  
While there is no general monotone relationship in $\eta$ (uniformly for all $N$), the best control actions for $N \geq 3$ correspond to $\eta=0.1$. The control actions corresponding to $\eta=0.1$ perform uniformly better than the proportional control policy $(\eta=0.5)$ which requires comparable computational cost, and give fairly similar performance as the optimal control actions which are obtained under considerable computational costs (Section~\ref{sec:algorithm-section}).  
%

\section{Conclusions and Future Work}
\label{sec:conclusions}
Cascading failure in physical networks has attracted great interest, and yet formal approaches for its control are relatively very few. This paper builds upon an existing formulation for optimal control of cascading failure in power networks under DC approximation, and provides approaches for computing optimal control in this setting. The decomposition paradigm and connections to computational geometry underlying our approaches suggest several avenues for future work.





As an initial step towards generalization, we plan to adapt the decomposition approach to be able to compute a feasible control action for non-tree reducible networks. We also plan to explore general optimal control formulations, beyond cascading failure and networked settings, to which the approaches developed in this paper are applicable. 
%
%
%
%
In particular, we plan to investigate connections between our approach of computing optimal control using (equivalent) finite representations and the recent work on symbolic optimal control, e.g., see \cite{Rungger.Reissig.CDC17}.
%

\bibliographystyle{ieeetr}
\bibliography{bib/ksmain,bib/savla,bib/ref}

\begin{thebibliography}{10}

\bibitem{Cohen.Erez.ea:00}
R.~Cohen, K.~Erez, D.~Ben-Avraham, and S.~Havlin, ``Resilience of the internet
  to random breakdowns,'' {\em Physical review letters}, vol.~85, no.~21,
  p.~4626, 2000.

\bibitem{Watts:01}
D.~J. Watts, ``A simple model of global cascades on random networks,'' {\em
  PNAS}, vol.~99, no.~9, pp.~5766--5771, 2002.

\bibitem{Motter:2002fk}
A.~Motter, ``Cascade-based attacks on complex networks,'' {\em Phys. Rev. E;
  Physical Review E}, vol.~66, no.~6, 2002.

\bibitem{Crucitti.Latora.ea:04}
P.~Crucitti, V.~Latora, and M.~Marchiori, ``Model for cascading failures in
  complex networks,'' {\em Physical Review E}, vol.~69, no.~4, 2004.

\bibitem{Barrat.Barthelemy.ea:08}
A.~Barrat, M.~Barthelemy, and A.~Vespignani, {\em Dynamical Processes on
  Complex Networks}.
\newblock Cambridge University Press, 2008.

\bibitem{Bienstock:11}
D.~Bienstock, ``Optimal control of cascading power grid failures,'' in {\em
  Decision and Control and European Control Conference (CDC-ECC), 2011 50th
  IEEE Conference on}, pp.~2166--2173, IEEE, 2011.

\bibitem{Savla.Como.ea.TNSse14}
K.~Savla, G.~Como, and M.~A. Dahleh, ``Robust network routing under cascading
  failures,'' {\em IEEE Transactions on Network Science and Engineering},
  vol.~1, no.~1, pp.~53--66, 2014.

\bibitem{bienstock2010nk}
D.~Bienstock and A.~Verma, ``The n-k problem in power grids: New models,
  formulations, and numerical experiments,'' {\em SIAM Journal on
  Optimization}, vol.~20, no.~5, pp.~2352--2380, 2010.

\bibitem{bernstein2014power}
A.~Bernstein, D.~Bienstock, D.~Hay, M.~Uzunoglu, and G.~Zussman, ``Power grid
  vulnerability to geographically correlated failures: Analysis and control
  implications,'' in {\em INFOCOM, 2014 Proceedings IEEE}, pp.~2634--2642,
  IEEE, 2014.

\bibitem{Bienstock:16}
D.~Bienstock, {\em Electrical Transmission System Cascades and Vulnerability:
  An Operations Research Viewpoint}, vol.~22.
\newblock SIAM, 2016.

\bibitem{braess1968paradoxon}
D.~Braess, ``{\"U}ber ein paradoxon aus der verkehrsplanung,'' {\em
  Unternehmensforschung}, vol.~12, no.~1, pp.~258--268, 1968.

\bibitem{Ba.Savla.CDC16}
Q.~Ba and K.~Savla, ``A dynamic programming approach to optimal load shedding
  control of cascading failure in {DC} power networks,'' in {\em IEEE
  Conference on Decision and Control}, (Las Vegas, NV), 2016.

\bibitem{Lai.Low:Allerton13}
C.~Lai and S.~H. Low, ``The redistribution of power flow in cascading
  failures,'' in {\em 51st Annual Allerton Conference on Communication,
  Control, and Computing}, pp.~1037--1044, 2013.

\bibitem{guo2017monotonicity}
L.~Guo, C.~Liang, and S.~H. Low, ``Monotonicity properties and spectral
  characterization of power redistribution in cascading failures,'' {\em ACM
  SIGMETRICS Performance Evaluation Review}, vol.~45, no.~2, pp.~103--106,
  2017.

\bibitem{salmeron2004analysis}
J.~Salmeron, K.~Wood, and R.~Baldick, ``Analysis of electric grid security
  under terrorist threat,'' {\em IEEE Transactions on power systems}, vol.~19,
  no.~2, pp.~905--912, 2004.

\bibitem{pinar2010optimization}
A.~Pinar, J.~Meza, V.~Donde, and B.~Lesieutre, ``Optimization strategies for
  the vulnerability analysis of the electric power grid,'' {\em SIAM Journal on
  Optimization}, vol.~20, no.~4, pp.~1786--1810, 2010.

\bibitem{edelsbrunner1987algorithms}
H.~Edelsbrunner, ``Algorithms in combinatorial geometry, volume 10 of {EATCS}
  monographs on theoretical computer science,'' 1987.

\bibitem{toth2004handbook}
C.~D. Toth, J.~O'Rourke, and J.~E. Goodman, {\em Handbook of discrete and
  computational geometry}.
\newblock CRC press, 2004.

\bibitem{ziegler2012lectures}
G.~M. Ziegler, {\em Lectures on polytopes}, vol.~152.
\newblock Springer Science \& Business Media, 2012.

\bibitem{grunbaum1967convex}
B.~Gr{\"u}nbaum, V.~Klee, M.~A. Perles, and G.~C. Shephard, {\em Convex
  polytopes}, vol.~16.
\newblock Springer, 1967.

\bibitem{halperin1995arrangements}
D.~Halperin and M.~Sharir, ``Arrangements and their applications in robotics:
  recent developments,'' in {\em Proceedings of the Workshop on Algorithmic
  Foundations of Robotics}, pp.~495--511, AK Peters, Ltd., 1995.

\bibitem{agarwal2013resilience}
P.~K. Agarwal, A.~Efrat, S.~K. Ganjugunte, D.~Hay, S.~Sankararaman, and
  G.~Zussman, ``The resilience of {WDM} networks to probabilistic geographical
  failures,'' {\em IEEE/ACM Transactions on Networking}, vol.~21, no.~5,
  pp.~1525--1538, 2013.

\bibitem{Ba.Savla:TCNS16}
Q.~Ba and K.~Savla, ``Robustness of {DC} networks under controllable link
  weights,'' {\em IEEE Transactions on Control of Network Systems}, 2017.
\newblock In Press. Available at
  \href{http://arxiv.org/abs/1609.02179}{\nolinkurl
  http://arxiv.org/abs/1609.02179}.

\bibitem{russell2009artificial}
S.~J. Russell and P.~Norvig, ``Artificial intelligence: a modern approach (3rd
  edition),'' 2009.

\bibitem{chen2005cascading}
J.~Chen, J.~S. Thorp, and I.~Dobson, ``Cascading dynamics and mitigation
  assessment in power system disturbances via a hidden failure model,'' {\em
  International Journal of Electrical Power and Energy Systems}, vol.~27,
  no.~4, pp.~318--326, 2005.

\bibitem{Boyd.Parikh.ea:FTML11}
S.~Boyd, N.~Parikh, E.~Chu, B.~Peleato, and J.~Eckstein, ``Distributed
  optimization and statistical learning via the alternating direction method of
  multipliers,'' {\em Foundations and Trends in Machine Learning}, vol.~3,
  no.~1, pp.~1--122, 2011.

\bibitem{Soltan.Loh.ea:CONES17}
S.~Soltan, A.~Loh, and G.~Zussman, ``Analyzing and quantifying the effect of
  $k$-line failures in power grids,'' {\em IEEE Transactions on Control of
  Network Systems}, 2017.
\newblock In press.

\bibitem{ba2018elements}
Q.~Ba, {\em Elements of Robustness and Optimal Control for Infrastructure
  Networks}.
\newblock PhD thesis, UNIVERSITY OF SOUTHERN CALIFORNIA, 2018.
\newblock Available at
  \href{http://www-bcf.usc.edu/~ksavla/papers/Qin-thesis-main.pdf}{\nolinkurl
  http://www-bcf.usc.edu/$\sim$ksavla/papers/Qin-thesis-main.pdf}.

\bibitem{de2008computational}
M.~De~Berg, O.~Cheong, M.~Van~Kreveld, and M.~Overmars, {\em Computational
  Geometry: Introduction}.
\newblock Springer, 2008.

\bibitem{edelsbrunner1986constructing}
H.~Edelsbrunner, J.~O'Rourke, and R.~Seidel, ``Constructing arrangements of
  lines and hyperplanes with applications,'' {\em SIAM Journal on Computing},
  vol.~15, no.~2, pp.~341--363, 1986.

\bibitem{burger1996polytope}
T.~Burger, P.~Gritzmann, and V.~Klee, ``Polytope projection and projection
  polytopes,'' {\em The American mathematical monthly}, vol.~103, no.~9,
  pp.~742--755, 1996.

\bibitem{zimmerman2011matpower}
R.~D. Zimmerman, C.~E. Murillo-S{\'a}nchez, and R.~J. Thomas, ``Matpower:
  Steady-state operations, planning, and analysis tools for power systems
  research and education,'' {\em Power Systems, IEEE Transactions on}, vol.~26,
  no.~1, pp.~12--19, 2011.

\bibitem{Rungger.Reissig.CDC17}
M.~Rungger and G.~Reissig, ``Arbitrarily precise abstractions for optimal
  controller synthesis,'' in {\em IEEE Conf. on Decision and Control},
  (Melbourne, Australia), 2017.

\bibitem{sanyal2010construction}
R.~Sanyal and G.~M. Ziegler, ``Construction and analysis of projected deformed
  products,'' {\em Discrete \& Computational Geometry}, vol.~43, no.~2,
  pp.~412--435, 2010.

\bibitem{Boyd.Vandenberghe:04}
S.~Boyd and L.~Vandenberghe, {\em Convex optimization}.
\newblock Cambridge university press, 2004.

\end{thebibliography}

\appendix[Proofs]

\subsection{Proof of Proposition~\ref{prop:para-net-opt-control}}
\label{proof:para-net-opt-control}
We consider the following cases: 
\begin{enumerate}[1)]
\item \( p_{1}^{0} \le \pnetval_{| \mathcal{E}|} \). Remark~\ref{rem:value-V} implies that \( ( \mathcal{E}, p^{0}) \in \mathcal{S} \). The optimal control for every \( N \ge 1 \) would be shedding no load. In this case,  by definition \( N_{j}(p^{0}) = 1 \) for all \( j\in [\cmax] \). Every \( N \ge 1 \) satisfies \(  N_{1}(p^{0})  \le  N < N_{0}(p^{0}) \). Hence, \( u^{*, t} = \min\{\pnetval_{\clink_{1}}, p_{1}^{0}\} [ 1 \; - 1] \) for \( t \ge 0 \). Since \( p^{0}_{1} \le R_{| \mathcal{E}|} \le R_{o_{1}} \), \( u^{*} \) is optimal. 
\item  \( \pnetval_{\clink_{k+1}} < p_{1}^{0} \le \pnetval_{ \clink_{k}} \) for some \( 0\le k \le \cmax -1 \). By definition, \( N_{1}(p^{0}) = N_{2}(p^{0}) = \ldots = N_{k}(p^{0}) \) and \( N_{k+1}(p^{0}) \ge \ldots \ge N_{\cmax-1}(p^{0}) \ge 2 > N_{\cmax}(p^{0}) =1 \). We have the following cases. 
\begin{enumerate}[ a)]
\item \( N=1 \). Since \( N_{\cmax}(p^{0}) \le 1 \le N_{\cmax-1}(p^{0}) \), \( u^{*} = \min\{\pnetval_{\clink_{\cmax}}, p_{1}^{0}\} [1 \; -1] =  \pnetval_{| \mathcal{E}|}[1 \; -1]  \) is optimal, where the second equality follows from \( p^{0}_{1} > \pnetval_{\clink_{k}+1} \ge \pnetval_{\clink_{\cmax}} = \pnetval_{| \mathcal{E}|}  \). 
\item \( N \ge N_{k} \). In this case, \( N_{1}(p^{0}) \le N < N_{0}(p^{0}) \). Since \( p^{0}_{1} \le \pnetval_{\clink_{k} } \le \pnetval_{\clink_{1}} \),  \( u^{t, *} = p^{0} \) for all \( t \). \( u^{*} \) is optimal if feasible. The latter is a straightforward result from the definition of \( N_{k}(p^{0}) \).
\item \( 2 \le N < N_{k} \). In this case, \( N_{j}(p^{0}) \le N < N_{j-1}(p^{0}) \) for some \( k +1 \le j \le \cmax-1 \). Therefore, \( p^{0}_{1} > \pnetval_{\clink_{k+1} } \ge \pnetval_{\clink_{j}} \). \( u^{*, t} = p^{0} \) for \( 0\le t < N_{j}(p^{0}) -2  \) and \( u^{t, *} = \pnetval_{\clink_{j}} [1\; -1] \) for \( N_{j}(p^{0})-2 \le t \le N-1 \). 
We first show that \( u^{*} \in \mathcal{D}( \mathcal{E}, p^{0}, N) \). Let \( ( \mathcal{E}^{0}, \ldots, \mathcal{E}^{N}) \) be the topology sequence under \( u^{*} \). It is straightforward that \( ( \mathcal{E}^{0}, \ldots, \mathcal{E}^{N_{j}(p^{0}) -2}) = ( \mathcal{E}_{\mathrm{un}}^{0}, \ldots, \mathcal{E}_{\mathrm{un}}^{N_{j}(p^{0}) -2}) \). By definition of \( N_{j}(p^{0}) \), \( [\clink_{j}] \subset \mathcal{E}_{\mathrm{un}}^{N_{j}(p^{0}) -2} = \mathcal{E}^{N_{j}(p^{0}) -2} \). In addition, Remark~\ref{rem:value-V} implies \( ( [\clink_{j}], \pnetval_{\clink_{j}} [1\; -1]) \in \mathcal{S} \) and, plus the definition of \( \clink_{j} \) futher implies that \( ([l], p^{0}) \not\in \mathcal{S} \) for all \( l> \clink_{j} \). Therefore, \( \mathcal{E}^{t} = [\clink_{j}] \) for all \( t \ge N_{j}(p^{0}) -1 \) and \( u^{*} \in \mathcal{D}( \mathcal{E}, p^{0}, N) \). 
We then show optimality of \( u^{*} \) through contradiction. Suppose there exists a control \( \tilde{u} \in \mathcal{D}( \mathcal{E}, p^{0}, N) \) such that \( \pnetval_{\clink_{j}} < \tilde{u}^{N-1}_{1} \le p^{0}_{1} \). Let \( (\tilde{ \mathcal{E}}^{0}, \ldots, \tilde{ \mathcal{E}}^{N-1}) \) be the topology sequence under control \( \tilde{u} \). Remark~\ref{rem:monotonicity} implies that \( \mathcal{E}^{N-1}_{\mathrm{un}} \subseteq \tilde{ \mathcal{E}}^{N-1} \). At the same time, since \( N < N_{j-1}(p^{0}) \), \( \mathcal{E}^{N-1}_{\mathrm{un}} \supseteq \mathcal{E}_{\mathrm{un}}^{N_{j-1}(p^{0}) -2} \supset [\clink_{j-1}]\). Note that the last inclusion is strict. Therefore, \( [\clink_{j-1}] \subset \tilde{ \mathcal{E}}^{N-1} \). Remark~\ref{rem:value-V}, combined with the definition of \( \clink_{j} \) and the assumption that \( \tilde{u}^{N-1}_{1} > \pnetval_{\clink_{j}} \), implies that \( (\tilde{ \mathcal{E}}^{N-1}, \tilde{u}^{N-1}) \not \in \mathcal{S} \). This contradicts with \( \tilde{u} \) being feasible. 
\end{enumerate}
\end{enumerate}
\subsection{Proof of Lemma~\ref{lem:meta-star-net}}
\label{proof:meta-star-net}
First of all, it is clear that \eqref{opt:lp-meta-star} is feasible only for \( z \in [\trans{\onebf}q^l, \trans{\onebf}q^u]  \). Secondly, since \( \starobjin_{j} \) is concave and \( X_{j} \) is convex, \eqref{opt:lp-meta-star} is convex and strong duality holds. Therefore, it is sufficient to consider the dual problem in order to solve \eqref{opt:lp-meta-star}. Let $\mu \in \real$ be the Lagrange multiplier associated with the constraint \( z = \trans{\onebf} x \). The dual function is then given by: 
\begin{align*}
\phi( \mu) &= - \mu z + \max_{q^{l} \le x \le q^{u}}  \sum_{j=1}^{n} \left( \roofun_{\tau_{j}}(x_{j}) + \mu x_{j} \right) \\
&=  - \mu z + \sum_{j= 1}^{n} \;\,  \max_{ q^{l}_{j} \le x_{j} \le q^{u}_{j}} \left( \roofun_{\tau_{j}}(x_{j}) + \mu x_{j} \right)\\
&= \left( \sum_{j=1}^{n} x^{*}_{j}( \mu) -z \right) \mu  + \sum_{j=1}^{n} \roofun_{j} ( x^{*}_{j}( \mu))  
\end{align*}
where \( x^{*}_{j}( \mu) \in \mc X_{j}^{*}(\mu) := \argmax_{ q^{l}_{v} \le x_{j} \le q^{u}_{v}} \left( \roofun_{\tau_{j}}(x_{j}) + \mu x_{j} \right) \), for all $j \in [n] $.

For \( \mu \in [-1,1] \), \( \roofun_{\tau_{j}}(x_{j}) + \mu x_{j} \) is piecewise affine: nondecreasing with slope \( (1+ \mu) \) over \( (-\infty, \tau_{j}^{1} ] \) and nonincreasing with slope \( (\mu-1) \) over \( (\tau_{j}^{1}, +\infty)  \). Therefore, \( \tau_{j}^{1} \in \mc X_{j}^{*}(\mu) \) for all \( \mu \in [-1, 1] \) and \( j\in [n] \). This implies that \( \phi( \mu) \) is affine over \( [-1, 1] \), for every $z$. In particular, \( \mathcal{X}_{j}^{*}(-1) = [q^{l}_{j}, \tau_{j}^{1}] \) and \( \mathcal{X}_{j}^{*}(1) = [\tau_{j}^{1}, q^{u}_{j}] \) for all \( j\in [n] \).
For \( \mu > 1 \), \( \roofun_{\tau_{j}}(x_{j}) + \mu x_{j} \) is strictly increasing, and hence \( \mc X_{j}^{*}(\mu) = \{q^{u}_{j}\} \). Since \( z   \le \trans{\onebf} q^{u} \), $\phi(\mu)$ is affine and non-decreasing over $(1,+\infty)$. Similarly, by considering  $\mu<-1$, we have \(  \mc X_{j}^{*}(\mu)  =\{ q^{l}_{j} \} \) and $\phi(\mu)$ is affine and non-increasing over $(-\infty,-1)$. Collecting these facts gives that, for every $z \in [\trans{\onebf}q^l, \trans{\onebf}q^u] $, the dual function \( \phi( \mu) \) is convex and piecewise affine with possible break points at \( \mu=-1 \) and \( \mu = 1 \). Therefore, \(\starobjout(z)  = \min_{\mu \in \real} \phi(\mu) = \min\{ \phi(-1), \phi(1) \} \). 

As \( \tau_{j}^{1} \in \mathcal{X}_{j}^{*}(\mu) \) for \( \mu \in [-1, 1] \), we have \( \phi(-1) = z - \trans{\onebf} \tau^{1} +  \trans{\onebf} \tau^{2}  \) and \(\phi(1) =  - z + \trans{\onebf} \tau^{1} +  \trans{\onebf} \tau^{2} \).  \( \phi(-1) \le \phi(1) \) for \( z \in [ \trans{\onebf} q^{l}, \trans{\onebf} \tau^{1}  ] \), and \( \phi(1) \le \phi(-1) \) for \( z \in [ \trans{\onebf} \tau^{1},  \trans{\onebf} q^{u} ] \). Therefore, 
\begin{equation*}
\starobjout(z) = \left\{
\begin{array}{ll}
z - \trans{\onebf} \tau^{2} +  \trans{\onebf} \tau^{2}  \quad & \trans{\onebf} q^{l} \le z \le \trans{\onebf} \tau^{1} \\
- z + \trans{\onebf} \tau^{1} +  \trans{\onebf} \tau^{2}  &  \trans{\onebf} \tau^{1} \le z \le \trans{\onebf} q^{u} \\
\end{array}\right.
\end{equation*}
Comparing with \eqref{eq:l-fun} establishes (i). (ii) follows from the fact that, for a given optimal dual solution \( \mu^{*} \), \( x_{j}^{*}(\mu^{*}) \in \mc X_{j}^{*}(\mu^{*}) \), \( j\in [n]\), is an optimal primal solution if and only if the constraint \( z = \trans{\onebf} x^{*} \) is satisfied.

\subsection{Proof of Proposition~\ref{prop:star-net-invariance}}
\label{proof:star-net-invariance-condition}
We first show that the conditions are sufficient. We start by showing that the output functions from \( \starop \{ (\starobjin_{j}, X_{j}) \}_{j \in [n]}  \) and \( \starop \{ (\starobjin_{j}, \conv{X_{j}} ) \}_{j \in [n]} \) have the same domain under the given conditions. Since \( \tau_{j}^{1} \in X_{j} \) for all \( j\in [n] \) and \( \sum_{j=1}^{n} X_{j} \cap (-\infty, \tau_{j}^{1}] \) is connected, \( \sum_{j=1}^{n} X_{j} \cap (-\infty, \tau_{j}^{1}] = [\sum_{j=1}^{n} \min X_{j}, \sum_{j=1}^{n} \tau_{j}^{1} ] \). Similarly, \( \sum_{j=1}^{n} X_{j} \cap [\tau_{j}^{1}, \infty) = [\sum_{j=1}^{n} \tau_{j}^{1}, \sum_{j=1}^{n} \max X_{j}] \). Therefore, \(  \sum_{j=1}^{n} X_{j} = [\sum_{j=1}^{n} \min X_{j}, \sum_{j=1}^{n} \max X_{j}] = \sum_{j=1}^{n} \conv{X_{j}} \). 

It is straightforward that \( \starop \{ (\starobjin_{j}, X_{j}) \}_{j \in [n]} \le \starop \{ (\starobjin_{j}, \conv{X_{j}} ) \}_{j \in [n]}  \). Hence it is sufficient to prove the other direction. By Lemma~\ref{lem:meta-star-net}, \( \starop \{ (\starobjin_{j}, \conv{X_{j}} ) \}_{j \in [n]} \) is a \( \roofun \) function with top point \( \tilde{\tau} := (\sum_{j=1}^{n} \tau_{j}^{1}, \sum_{j=1}^{n} \tau_{j}^{2}) \). We need to show that \( \left( \starop \{ (\starobjin_{j}, X_{j}) \}_{j \in [n]}\right) (z) \ge \roofun_{\tilde{\tau}}(z) \) for all \( z \in [\sum_{j=1}^{n} \min X_{j}, \sum_{j=1}^{n} \max X_{j}]  \). 

We first consider \( z\in  [ \sum_{j=1}^{n} \min X_{j}, \sum_{j=1}^{n} \tau_{j}^{1}]  \). Let \( X_{j} \) contain \( m_{j} \) pieces of intervals. Plus that \( \tau_{j}^{1} \in X_{j} \) separates an interval of \( X_{j} \) into two pieces, with possibly one of the two containing the single point \( \tau_{j} \), there are \( m_{j}+1 \) pieces in total. Without loss of generality, label the \( m_{j}+1 \) intervals \( X_{j}^{k} \) in increasing order, that is, \( X_{j}^{k} \) such that \( \max X_{j}^{k-1} \le \min X_{j}^{k} \) for all \( k\in [m_{j}+1] \). In particular, let \( l_{j} \in [m_{j}+1] \) be such that \(  \tau_{j}^{1} = \max X_{j}^{ l_{j} } = \min X_{j}^{l_{j}+1} \) for all \( j\in [n] \). Furthermore, let \( \tilde{\comset}:= \Pi_{j=1}^{n} [m_{j}+1] \) and \( \tilde{\comset}_{\le \sigma'}: = \setdef{\sigma \in \tilde{\comset}}{ \sigma \le \sigma' } \) for all \( \sigma' \in \tilde{\comset} \). The notation \(  \tilde{\comset}_{> \sigma'} \) has similar meaning. With these notations, \( X_{j}\cap (-\infty, \tau_{j}^{1}]  = \cup_{\sigma_{j} \le l_{j}} X_{j}^{\sigma_{j}} \). Then \(  \cup_{ \sigma \in \tilde{\comset}_{\le l} } \sum_{j=1}^{n} X_{j}^{\sigma_{j}} = \sum_{j=1}^{n} \cup_{\sigma_{j}\le l_{j}} X_{j}^{\sigma_{j}} = \sum_{j=1}^{n} X_{j}\cap (-\infty, \tau_{j}^{1}] =  [ \sum_{j=1}^{n} \min X_{j}, \sum_{j=1}^{n} \tau_{j}^{1}] \). It is then sufficient to prove that for all \( \sigma \in \tilde{\comset}_{\le l} \), \( \starop( \sigma )  = \roofun_{\tilde{\tau}} \) over the domain  \( \sum_{j=1}^{n} X_{j}^{\sigma_{j}} \). 

Pick arbitrary \( \sigma \in \tilde{ \comset}_{\le l} \), without loss of generality, let \([q_{j}^{l}, q_{j}^{u}] :=  X_{j}^{\sigma_{j}} \). Restricted in this domain, \( \starobjin_{j} \) is an affine function with slope \( 1 \). Lemma~\ref{lem:lx-extension} implies that \( ( q_{j}^{u}, q_{j}^{u} - \tau_{j}^{1} +\tau_{j}^{2} ) \) can be treated as the top point of \( \starobjin_{j} \) over domain \( [q_{j}^{l}, q_{j}^{u}] \). Lemma~\ref{lem:meta-star-net} then implies that \( \starop (\sigma)  \) is an affine function with top point \( ( \sum_{j=1}^{n} q_{j}^{u},  \sum_{j=1}^{n} q_{j}^{u} - \sum_{j=1}^{n} \tau_{j}^{1} +  \sum_{j=1}^{n} \tau_{j}^{2} ) \). It is straightforward to check that this top point lies on \( \roofun_{\tilde{\tau}} \). As a result, \( \starop (\sigma) = \roofun_{\tilde{\tau}} \) when evaluated in the domain \( \sum_{j=1}^{n} X_{j}^{\sigma_{j}} \). By symmetry, the same result can be shown for \( z \in  [ \sum_{j=1}^{n} \tau_{j}^{1}, \sum_{j=1}^{n} \max X_{j}] \), by considering \( \sigma \in \tilde{\comset}_{>l} \).

We now show that the conditions are necessary. Lemma~\ref{lem:meta-star-net} implies that the solution to \eqref{opt:lp-meta-star} corresponding to \( \left( \starop \{ (\starobjin_{j}, \conv{X_{j}} ) \}_{j \in [n]} \right)(z) \) with \( z = \sum_{j=1}^{n} \tau_{j}^{1} \) is unique and \( x_{j}^{*} = \tau_{j}^{1} \) for all \( j\in [n] \). In order for \( \starop \{ (\starobjin_{j}, X_{j}) \}_{j \in [n]} = \roofun_{\tilde{\tau}} \) to be true, it must be that \( \tau_{j}^{1} \in X_{j} \) for all \( j\in [n] \). 

We next prove connectedness of  \( \sum_{j=1}^{n} X_{j} \cap (-\infty, \tau_{j}^{1}] \) and \( \sum_{j=1}^{n} X_{j} \cap [\tau_{j}^{1}, \infty) \), given that \( \starop \{ (\starobjin_{j}, X_{j}) \}_{j \in [n]} = \roofun_{\tilde{\tau}} \). If we could show \( \max_{ \sigma \in \tilde{\comset}_{\le l} \cup \tilde{\comset}_{>l} } \starop(\sigma) =\roofun_{\tilde{\tau}} \), then the connectedness would follow from the fact that two equal functions must have identical domains. In order to show this equality, it is sufficient to show that for all \( \sigma \in \tilde{\comset}\setminus (\tilde{\comset}_{\le l} \cup \tilde{\comset}_{>l}) \), either  \( \starop(\sigma) <  \roofun_{\tilde{\tau}} \) or \(  \starop (\sigma)= \starop (\sigma')  \) for some \(  \sigma' \in \tilde{\comset}_{\le l} \cup \tilde{\comset}_{>l} \). Pick arbitrary \( \sigma \in \tilde{\comset}\setminus (\tilde{\comset}_{\le l} \cup \tilde{\comset}_{>l}) \), then both \( \{j: \sigma_{j} \le l_{j} \} \) and \( \{j: \sigma_{j} \ge l_{j}+1 \} \) are nonempty. Similar to the arguments in previous paragraphs for proving the sufficient condition,  let \([q_{j}^{l}, q_{j}^{u}] :=  X_{j}^{\sigma_{j}} \), then Lemma~\ref{lem:lx-extension} and Lemma~\ref{lem:meta-star-net} imply that \( \starop(\sigma) \) is a \( \roofun \) function with top point \( (\hat{\tau}_{1}, \hat{\tau}_{2}) \), where \(\hat{\tau}_{1} := \sum_{\{j: \sigma_{j} \le l_{j} \}} q_{j}^{u} + \sum_{\{j: \sigma_{j} \ge l_{j}+1 \}} q_{j}^{l}\) and \( \hat{\tau}_{2}:= \sum_{\{j: \sigma_{j} \le l_{j} \}} (q_{j}^{u} - \tau_{j}^{1}) + \sum_{\{j: \sigma_{j} \ge l_{j}+1 \}} ( \tau_{j}^{1} - q_{j}^{l} ) + \sum_{j=1}^{n} \tau_{j}^{2} \) and domain \( [\sum_{j=1}^{n} q_{j}^{l}, \sum_{j=1}^{n} q_{j}^{u}] \). Simple algebra gives: 
\begin{align*}
& \roofun_{\tilde{\tau}} (\hat{\tau}_{1})-\hat{\tau}_{2} =  \\
& \, \, \, 2 \min \left\{ \sum_{\{j: \sigma_{j} \le l_{j} \}} ( \tau_{j}^{1} -  q_{j}^{u}),  \sum_{\{j: \sigma_{j} \ge l_{j}+1 \}} ( q_{j}^{l} -\tau_{j}^{1} ) \right\} \ge 0 
\end{align*}
If both  \( \{j: \sigma_{j} \le l_{j}-1 \} \) and \( \{j: \sigma_{j} \ge l_{j}+2 \} \) are nonempty, then the above inequality is strict, implying \( (\hat{\tau}_{1}, \hat{\tau}_{2}) \) is below \( \roofun_{\tilde{\tau}} \) and hence \( \starop(\sigma) <  \roofun_{\tilde{\tau}} \). Otherwise, if either \( \{j: \sigma_{j} \le l_{j} \} =  \{j: \sigma_{j} = l_{j} \}  \) only, or \( \{j: \sigma_{j} \ge l_{j}+1 \} =  \{j: \sigma_{j} = l_{j} + 1 \}  \) only, or both, then the above inequality becomes equality and \( (\hat{\tau}_{1}, \hat{\tau}_{2}) \) lies on \( \roofun_{\tilde{\tau}} \). We consider the first case and show that \( \left( \starop(\sigma)\right) (z) < \roofun_{\tilde{\tau}} (z)\) for \( z\in [\sum_{j=1}^{n} q_{j}^{l}, \hat{\tau}_{1}) \) and \( \left( \starop(\sigma)\right) (z) =   \starop(\sigma')(z) \) for all \( z\in [\hat{\tau}_{1}, \sum_{j=1}^{n} q_{j}^{u}] \) and for some \( \sigma' \in  \tilde{\comset}_{\le l} \cup \tilde{\comset}_{>l} \). Similar results are true for the other two cases. 

In the case considered, \( q_{k}^{u} = \tau_{k}^{1} \) for all \( k \in \{j: \sigma_{j} = l_{j} \} \), \( q_{k}^{l} \ge \tau_{k}^{1} \) for all \( k \in \{j: \sigma_{j} \ge l_{j}+1 \} \), and the later inequality is strict for at least one \( k \). Then \( \hat{\tau}_{1} > \sum_{j=1}^{n} \tau_{j}^{1} = \tilde{\tau}_{1} \). Combining with Lemma~\ref{lem:meta-star-net}, and recalling that $\hat{\tau}$ lies on $\roofun_{\tilde{\tau}}$, we get that \( \left( \starop(\sigma)\right) (z) < \roofun_{\tilde{\tau}} (z) \) for \( z \in  [\sum_{j=1}^{n} q_{j}^{l}, \hat{\tau}_{1}) \) and \( \left( \starop(\sigma)\right) (z) = \roofun_{\tilde{\tau}} (z) \) for \( z\in [\hat{\tau}_{1}, \sum_{j=1}^{n} q_{j}^{u}] \). It remains to show the existence of \( \sigma' \in  \tilde{\comset}_{\le l} \cup \tilde{\comset}_{>l} \) such that \( \left( \starop(\sigma)\right) (z) = \left(\starop(\sigma')\right)(z) \) for \( z\in [\hat{\tau}_{1}, \sum_{j=1}^{n} q_{j}^{u}] \). Such a \( \sigma' \)  is constructed as follows: \( \sigma'_{k} = l_{j}+1 \) for all \( k\in   \{j: \sigma_{j} = l_{j} \} \) and \( \sigma'_{k} = \sigma_{k} \) for all \( k \in  \{j: \sigma_{j} \ge l_{j}+1 \} \). Such a \( \sigma' \in \tilde{\comset}_{>l} \) ensures that \( \starop(\sigma') \) and \( \starop(\sigma) \) have identical top points.

\subsection{Proof of Proposition~\ref{prop:top-facet-boundary-ridge}}
\label{proof:top-facet-boundary-ridge}
The following lemma, which is also referred to as the geometric version of Farkas Lemma \cite{sanyal2010construction}, is used below in the proof of Proposition~\ref{prop:top-facet-boundary-ridge}.

\begin{lemma}[{\cite[Sect. 1.4]{ziegler2012lectures}}]
\label{lem:farkas-lemma}
Consider a full dimensional polytope \( P \subset \real^{d} \) with facets \( \Gamma^{d-1}_{i} \) whose defining hyperplanes are \( H_{i}:=\setdef{x\in \real^{d}}{\trans{(\pi^{i})}x = \pi_{0}^{i}} \), and let \( \Gamma \subset \cap_{i \in S } \Gamma^{d-1}_{i} \) be a nonempty face of \( P \) for some index set \( S \). Then, \( \Gamma = \argmax_{x\in P} \trans{\mu} x \) for some \( \mu \in \real^{d} \) if and only if there exists \( \theta \in \spreal^{S} \) such that \( \mu  = \sum_{i \in S} \theta_{i} \pi^{i} \). 
\end{lemma}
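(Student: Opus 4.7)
The plan is to establish the two directions of the equivalence via standard LP duality machinery applied to the linear program $\max_{x \in P}\trans{\mu} x$, where $P = \setdef{x\in \real^d}{\trans{(\pi^i)} x \le \pi_0^i,\; i \in [n]}$ and $n$ is the total number of facets of $P$. Throughout, I will use the standard polytope fact that a face $\Gamma$ of $P$ equals $\bigcap_{i \in S} \Gamma_i^{d-1}$ when $S$ is the full index set of facets containing $\Gamma$; the hypothesis $\Gamma \subseteq \bigcap_{i \in S}\Gamma_i^{d-1}$ is then read as equality.

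For the easier direction ($\Leftarrow$), suppose $\mu = \sum_{i \in S} \theta_i \pi^i$ with $\theta_i > 0$ for each $i \in S$. For any $x \in P$, combining the identity $\trans{\mu} x = \sum_{i \in S}\theta_i \trans{(\pi^i)} x$ with $\trans{(\pi^i)} x \le \pi_0^i$ and $\theta_i > 0$ yields $\trans{\mu} x \le \sum_{i \in S}\theta_i \pi_0^i$, with equality if and only if $\trans{(\pi^i)} x = \pi_0^i$ for every $i \in S$, i.e., if and only if $x \in \bigcap_{i\in S}\Gamma_i^{d-1} = \Gamma$. Hence $\Gamma = \argmax_{x \in P}\trans{\mu} x$.

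For the harder direction ($\Rightarrow$), suppose $\Gamma = \argmax_{x \in P}\trans{\mu} x$ and write the LP dual as $\min_{y \ge 0}\sum_{i \in [n]} y_i \pi_0^i$ subject to $\sum_{i \in [n]} y_i \pi^i = \mu$. Since $P$ is nonempty and bounded, strong duality furnishes a dual optimum $y^* \ge 0$ with $\mu = \sum_{i \in [n]} y_i^* \pi^i$. Complementary slackness combined with $\Gamma = \argmax_{x \in P}\trans{\mu} x$ forces $y_i^* = 0$ whenever $\Gamma \not\subseteq \Gamma_i^{d-1}$, so $\mathrm{supp}(y^*) \subseteq S$, giving $\mu = \sum_{i \in S} y_i^* \pi^i$ with $y_i^* \ge 0$. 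The nontrivial step is to upgrade this to $\theta > 0$ componentwise on all of $S$. For this I will invoke the \emph{strict complementary slackness} theorem for LPs: there exist a primal optimum $\hat{x}$ and a dual optimum $\hat{y}$ such that the tight constraints at $\hat{x}$ coincide exactly with the indices $i$ for which $\hat{y}_i > 0$. Picking $\hat{x}$ in the relative interior of $\Gamma$, the tight constraints at $\hat{x}$ are precisely those indexed by $S$, so $\hat{y}_i > 0$ for every $i \in S$ and $\hat{y}_i = 0$ otherwise; setting $\theta := \hat{y}|_S$ completes the proof.

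The main obstacle is the passage from $y^* \ge 0$ to strict positivity on the whole of $S$, since the generic complementary slackness condition only delivers nonnegativity with support in $S$. If one wishes to avoid importing strict complementary slackness as a black box, one can instead argue as follows: for each fixed $i \in S$, establish via Farkas' lemma (applied to the incompatible system claiming that no dual optimum satisfies $y_i > 0$) the existence of a dual optimum $y^{(i)}$ with $y^{(i)}_i > 0$, and then average the $y^{(i)}$ over $i \in S$ to obtain a single dual optimum that is strictly positive on all of $S$. Either route yields the required $\theta \in \spreal^S$.
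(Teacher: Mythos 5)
Your proof is correct, but be aware that the paper does not actually prove this lemma: the bracketed citation to Ziegler in the lemma header is the paper's entire justification, and the appendix only \emph{uses} the lemma (in the proof of Proposition~\ref{prop:top-facet-boundary-ridge}). So your LP-duality argument is a genuine, self-contained derivation of a fact the authors import as known; mathematically it is the standard normal-cone characterization of exposed faces ($\mu$ exposes $\Gamma$ iff $\mu$ lies in the relative interior of the cone generated by the outer normals of the facets containing $\Gamma$), which is also how the textbook source treats it. Two substantive remarks. First, you are right to read the hypothesis as $\Gamma = \cap_{i\in S}\Gamma_i^{d-1}$ with $S$ the \emph{full} set of facets containing $\Gamma$; as literally written ($\Gamma \subset \cap_{i\in S}\Gamma_i^{d-1}$ for \emph{some} $S$) both directions of the equivalence fail --- e.g.\ take $\Gamma$ a vertex of a square and $S$ the index of a single edge through it --- so making that reading explicit is not optional, and it is consistent with how the lemma is applied in the paper. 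Second, the only delicate step is upgrading the dual optimum from $y^*\ge 0$ with support in $S$ to strict positivity on all of $S$. Your appeal to strict complementarity works, although the phrase ``picking $\hat{x}$ in the relative interior of $\Gamma$'' is slightly loose: the Goldman--Tucker theorem hands you a strictly complementary pair rather than letting you choose its primal member. The argument survives because every primal optimum has all constraints in $S$ tight (so the strictly complementary $\hat{y}$ has support containing $S$), while complementary slackness against a relative-interior point of $\Gamma$ forces the support to be contained in $S$. Your fallback --- for each $i\in S$ exhibit a dual optimum with $y_i>0$ via Farkas and then take a positive convex combination over the convex set of dual optima --- is exactly the standard proof of strict complementarity and closes this step cleanly, so no gap remains.
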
 

Let \( \tau(x):= \argmax_{\hat{x} \in P} \setdef{\hat{x}_{k}}{\hat{x}_{j} = x_{j}, \forall j\neq k}\) for all \( x \in P \), and let \( P^{t}: = \cup_{x\in P} \{ \tau(x) \} \). It follows from the definition that every \( x \in P \) is shaded by \( \tau(x) \in P^{t} \) in direction \( \unit_{k} \). We now show that \( P^{t} \) is included in \( \unit_{k}- \)top facets of \( P \). For each point \( \tilde{x} \in P^{t} \), \( P \cap (\{ \tilde{x} \} + (0, +\infty) \unit_{k} ) = \emptyset\), where the set \( \{\tilde{x}\} + (0, +\infty) \unit_{k} \) is the half open ray starting from \( \tilde{x} \) and pointing in the \( \unit_{k} \) direction. The separating hyperplane theorem, e.g., see \cite{Boyd.Vandenberghe:04}, then implies that there exist \( \mu \in \real^{d} \) and \( \mu_{0} \in \real \) such that  \( \trans{\mu} x \le \mu_{0} \ \) for all \( x \in P \) and \( \trans{\mu} x > \mu_{0} \ \) for all \( x \in \{ \tilde{x} \} + (0, +\infty)\unit_{k} \). The latter implies \( \mu_{k} = \trans{\mu} \unit_{k} > 0  \). In order to show that \( \tilde{x} \) is included in a \( \unit_{k}-\)top facet of \( P \), we consider two cases. First, if \( \tilde{x} \) is in the interior of some facet \( \Gamma^{d-1}_{i} \), then, since \( H_{i} \) is the unique separating hyperplane of \( \Gamma^{d-1}_{i} \), we get \( \pi^{i} = \mu \) and \( \pi^{i}_{k} > 0  \). Hence \( \Gamma^{d-1} \) is a \( \unit_{k}-\)top facet. Second, if \( \tilde{x} \) is in a lower dimensional face, then consider a non-empty set \( \mc J \) such that \( \tilde{x} \in \cap_{i\in \mc J} \Gamma_{i} \). By contradiction, from Lemma~\ref{lem:farkas-lemma}, there exists a \( j\in \mc J \) such that \( \pi^{j}_{k} >0 \). This implies that \( \tilde{x} \) belongs to the facet \( \Gamma_{i}^{d-1} \), a \( \unit_{k}-\)top facet. This establishes (i). 

For (ii), pick an arbitrary point \( \hat{x} \) from an arbitrary \( \unit_{k} \)-top facet \( \Gamma^{d-1}_{i} \). By definition of a facet, \( \hat{x} \in \argmax_{x\in P} \trans{(\pi^{i})} x \). This implies that \( \tau(\hat{x}) = \hat{x} \), i.e., \( \hat{x} \in P^{t} \). It is then straightforward to see that \( \Gamma^{d-1}_{i} \) remains to be a facet in \( \swp_{k}(P) \), since the hyperplane \( H_{i} \) contains \( \swp_{k}(P) \) on one side. 

We now prove (iii). Let \( \Gamma^{d-2} \) be an arbitrary ridge of \( P \) and \( \Gamma^{d-1}_{i} \) and \( \Gamma^{d-1}_{j} \) be the two incident facets of \( \Gamma^{d-2} \). We first prove the conditions to be necessary by considering the following cases: 
\begin{enumerate}[(a)]
\item if \(\Gamma^{d-2} \subset \bar{H} \), then \( \swp_{k}(\Gamma^{d-2}) = \Gamma^{d-2} \) is of dimension \( (d-2) \) and it is trivial that \(  \swp_{k}(\Gamma^{d-2})  \) is not a facet of \( \swp_{k}(P) \);
\item if neither  \( \Gamma^{d-1}_{i} \) nor \( \Gamma^{d-1}_{j} \) is a \( \unit_{k} \)-top facet, then all the points in \(  \Gamma^{d-2} \) are shaded by some other points in \( P \) and hence \( \swp_{k}(\Gamma^{d-2} ) \) is not a facet of \( \swp_{k}(P) \); 
\item if both \( \Gamma^{d-1}_{i} \) and \( \Gamma^{d-1}_{j} \) are \( \unit_{k} \)-top facets, then (ii) of the proposition implies that both \( \Gamma^{d-1}_{i} \) and \( \Gamma^{d-1}_{j} \) remains to be facets of \( \swp_{k}(P) \). Since a ridge is contained in exactly two facets\cite[Chapter 3]{grunbaum1967convex},  \( \swp_{k}( \Gamma^{d-2} ) \) can not be a facet of \( \swp_{k}(P) \). 
\end{enumerate}

We now prove the sufficient condition. Let \( \Gamma^{d-2} \not \in \bar{H} \) be a boundary ridge and, without loss of generality, let the incident facets be such that \( \Gamma^{d-1}_{i} \) is a \( \unit_{k} \)-top facet and \( \Gamma^{d-1}_{j} \) is either a \( \unit_{k} \)-facet or a \( \unit_{k} \)-bottom facet, i.e., \( \pi^{i}_{k} >0 \) and \( \pi^{j}_{k} \le 0 \). Proposition~\ref{prop:sweep-case-1} implies that \( \swp_{k}( \Gamma^{d-2}) \) is of dimension \( (d-1) \). Therefore, \(  \swp_{k}( \Gamma^{d-2})  \) is a facet if it is a face of \(  \swp_{k}(P)  \). We now construct the defining hyperplane \( H \) of \(  \swp_{k}( \Gamma^{d-2}) \). Let \( \theta := \pi_{k}^{i} /(\pi^{i}_{k} - \pi_{k}^{j}) \in (0, 1] \),  \( \pi := (1-\theta) \pi^{i} + \theta \pi^{j} \) and \( \pi_{0} := (1-\theta) \pi_{0}^{i} + \theta \pi_{0}^{j} \). It is straightforward that \( \pi_{k} = 0 \). Define \( H:=\setdef{x\in \real^{d}}{\trans{\pi} x = \pi_{0}} \). It is sufficient to show that  \( \swp_{k}(\Gamma^{d-2})  \subset H \) and \( H \) contains \( \swp_{k}(P) \) on one side. Since \( \pi_{k} = 0 \), \( \swp_{k}(\Gamma^{d-2})  \subset H \) if and only if \( \Gamma^{d-2}  \subset H \). The latter is straightforward from the definition of \( H \). In order to show that \( H \) contains \( \swp_{k}(P) \) on one side, we separately consider the two possibilities for $\pi^j_k$. If \( \pi^{j}_{k} < 0 \), and hence \( \theta \in (0,1) \), then Lemma~\ref{lem:farkas-lemma} implies the claim. If \(\pi^{j}_{k} = 0  \), and hence \( \theta = 1 \), then \( H = H_{j}\), the defining hyperplane of \( \Gamma^{d-1}_{j} \). This implies the claim trivially. 

\end{document}